\newtheorem{theorem}{Theorem}[section]
\newtheorem{lemma}[theorem]{Lemma}
\newtheorem{corollary}[theorem]{Corollary}
\newtheorem{question}{Open Question}
\theoremstyle{definition}
\newtheorem{definition}[theorem]{Definition}
\DeclareMathOperator{\Div}{Div}
\DeclareMathOperator{\Opt}{Opt}
\DeclareMathOperator{\Spl}{Spl}
\begin{document}

\title{Optimal sets of questions
for Twenty Questions}
\author[1]{Yuval Filmus}
\author[1]{Idan Mehalel}
\affil[1]{The Henry and Marilyn Taub Faculty of Computer Science, Technion, Israel. The research was funded by ISF grant 1337/16.}

\maketitle
\global\long\def\ser#1{\boldsymbol{#1}}%

\begin{abstract}
In the distributional Twenty Questions game, Bob chooses a number $x$ from $1$ to $n$ according to a distribution $\mu$, and Alice (who knows $\mu$) attempts to identify $x$ using Yes/No questions, which Bob answers truthfully. Her goal is to minimize the expected number of questions.

The optimal strategy for the Twenty Questions game corresponds to a Huffman code for $\mu$, yet this strategy could potentially uses  all $2^n$ possible questions. Dagan et al.\ constructed a set of $1.25^{n+o(n)}$ questions which suffice to construct an optimal strategy for \emph{all} $\mu$, and showed that this number is optimal (up to sub-exponential factors) for infinitely many~$n$.

We determine the optimal size of such a set of questions for \emph{all} $n$ (up to sub-exponential factors), answering an open question of Dagan et al. In addition, we generalize the results of Dagan et al.\ to the $d$-ary setting, obtaining similar results with $1.25$ replaced by $1 + (d-1)/d^{d/(d-1)}$.
\end{abstract}

\section{Introduction}

The distributional Twenty Questions game is a cooperative game between two players, Alice and Bob. Bob picks an object in $X_n = \{x_1,\ldots,x_n\}$ according to a distribution $\mu$ known to both players, and Alice determines the object by asking Yes/No questions, to which Bob answers truthfully. Alice's goal is to minimize the expected number of questions she asks.

This game is often related to information theory (see \cite{CT}, for example) as an interpretation of Shannon's entropy~\cite{shannon1948mathematical}. Moreover, it is the prototypical example of a combinatorial search game~\cite{Katona,AhlswedeWegener,ACD}. It is also a model of \emph{combinatorial group testing}~\cite{dorfman1943detection}, and can be interpreted as a learning task in the \emph{interactive learning} model~\cite{cohn1994improving}.

In this game, Alice's strategy corresponds to a prefix code: the code of $x \in X_n$ is the list of Bob's answers to all questions asked by Alice. Alice's optimal strategy therefore corresponds to a \emph{minimum redundancy code} for $\mu$. Huffman~\cite{Huffman} (and, independently, Zimmerman~\cite{Zimmerman}) showed how to construct such a strategy efficiently. However, the strategy produced by Huffman's algorithm could use arbitrary questions. We ask:

\begin{mdframed}
What is the smallest set of questions that allows Alice to construct an optimal strategy for every distribution $\mu$?
\end{mdframed}

We call such a set of questions an \emph{optimal} set of questions, and denote the minimum cardinality of an optimal set of questions for $X_n$ by $q(n)$. We stress that the same set of questions must be used for \emph{all} $\mu$.

Surprisingly, it is possible to improve on the trivial set of all $2^n$ questions \emph{exponentially}: Dagan et al.~\cite{dagan2017twenty,DFGM2019} showed that $q(n) \leq 1.25^{n+o(n)}$, and furthermore, $q(n) \geq 1.25^{n-o(n)}$ for infinitely many $n$ (specifically, $n$ of the form $1.25 \cdot 2^k$). Thus $1.25$ is the smallest constant $C$ such that $q(n) \leq C^{n+o(n)}$ for all $n$.

The fact that the lower bound $q(n) \geq 1.25^{n-o(n)}$ holds only for some $n$ suggests that the upper bound $1.25^{n+o(n)}$ can be improved for other $n$. This is what our first main result shows:

\begin{theorem} \label{thm:main-G}
There exists a function $G\colon [1,2) \to \mathbb{R}$ such that for $\beta \in [1,2)$,
\[
 q(n) = 2^{-G(\beta) n \pm o(n)} \text{ for all $n$ of the form } n = \beta \cdot 2^k.
\]
Furthermore, if $\beta \neq 1.25$ then
\[
 2^{-G(\beta)} < 1.25.
\]
\end{theorem}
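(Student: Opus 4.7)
My plan is to extend the recursive analysis of Dagan et al.\ by tracking the dependence of $q(n)$ on the parameter $\beta$. For $n = \beta 2^k$, any optimal Huffman tree has a depth profile constrained by Kraft's equality, and the first-partition sizes lie in a set $S(\beta) \subseteq \{1, \ldots, n-1\}$ determined by which depth profiles are feasible. An optimal question set must serve, for every distribution $\mu$, as a hitting set for the family of first partitions arising in some optimal Huffman tree for $\mu$, and then recursively handle each sub-tree. Combining an explicit covering construction (upper bound) with an adversarial distribution argument (lower bound) should yield, up to subexponential factors, the recursion
\[
\log_2 q(n) \approx \max_{n_1 \in S(\beta)} \Bigl[\log_2 \binom{n}{n_1} + \log_2 q(n_1) + \log_2 q(n - n_1)\Bigr].
\]

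\textbf{Definition of $G$.} Writing $\alpha = n_1/n$ and $n_i = \beta_i 2^{k_i}$ with $\beta_i \in [1, 2)$, the above asymptotically becomes a functional equation
\[
-G(\beta) = \sup_{\alpha \in A(\beta)} \Bigl[H(\alpha) - \alpha\, G\bigl(\beta_1(\alpha, \beta)\bigr) - (1-\alpha)\, G\bigl(\beta_2(\alpha, \beta)\bigr)\Bigr],
\]
where $A(\beta) \subseteq (0, 1/2]$ collects achievable first-partition fractions and $H$ is the binary entropy function. The Dagan et al.\ upper bound gives $-G(\beta) \leq \log_2 1.25$ uniformly, attained at $\beta = 1.25$ via the extremal choice $\alpha = 2/5$ (which produces sub-$\beta$-values of $1$ and $3/2$, neither equal to $1.25$, so the recursion is well-defined and closes self-consistently at $\log_2 1.25$). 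Iterating this functional equation, starting from the Dagan et al.\ bound as the initial upper estimate, defines $G$ on $[1, 2)$ and proves the equation $q(n) = 2^{-G(\beta) n \pm o(n)}$.

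\textbf{Strict inequality and main obstacle.} To establish $-G(\beta) < \log_2 1.25$ for $\beta \neq 1.25$, the guiding intuition is that the attainment of $\log_2 1.25$ at $\beta = 1.25$ depends on a delicate ``resonance'' between the entropy term $H(2/5)$ and the sub-problem values $G(1), G(3/2)$. For $\beta \neq 1.25$, either the maximizer $\alpha^*(\beta)$ shifts away from $2/5$, strictly reducing the entropy contribution, or the induced sub-$\beta$'s fall outside $\{1, 3/2\}$ so that their $G$-values combine to a strictly smaller total. The principal obstacle is the self-referential nature of the recursion: concluding strict inequality at $\beta$ requires strict inequalities at the sub-$\beta$'s, which in turn reference further sub-$\beta$'s. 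I expect to handle this via a quantitative continuity / perturbation analysis of $G$ around $\beta = 1.25$, combined with an iteration argument showing that the orbit of any $\beta \neq 1.25$ under the recursion opens up a definite gap below $\log_2 1.25$. An alternative (and possibly cleaner) route is to exhibit, for each $\beta \neq 1.25$, an explicit improved construction of an optimal question set strictly beating $1.25^n$, leveraging the additional combinatorial constraints on the Huffman tree shape away from the resonant $\beta = 1.25$.
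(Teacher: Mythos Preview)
Your framework rests on a recursion that does not hold. The reduction of Dagan et al.\ (Lemma~\ref{lem:dyadic-hitter-intro} here) shows that $\mathcal{Q}$ is optimal precisely when it hits $\Spl(\mu)$ for every full-support dyadic $\mu$; once a single splitting question exists for each $\mu$, the subtrees are handled automatically by the \emph{same} $\mathcal{Q}$, so there is no additive cost over levels. Consequently $q(n)$ is governed by the single-level quantity $\rho_{\min}(n)=\min_\mu\rho(\Spl(\mu))$, not by a recursion of the form $\log_2 q(n)\approx\max_{n_1}[\log_2\binom{n}{n_1}+\log_2 q(n_1)+\log_2 q(n-n_1)]$. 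A sanity check already refutes your functional equation: at $\beta=1.25$, $\alpha=2/5$, your right-hand side is $H(2/5)+\tfrac25(-G(1))+\tfrac35(-G(1.5))\ge H(2/5)\approx 0.971$, whereas the left-hand side is $-G(1.25)=\log_2 1.25\approx 0.322$. The binomial coefficient in the actual analysis appears in the \emph{denominator} of the relative density $\rho_d(\Spl(\mu))=|\Spl(\mu)_d|/\binom{n}{d}$, not as a multiplicative factor in $q(n)$.

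The paper's $G(\beta)$ is defined directly (no fixed point) as an inf--max: the inner player chooses a ``split profile'' $\boldsymbol\alpha$ specifying what fraction of each probability class to put in a set of measure $1/2$, the outer player chooses the probability profile $(\boldsymbol c,b)$ of the dyadic distribution, and the objective is $\sum_i h(\alpha_i)c_i-h(\sum_i\alpha_i c_i)$, which is exactly the exponent of the relative density of $\Spl(\mu)$ at the corresponding slice size. The bulk of the work is showing that the integrality constraints (the $c_i n,\alpha_i c_i n\in\mathbb N$ conditions tying the optimization to a specific $k$) wash out as $k\to\infty$. For the strict inequality when $\beta\neq 1.25$, the paper gives a concrete perturbation: either $\sum_i c_i$ is bounded away from $2/5$, in which case the trivial choice $\alpha_i\equiv 1/2$ already beats $-\log_2 1.25$; or $\sum_i c_i\approx 2/5$, in which case one perturbs $\alpha$ on a carefully chosen index set $S$ (keeping the probability-mass constraint) to gain a term proportional to $|\beta-1.25|$. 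Your ``resonance'' and orbit-iteration sketch does not connect to the actual structure and, since the recursion it would iterate is invalid, cannot be repaired without first replacing the framework.
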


This confirms a conjecture of Dagan et al. The exact formula for $G(\beta)$ appears in Theorem~\ref{theo:main}.

\paragraph{Optimal sets of questions and fibers} The proof of Theorem~\ref{thm:main-G} relies on a result of Dagan et al.

\begin{lemma} \label{lem:dyadic-hitter-intro}
A set of questions $\mathcal{Q}$ is optimal if for every dyadic distribution $\mu$ on $X_n$ (that is, a distribution in which the probability of each elements is $2^{-k}$ for some $k \in \mathbb{N}_+$), there is a set $Q \in \mathcal{Q}$ of probability exactly $1/2$.

Equivalently, $\mathcal{Q}$ is optimal if it hits (intersects with) $\Spl(\mu)$ for all dyadic $\mu$, where
\[
 \Spl(\mu) = \{ A : \mu(A) = 1/2 \}.
\]
\end{lemma}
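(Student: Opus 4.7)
The plan is to show that if $\mathcal{Q}$ hits $\Spl(\mu^*)$ for every dyadic $\mu^*$ on $X_n$, then Alice can construct an optimal strategy using only questions from $\mathcal{Q}$, for every distribution $\mu$ on $X_n$.

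Fix $\mu$ and run Huffman's algorithm to obtain optimal codeword lengths $\ell_1,\ldots,\ell_n$ satisfying the Kraft equality $\sum_i 2^{-\ell_i}=1$ (the Huffman tree is full binary). Define the \emph{dyadic companion} $\mu^*(x_i) := 2^{-\ell_i}$, a dyadic distribution on $X_n$. Any binary tree with leaf $x_i$ at depth $\ell_i$ is optimal for $\mu$ (its expected depth equals the Huffman cost $\sum_i \mu(x_i)\ell_i$); and by Kraft equality inside every subtree of a full binary tree, such trees are exactly the $\mu^*$-balanced ones, meaning every internal node's two children carry equal $\mu^*$-mass. It therefore suffices to exhibit a single $\mu^*$-balanced tree whose internal-node partitions are realized by questions in $\mathcal{Q}$.

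I construct this tree top-down. At the root, the hypothesis applied to $\mu^*$ yields some $Q_0 \in \mathcal{Q}$ with $\mu^*(Q_0) = 1/2$, the first split. At an internal node $v$ of depth $d \geq 1$ with active set $S \subseteq X_n$ of $\mu^*$-mass $2^{-d}$ and $|S| \geq 2$, I need $Q \in \mathcal{Q}$ with $\mu^*(Q \cap S) = 2^{-(d+1)}$. Since the hypothesis only furnishes half-splitters of dyadic distributions on all of $X_n$, I lift the sub-problem by designing an auxiliary dyadic $\tilde\mu$ on $X_n$. First pick any valid half-partition $S = A \sqcup B$ with $\mu^*(A) = \mu^*(B) = 2^{-(d+1)}$ (which exists by a Huffman-style pairing of the dyadic values in $\mu^*|_S$), then set $\tilde\mu|_A$ to be $\mu^*|_A$ rescaled to total mass $1/2$, and pad $X_n \setminus A$ with a rigid dyadic sequence of total mass $1/2$ (for instance $\tfrac{1}{4},\tfrac{1}{8},\ldots,\tfrac{1}{2^m},\tfrac{1}{2^m}$) chosen so that the only subset of $X_n \setminus A$ with $\tilde\mu$-mass $1/2$ is the whole set.

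The heart of the proof is verifying that this padding forces every $\tilde\mu$-half-splitter $Q$ returned by $\mathcal{Q}$ to satisfy $Q \cap S \in \{A, B\}$. Rigidity on $X_n \setminus A$ restricts the possibilities for $Q \cap (X_n \setminus A)$, leaving the cases $Q = A$, $Q = X_n \setminus A$, and a small family of ``mixed'' half-splitters whose existence depends on whether partial sums of the padding complement subset-sums inside $A$. Ruling out these mixed cases — either by choosing the half-partition $A \sqcup B$ among the available ones so that no such complementation occurs, or by applying the hypothesis to a sequence of auxiliary distributions and intersecting the returned splitters — is the main technical obstacle. Once it is resolved, the hypothesis yields a $Q \in \mathcal{Q}$ inducing a valid $\mu^*$-balanced split at $v$, and iterating down the tree produces a $\mu^*$-balanced tree built entirely from $\mathcal{Q}$, proving $\mathcal{Q}$ is optimal.
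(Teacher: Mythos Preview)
Your reduction via Huffman to the dyadic companion $\mu^*$ is correct and matches the paper's first step. The genuine gap is in the recursive step, and you rightly flag it as unresolved.

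The difficulty is structural: by pre-committing to a particular half-split $A\sqcup B$ of the active set $S$ and then trying to force the hypothesis to return exactly $A$ (or $X_n\setminus A$), you impose a constraint that is both unnecessary and hard to enforce. Your padding gives $X_n\setminus A$ a geometric sequence $\tfrac14,\tfrac18,\ldots$ of total mass $\tfrac12$, but whenever $|A|\ge 2$ the elements of $A$ have rescaled mass at most $\tfrac14$, so elements of $A$ and of the padding are of comparable size and mixed half-splitters exist in abundance (e.g.\ swap a mass-$\tfrac14$ element of $A$ with the mass-$\tfrac14$ padding element). Neither proposed fix is convincing: there is no reason a ``non-complementing'' choice of $A$ should always exist, and intersecting splitters returned by several auxiliary calls does not yield a single question in~$\mathcal{Q}$.

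The paper (following Dagan et al.) avoids the problem by \emph{not} pre-choosing $A$. At a node with active set $S$, place the \emph{entire} rescaled $\mu^*|_S$ on $S$, so the zero-probability elements are exactly $X_n\setminus S$. Then take an element $x_0\in S$ of minimal positive mass $2^{-k}$ and redistribute that mass as a geometric tail $2^{-(k+1)},2^{-(k+2)},\ldots,2^{-(k+t)},2^{-(k+t)}$ across $x_0$ together with the $t$ elements of $X_n\setminus S$. In the resulting full-support dyadic distribution, these $t+1$ elements form the tail (every other element has mass $\ge 2^{-k}$), so by the tail lemma any half-splitter $Q$ either contains all of them or none of them. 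In either case $Q$ has the same mass under the original sub-problem distribution (the two distributions agree outside the tail and assign the tail the same total mass $2^{-k}$), hence $\mu^*(Q\cap S)=2^{-(d+1)}$ for whatever $Q$ the hypothesis returns. No case analysis on ``mixed'' splitters is needed, because every splitter of the auxiliary distribution is automatically a valid split of the sub-problem.
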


To prove this result, Dagan et al.\ first show that a set of questions is optimal iff there is an optimal strategy for every dyadic distribution. Roughly speaking, given an arbitrary distribution $\nu$, we construct a Huffman code $C$ for $\nu$ and convert it to a distribution $\mu(x_i) = 2^{-|C(x_i)|}$. An optimal strategy for $\mu$ turns out to be an optimal strategy for $\nu$.

Dagan et al.\ then show that an optimal strategy for a dyadic distribution must split the probability evenly at every step. Distributions encountered in this way could have elements whose probability is zero, but by choosing an element of minimal positive probability $2^{-k}$ and ``splitting'' it into elements of probability $2^{-k+1},2^{-k+2},\ldots,2^{-k+t},2^{-k+t}$ (where $t$ is the number of zero probability elements), we can reduce to the case of distributions of full support.

It is easy to see that $\Spl(\mu)$ is an antichain: if $A \subsetneq B$ then $\mu(A) < \mu(B)$. It is less obvious that $\Spl(\mu)$ is a maximal antichain, as observed by Dagan et al. Indeed, given any set $A$, if $\mu(A) > 1/2$ and we arrange the elements of $A$ in nonincreasing order of probability, then some prefix has probability exactly $1/2$, and so some subset of $A$ belongs to $\Spl(\mu)$; and if $\mu(A) < 1/2$, then we can apply the same argument on $\overline{A}$ to find a superset of $A$ in $\Spl(\mu)$.

These observations connect optimal sets of questions with another combinatorial object: \emph{fibers}, defined by Lonc and Rival~\cite{LoncRival}. Given any poset, a \emph{fiber} is a hitting set for the family of all maximal antichains. Any fiber of the lattice $2^{X_n}$ is thus an optimal set of questions.

Duffus, Sands and Winkler~\cite{DuffusSandsWinkler} showed that every fiber of $2^{X_n}$ contains $\Omega(1.25^n)$ elements. To show this, they considered maximal antichains of the following form, for a parameter $a$:
\[
 S(B) = \{ A, \overline{A} : A \subset B, |A| = a \}, \text{ where } |B| = 2a-1.
\]
It is easy to check that these are maximal antichains. There are $\binom{n}{2a-1}$ such maximal antichains, and each set in a fiber can handle at most $\binom{n-a}{a-1}$ of them, giving a lower bound of
\[
 \frac{\binom{n}{2a-1}}{\binom{n-a}{a-1}} 
 \approx 2^{n(h(2\theta) - (1-\theta) h(\theta/(1-\theta))}, \quad \theta = \frac{a}{n}.
\]
Here $h(p) = p\log_2(1/p) + (1-p) \log_2(1/(1-p))$ is the binary entropy. This expression is maximized at $\theta = 1/5$, giving a lower bound of roughly $1.25^n$.

Dagan et al.\ used the exact same argument to prove their lower bound of $1.25^{n-o(n)}$ for optimal sets of questions. To this end, they needed to realize $S(B)$ as a set of the form $\Spl(\mu)$. The idea is to give all elements of $B$ a probability of $1/2a$, and the remaining elements (the ``tail'') probabilities $1/4a,1/8a,\ldots,1/2^{n-2a+1}a,1/2^{n-2a+1}a$; any set of measure $1/2$ must either contain the tail and $a-1$ elements of $B$, or must consist of $a$ elements of $B$.

For this construction to work, we need $1/2a$ to be a negative power of $2$, that is, we need $a$ to be a power of $2$. Since $a = n/5$, this works as long as $n$ is of the form $1.25 \cdot 2^k$, or at least close to such a number. When $n$ is close to $\beta \cdot 2^k$ for $\beta \in [1,2)$ other than $1.25$, the sets $S(B)$ are not realizable in the form $\Spl(\mu)$.

In order to prove the lower bound part of Theorem~\ref{thm:main-G}, we identify, for each value of $\beta$, a more general collection of hard-to-hit maximal antichains which are realizable as $\Spl(\mu)$ for $n = \beta \cdot 2^k$. Instead of having a single set $B$ of elements of equal probability together with a ``tail'', we allow several such sets $B_1,B_2,B_3,\ldots$, where the probability of elements in $B_t$ is $1/2^ta$. This results in an expression for $G(\beta)$ which describes a game between Asker and Builder, in which Builder picks the proportions of the sets $B_t$, and Asker picks the type of questions which are best suited to handle the sets $S(B_1,B_2,B_3,\ldots)$; we leave the details to Section~\ref{sec:The-exact-formula}.

Dagan et al.\ showed that the bound $1.25^n$ is tight, by constructing an optimal set of questions of this size for every~$n$. The bound is not tight for fibers of $2^{X_n}$: \L{}uczak improved the lower bound $\Omega(1.25^n)$ to $\Omega(2^{n/3}) = \Omega(1.2599^n)$, as described in Duffus and Sands~\cite{DuffusSands}. Lonc and Rival \cite{LoncRival} conjectured that the optimal size of a fiber of $2^{X_n}$ is $\Theta(2^{n/2})$, realized by the collection of all sets comparable with $\{x_1,\ldots,x_{\lfloor n/2 \rfloor}\}$. This is also the best known explicit construction of an optimal set of questions.

Instead of designing an optimal set of questions explicitly, Dagan et al.\ show that if we pick roughly $1.25^n$ random questions of each size, then with high probability we get an optimal set of questions. A similar approach works for proving the upper bound in Theorem~\ref{thm:main-G}, though the calculations are more intricate.

Much of the difficulty in proving Theorem~\ref{thm:main-G} comes from the fact that $G(\beta)$ describes an idealized game between Asker and Builder which only makes sense in the limit $k \to \infty$, which we need to connect with the corresponding game for a fixed value of $k$. This difficulty doesn't come up in Dagan et al.\ since in their case, the optimal construction only has a single set $B_1$.

\paragraph{Computing $G(\beta)$} 
Unfortunately, the formula for $G(\beta)$ involves non-convex optimization over infinitely many variables, and for this reason we are unable to compute $G(\beta)$ beyond the already known value $G(1.25) = -\log_2 1.25$.

Nevertheless, our techniques allow us to improve the bound on $\max_\beta G(\beta)$ given by Dagan et al. Stated in terms of $q(n)$, Dagan et al.\ showed that $q(n) \geq 1.232^{n-o(n)}$ for every $n$, and we improve this to $q(n) \geq 1.236^{n-o(n)}$ for every $n$.

\paragraph{$d$-ary questions} Our second main result concerns \emph{$d$-ary questions}. What happens if Alice asks Bob $d$-ary questions, that is, questions with $d$ possible answers?
The optimal strategy in this case corresponds to a $d$-ary Huffman code, a setting already considered in Huffman's original paper.

We are able to generalize the main result of Dagan et al.\ to this setting:

\begin{theorem} \label{thm:main-d}
Let $q^{(d)}(n)$ be the minimum cardinality of an optimal set of $d$-ary questions.

For all $n$,
\[
 q^{(d)}(n) \leq \left(1 + \frac{d-1}{d^{\frac{d}{d-1}}}\right)^{n+o(n)}.
\]
Furthermore, this inequality is tight (up to sub-exponential factors) for infinitely many values of $n$.
\end{theorem}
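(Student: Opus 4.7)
The plan is to generalize the three-step argument of Dagan et al.\ from $d=2$ to arbitrary $d\ge 2$.

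\emph{Step 1 ($d$-adic hitting lemma).} I would first establish the $d$-ary analog of Lemma~\ref{lem:dyadic-hitter-intro}: a family $\mathcal Q$ of ordered $d$-partitions of $X_n$ is optimal iff, for every \emph{$d$-adic} distribution $\mu$ (probabilities of the form $d^{-k}$), some $(Q_1,\ldots,Q_d)\in\mathcal Q$ satisfies $\mu(Q_i)=1/d$ for every $i$. The proof is a direct transcription of the binary argument: given a target distribution $\nu$, construct a $d$-ary Huffman code $C$, define $\mu(x)=d^{-|C(x)|}$, and observe that any optimal $d$-ary strategy for $\mu$ must equalize the root split into $d$ parts of mass $1/d$. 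Zero-probability atoms are absorbed by splitting a minimum-probability atom of $\mu$ into $d$ equal pieces at a deeper level of the code tree, in analogy with the binary splitting trick.

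\emph{Step 2 (upper bound).} For each ``type'' vector $\ser{s}=(s_1,\ldots,s_d)$ with $\sum_i s_i=n$, I would sample $N$ uniformly random ordered $d$-partitions of $X_n$ with part sizes $\ser{s}$. For a fixed $d$-adic $\mu$, Stirling's approximation gives a multinomial estimate for the probability that a random type-$\ser{s}$ partition equalizes $\mu$. Reducing (as in Dagan et al.) to a polynomial-size family of ``canonical'' $d$-adic distributions and taking a union bound, I choose $N=\bigl(1+(d-1)/d^{d/(d-1)}\bigr)^{n+o(n)}$ per type. The constant arises as the value of a $d$-ary entropy optimization generalizing the binary $\theta=1/5$ computation; for $d=2$ the maximizer yields $5/4$, and for general $d$ the first-order conditions give a closed-form maximizer evaluating to the claimed constant.

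\emph{Step 3 (lower bound).} For infinitely many $n$, I would construct hard $d$-adic distributions $\mu$ realizing generalized $S(B)$-antichains. Pick $a=d^k$ so that $1/(da)$ is a negative power of $d$, set $|B|=da-(d-1)$, and assign probability $1/(da)$ to each element of $B$. Distribute the remaining mass $(d-1)/(da)$ over $n-|B|$ tail elements using a geometric sequence of $d$-adic probabilities $1/(d^2 a),1/(d^3 a),\ldots$ with a $d$-fold repetition at the bottom to make the total $d$-adic. One verifies that any partition into $d$ parts of mass $1/d$ must place the entire tail in a single part (completed by $a-(d-1)$ elements of $B$), with the other $d-1$ parts being disjoint $a$-subsets of $B$. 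Counting the $\binom{n}{|B|}$-many such antichains and bounding the number any single $d$-partition can hit produces a lower bound whose optimization over $a/n$ matches the upper bound, with the constraint $a=d^k$ forcing $n$ to lie in a specific arithmetic progression.

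\emph{Main obstacle.} The most delicate point is the optimization in Step~2: for $d=2$ each question is a single subset, so only a single size parameter needs to be tuned, but for $d\ge 3$ one must optimize over the entire type simplex and verify that the extremum is attained at the symmetric configuration predicted by the formula $1+(d-1)/d^{d/(d-1)}$, rather than at some asymmetric type. A secondary subtlety is verifying in Step~3 that $\Spl(\mu)$ (in its $d$-ary incarnation) equals exactly the $S(B,\text{tail})$ antichain, since for $d\ge 3$ there are more alternative ways for $d$ parts to simultaneously achieve mass $1/d$; this requires the geometric tail to decrease fast enough to rule out competing configurations.
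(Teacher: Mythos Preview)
Your three-step outline matches the paper's structure (reduction to $d$-adic hitters, random-partition upper bound via a density argument, explicit hard distributions for the lower bound), but there is a concrete error in Step~3 and a misidentification of the main technical issue in Step~2.

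\textbf{Step 3 parameters are wrong for $d\ge 3$.} You take $|B|=da-(d-1)$ heavy atoms of mass $1/(da)$ and a tail of total mass $(d-1)/(da)$. For $d=2$ this collapses to the correct binary construction, but for $d\ge 3$ the tail mass $(d-1)d^{-(k+1)}$ is not a power of $d$, and no geometric arrangement of $d$-adic atoms below level $k+1$ can prevent the tail from being cut into pieces of mass $d^{-(k+1)}$ (for example, any $d$ atoms of mass $d^{-(k+2)}$ already reassemble one heavy unit). Once the tail can be split, $\Div(\mu)$ contains many additional partitions and $\rho(\Div(\mu))$ is no longer small enough to match the upper bound. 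The paper's construction avoids this by taking $d\cdot d^{a-1}-1$ heavy atoms and a tail of mass exactly $d^{-a}$, i.e.\ \emph{one} heavy unit rather than $d-1$; then the generalized-tail lemma forces the entire tail into a single part, and every dividing partition has the rigid form ``$d-1$ parts of $d^{a-1}$ heavy atoms each, one part of $d^{a-1}-1$ heavy atoms plus the tail.'' So the ``secondary subtlety'' you flag is not a subtlety but an actual failure of your chosen parameters; the fix is to make the tail worth a single heavy atom.

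\textbf{The Step 2 obstacle is not a simplex optimization.} The paper does not optimize over the full type simplex and then argue the optimum is symmetric. Instead, for an \emph{arbitrary} $d$-adic $\mu$ it builds a block decomposition $X_n=\bigcup_{i\le\gamma}(D_i\cup E_i)$ with $\gamma=O(\log n)$ blocks, each $D_i$ consisting of equal-probability atoms with $|D_i|\equiv -r_i\pmod d$ and $\mu(E_i)=r_i p_i$; this yields, for every $\mu$, a single explicit type $\bar k$ with $k_1=\cdots=k_{d-1}=\sum_i c_i$ and large $\rho_{\bar k}(\Div(\mu))$. The only optimization left is over the scalar $\beta=\frac{1}{n}\sum_i c_i$, and minimizing the resulting one-variable function $f_d(\beta)$ gives the constant $1+(d-1)/d^{d/(d-1)}$. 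The real work is constructing this decomposition and controlling $\gamma$, not locating a symmetric extremum.
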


This result holds not only for constant $d$, but also uniformly for all $d = o(n/\log^2 n)$. The techniques closely follow the ideas of Dagan et al., as outlined above in the case $d = 2$.

Theorem~\ref{thm:main-d} shows that the magic constant $1.25$ appearing in~\cite{DFGM2019} generalizes to
\[
 1 + \frac{d-1}{d^{\frac{d}{d-1}}} = 2 - \Theta\left(\frac{\log d}{d}\right).
\]
for arbitrary $d$.

We leave a combination of Theorem~\ref{thm:main-G} and Theorem~\ref{thm:main-d} to future work.

\paragraph{Paper organization} After brief preliminaries (Section~\ref{sec:Preliminaries}), we prove the first half of Theorem~\ref{thm:main-G} in Sections~\ref{sec:The-exact-formula}--\ref{sec:Approximating}. We prove the ``furthermore'' part of Theorem~\ref{thm:main-G} and reprove some results of~\cite{DFGM2019} using our framework in Section~\ref{sec:Applications}, in which we also derive the improved lower bound $1.236^{n-o(n)}$ (Theorem~\ref{thm:1.236}). Our results on $d$-ary questions appear in Section~\ref{sec:d-ary}. Section~\ref{sec:Open-questions} closes the paper with some open questions.

\section{Preliminaries \label{sec:Preliminaries}}

Given a distribution $\pi$ over $X_n=\mleft\{x_1,\dots,x_n\mright\}$, denote $\pi_{i}:=\pi\mleft(x_i\mright)$. For any set $S\subseteq X_n$
we denote the sum $\sum_{i\in S}\pi_{i}$ with $\pi\mleft(S\mright)$.

\subsection{Decision trees}

We represent a strategy to reveal a secret element $x\in X_n$
as a \emph{decision tree}. A decision tree is a binary tree $T=\mleft(V,\,E\mright)$
such that every internal node $v\in V$ is labeled with a query $Q\subseteq X_n$,
every leaf $l\in V$ is labeled with an object $x_i\in X_n$,
and every edge $e\in E$ is labeled with ``Yes'' or ``No''. Moreover,
if $v$ is an internal node that is labeled with the query $Q$, and
$x$ is the secret element, then $v$ has two outgoing edges: one
is labeled with ``Yes'' (representing the decision ``$x\in Q$'')
and the other with ``No'' (representing the decision ``$x\notin Q$'').\par
Given a set of queries $\mathcal{Q}\subseteq2^{X_n}$ (which
is called the set of \emph{allowed questions}), we say that $T$ is
a \emph{decision tree using $\mathcal{Q}$ }if for any internal node
$v\in V$, the query $Q$ that $v$ is labeled with satisfies $Q\in\mathcal{Q}$. \par
Given a distribution $\pi$ over $X_n$, we say that a
decision tree $T$ is \emph{valid }for $\pi$ if for any object $x\in\operatorname{supp}\mleft(\pi\mright)$
there is a path in $T$ that begins in the root and ends in a leaf
that is labeled with $x$. The decision trees we will consider are
only those in which each object $x\in X_n$ labels at most
one leaf.\par 
If there is a path from the root to $x\in X_n$, we say
that the number of its edges is the depth of $x$, and denote this
number with $T\mleft(x\mright)$. If $T$ is valid for $\pi$, the \emph{cost} of $T$ on $\pi$ is $\sum_{i=1}^{n}\pi_{i}T\mleft(x_{i}\mright)$.


\subsection{Optimal sets of questions}

For a distribution $\pi$, let $\Opt(\pi)$ be the minimum cost of a decision tree for $\pi$.

A set $\mathcal{Q} \subseteq 2^{X_n}$ of queries is \emph{optimal} if for every distribution $\pi$, there is a decision tree using $\mathcal{Q}$ whose cost is $\Opt(\pi)$. 

We denote the minimum cardinality of an optimal set of queries over $X_n$ by $q(n)$. A major goal of this paper is to estimate $q(n)$ for all values of $n$. We do this using the concept of \emph{maximum relative density}, borrowed from~\cite{DFGM2019}.

\begin{definition}
A distribution $\mu$ is \emph{dyadic} if for all $i$, $\mu_i = 2^{-d}$ for some $d \in \mathbb{N}$ or $\mu_i=0$.

If $\mu$ is a non-constant dyadic distribution, then a set $A \subseteq X_n$ \emph{splits} $\mu$ if $\mu(A) = 1/2$. We denote the collection of all sets splitting $\mu$ by $\Spl(\mu)$, and the collection of all sets of size $i$ splitting $\mu$ by $\Spl(\mu)_i$.

The \emph{$i$'th relative density} of $\Spl(\mu)$ is
\[
 \rho_i(\Spl(\mu)) = \frac{|\Spl(\mu)_i|}{\binom{n}{i}}.
\]
The \emph{maximum relative density} of $\Spl(\mu)$ is
\[
 \rho(\Spl(\mu)) = \max_{i \in \{1,\ldots,n-1\}} \rho_i(\Spl(\mu)).
\]
\end{definition}

The following result reduces the calculation of $q(n)$, up to polynomial factors, to the calculation of the quantity
\[
 \rho_{\min}(n) = \min_{\mu} \rho(\Spl(\mu)),
\]
where the minimum is taken over all non-constant and full-support dyadic distributions.

\begin{theorem}[{\cite[Theorem 3.3.1 and Lemma 3.2.6]{DFGM2019}}]\label{theo:reduc}
It holds that
\[
\frac{1}{\rho_{\min}\mleft(n\mright)}\leq q\mleft(n\mright)\leq n^{2}\log n\frac{1}{\rho_{\min}\mleft(n\mright)}.
\]
\end{theorem}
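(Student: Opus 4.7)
The plan is to prove the two inequalities separately. For the lower bound $q(n) \geq 1/\rho_{\min}(n)$, I would fix a non-constant full-support dyadic distribution $\mu^*$ achieving $\rho(\Spl(\mu^*)) = \rho_{\min}(n)$, and let $\mathcal{Q}$ be any optimal set of questions. For every permutation $\sigma$ of $X_n$, the distribution $\mu^* \circ \sigma$ is again non-constant, full-support, and dyadic, so Lemma~\ref{lem:dyadic-hitter-intro} forces some $Q \in \mathcal{Q}$ with $\sigma(Q) \in \Spl(\mu^*)$. I would then double-count, in the style of Duffus--Sands--Winkler, pairs $(Q, \sigma) \in \mathcal{Q} \times S_n$ satisfying $\sigma(Q) \in \Spl(\mu^*)$: the count is at least $n!$ (at least one per $\sigma$), and at most $\sum_{Q \in \mathcal{Q}} |\Spl(\mu^*)_{|Q|}| \cdot |Q|!\,(n-|Q|)! = \sum_{Q \in \mathcal{Q}} \rho_{|Q|}(\Spl(\mu^*)) \cdot n! \leq |\mathcal{Q}| \cdot \rho_{\min}(n) \cdot n!$. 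Comparing the two sides yields $|\mathcal{Q}| \geq 1/\rho_{\min}(n)$.

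For the upper bound, I would use the probabilistic method. Set $m = \lceil C\, n \log n / \rho_{\min}(n) \rceil$ for an absolute constant $C$, and for each size $i \in \{1, \ldots, n-1\}$ independently sample $m$ subsets of $X_n$ of size $i$ uniformly at random; let $\mathcal{Q}$ be their union, so $|\mathcal{Q}| \leq (n-1)m \leq n^2 \log n / \rho_{\min}(n)$ up to the choice of constant. By Lemma~\ref{lem:dyadic-hitter-intro} combined with the reduction from arbitrary dyadic distributions to full-support ones sketched in the introduction, it suffices to show that with positive probability $\mathcal{Q}$ hits $\Spl(\mu)$ for every non-constant full-support dyadic $\mu$. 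For each such $\mu$, the definition of $\rho_{\min}(n)$ supplies an index $i^*$ with $\rho_{i^*}(\Spl(\mu)) \geq \rho_{\min}(n)$, so the probability that all $m$ random sets of size $i^*$ miss $\Spl(\mu)$ is at most $(1 - \rho_{\min}(n))^m \leq e^{-m\rho_{\min}(n)}$. A full-support dyadic $\mu$ is determined by its depth sequence $(d_1, \ldots, d_n)$, and a short induction using the Kraft identity shows each $d_i \leq n-1$, so there are at most $n^n$ such distributions. A union bound then gives total failure probability at most $n^n e^{-m\rho_{\min}(n)} < 1$ once $C$ is large enough, producing an optimal $\mathcal{Q}$ of the required size.

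The main point requiring genuine care is the appeal to the introductory reduction from general dyadic distributions to full-support ones, since the definition of $\rho_{\min}(n)$ restricts the minimum to the full-support case and without the reduction the union bound would have no handle on distributions with zero-probability elements. Everything else is routine: the lower bound is standard symmetrization over $S_n$ plus a double count, and the upper bound is textbook probabilistic method, where the $n^2 \log n$ overhead decomposes cleanly into one factor of $n$ for the $n-1$ candidate set sizes and another factor of $n \log n$ absorbing the union bound over at most $n^n$ dyadic distributions.
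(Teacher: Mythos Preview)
The paper does not give its own proof of this theorem---it is cited from \cite{DFGM2019}---but it does prove the $d$-ary analogue (Theorem~\ref{theo:reduction}), and your argument matches that proof essentially line for line: a random-permutation argument (which you phrase equivalently as a double count over $\mathcal{Q}\times S_n$) for the lower bound, and the probabilistic method with a union bound over the at most $n^n$ relevant distributions (cf.\ Lemma~\ref{lem:dadic_size}) for the upper bound.

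One small correction: in your lower bound you invoke Lemma~\ref{lem:dyadic-hitter-intro}, but as stated that lemma only gives the \emph{sufficient} direction (hitting every $\Spl(\mu)$ implies $\mathcal{Q}$ is optimal). What your double count actually needs is the converse---that an optimal $\mathcal{Q}$ must contain some $Q$ with $\sigma(Q)\in\Spl(\mu^*)$ for each $\sigma$. That converse is discussed informally in the paragraph following Lemma~\ref{lem:dyadic-hitter-intro} (``an optimal strategy for a dyadic distribution must split the probability evenly at every step'') and is proved in full, in the $d$-ary setting, as the ``only if'' direction of Lemma~\ref{lem:dadic_hit_reduc}. With the citation adjusted, your argument is correct and coincides with the paper's approach.
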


Hence, from now on, we will consider the problem of finding a formula
for $\rho_{\min}\mleft(n\mright)$ (up to sub-exponential factors) instead of a formula for $q\mleft(n\mright)$. 

\subsection{Tails}

The \emph{tail} of a dyadic distribution $\mu$ over $X_n$
is the largest set $T\subseteq X_n$ which satisfies, for some
$a\in\mathbb{N}$: 
\begin{itemize}
\item The probabilities of the elements in $T$ are $2^{-a-1},2^{-a-2},\dots,2^{-a-\mleft(\mleft|T\mright|-1\mright)},2^{-a-\mleft(\mleft|T\mright|-1\mright)}$.
\item Any element $x_i\in X_n\backslash T$ has probability at least
$2^{-a}$. 
\end{itemize}

If $\mu$ is a dyadic distribution, then \cite[Lemma 3.2.5]{DFGM2019} shows that each set in $\Spl(\mu)$ either contains $T$ or is disjoint from $T$.

\subsection{Entropy}

The \emph{entropy} of a distribution $\pi$ is
\[
H\mleft(\pi\mright):=\sum_{i=1}^n \pi_i \log \frac{1}{\pi_i}.
\]
For $n=2$, define the \emph{binary entropy function}:
\[
h\mleft(\pi_1\mright):= \pi_1 \log \frac{1}{\pi_1}+\mleft(1-\pi_1\mright) \log \frac{1}{1-\pi_1}.
\]

We prove some simple bounds on the binary entropy function, which
will be useful in some of the proofs in this work. If $y-\epsilon\leq x\leq y+\epsilon$
for some $x,y,\epsilon$, denote $x=y\pm\epsilon$. 
\begin{lemma}
\label{lem:entropy_bounds}For any $0\leq x,\epsilon_{1},\epsilon_{2}\leq1$
such that $\epsilon_{2}\leq x\leq1-\epsilon_{1}$ it holds that 
\begin{gather*}
h\mleft(x+\epsilon_{1}\mright)=h\mleft(x\mright)\pm h\mleft(\epsilon_{1}\mright),\\
h\mleft(x-\epsilon_{2}\mright)=h\mleft(x\mright)\pm h\mleft(\epsilon_{2}\mright).
\end{gather*}
\end{lemma}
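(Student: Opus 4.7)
The plan is to deduce both inequalities from a single underlying fact: the binary entropy $h$ is \emph{subadditive} on $[0,1]$, i.e.\ $h(a+b) \le h(a) + h(b)$ whenever $a, b \ge 0$ and $a+b \le 1$. Once subadditivity is in hand, the lemma follows by applying it in two directions, exploiting the symmetry $h(y) = h(1-y)$ to convert upper bounds into lower bounds.

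To establish subadditivity I would use that $h$ is concave on $[0,1]$ with $h(0) = 0$. For any $t \in [0,1]$ and $y \in [0,1]$, concavity gives
\[
 h(ty) = h(ty + (1-t)\cdot 0) \ge t\, h(y) + (1-t)\, h(0) = t\, h(y).
\]
Applying this with $y = a+b$ and $t = a/(a+b)$, and then with $t = b/(a+b)$, and summing, yields $h(a) + h(b) \ge h(a+b)$, as desired.

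Now I would derive the four required bounds. The inequality $h(x + \epsilon_1) \le h(x) + h(\epsilon_1)$ is immediate from subadditivity (with $a=x$, $b=\epsilon_1$; the sum is at most $1$ since $x \le 1 - \epsilon_1$). For the reverse direction, I would write $h(x) = h(1-x) = h\bigl((1-x-\epsilon_1) + \epsilon_1\bigr) \le h(1-x-\epsilon_1) + h(\epsilon_1) = h(x+\epsilon_1) + h(\epsilon_1)$, which gives $h(x+\epsilon_1) \ge h(x) - h(\epsilon_1)$; the subadditivity step is legal because $1-x-\epsilon_1 \ge 0$ and the sum $1-x \le 1$. The bounds on $h(x-\epsilon_2)$ are entirely symmetric: the lower bound follows from $h(x) = h((x-\epsilon_2) + \epsilon_2) \le h(x-\epsilon_2) + h(\epsilon_2)$, and the upper bound from $h(x-\epsilon_2) = h(1 - x + \epsilon_2) = h((1-x) + \epsilon_2) \le h(1-x) + h(\epsilon_2) = h(x) + h(\epsilon_2)$, where $\epsilon_2 \le x$ ensures $(1-x) + \epsilon_2 \le 1$.

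There is no real obstacle here; the only thing to be careful about is tracking the hypotheses $\epsilon_2 \le x \le 1-\epsilon_1$ to ensure that each invocation of subadditivity respects the domain $[0,1]$, and noting that the four resulting inequalities are precisely the statements $h(x \pm \epsilon) = h(x) \pm h(\epsilon)$ in the $\pm$ notation defined just before the lemma.
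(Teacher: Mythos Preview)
Your proof is correct and follows essentially the same approach as the paper: establish subadditivity of $h$ from concavity together with $h(0)=0$, then combine subadditivity with the symmetry $h(y)=h(1-y)$ to get both directions. The paper's version is terser (it asserts subadditivity as known and leaves the second inequality as ``similarly''), while you spell out the subadditivity argument and all four bounds explicitly, but the underlying idea is the same.
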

\begin{proof}
Let $0\leq x,\epsilon_{1}\leq1$ such that $x+\epsilon_{1}\leq1$.
Since $h$ is concave and $h\mleft(0\mright)=0$ it is known that $h$
is sub-additive, that is $h\mleft(x+\epsilon_{1}\mright)\leq h\mleft(x\mright)+h\mleft(\epsilon_{1}\mright)$.
Using that inequality together with the fact that $h$ is symmetric,
we have: 
\[
h\mleft(x\mright)=h\mleft(1-\mleft(x+\epsilon_{1}\mright)+\epsilon_{1}\mright)\leq h\mleft(1-\mleft(x+\epsilon_{1}\mright)\mright)+h\mleft(\epsilon_{1}\mright)=h\mleft(x+\epsilon_{1}\mright)+h\mleft(\epsilon_{1}\mright).
\]
 The second inequality is proved similarly.
\end{proof}

Throughout this paper, we also use the fact that $h\mleft(x\mright)$ is increasing for $x<1/2$.

\section{An exact (and almost exact) formula for $\rho_{\min}\mleft(n\mright)$
\label{sec:The-exact-formula}}

Our goal in this section and the next is to find a formula for $\rho_{\min}\mleft(n\mright)$
up to sub-exponential factors. We use the expression 
\[
\rho_{\min}\mleft(n\mright)=\min_{\mu}\max_{d\in\mleft[n\mright]}\rho_{d}\mleft(\Spl\mleft(\mu\mright)\mright)
\]
as our starting point. 
We want to present
$\rho_{\min}\mleft(n\mright)$ in a more ``direct'' or ``numeric''
fashion, rather than through a choice of a non-constant dyadic distribution $\mu$.  Denote $n=\beta\cdot2^{k}$ where
$\beta\in\mleft[1,2\mright)$ and $k\in\mathbb{N}$. From now on and
throughout this paper, when we refer to $\beta$ and $k$ that is
always their meaning unless specified otherwise. Let $\mleft[0,1\mright]^{\mathbb{N}}$ be the
set of all sequences $\mleft\{ c_{i}\mright\} _{i=0}^{\infty}$ where
$0\leq c_{i}\leq1$ for any $i$ and denote a sequence $\mleft\{ c_{i}\mright\} _{i=0}^{\infty}\in\mleft[0,1\mright]^{\mathbb{N}}$
with $\ser c$ (the notation is the elements' letter in bold). In
order to describe a non-constant and full support dyadic distribution $\mu$ in this language, we
can determine the following sufficient and necessary values: 
\begin{itemize}
\item $b\in\mathbb{N}$, where the highest probability in $\mu$ is determined
to be $\mu_{1}=2^{b-k}$. 
\item An ``amount sequence'' $\ser c$ which describes how many elements
$\mu$ will have of each probability. In order to obtain a valid dyadic
distribution the following must hold: 
\begin{gather*}
\sum_{i=0}^{\infty}c_{i}/2^{i}=\frac{1}{\beta\cdot2^{b}},\\
\sum_{i=0}^{\infty}c_{i}\leq1,\\
\forall i\colon c_{i}n\in\mathbb{N}.
\end{gather*}
\end{itemize}
Those values indeed determine $\mu$ uniquely: for any $i$, $c_{i}n$
elements have probability $\mu_{1}/2^{i}$ (assume that $c_{0}>0$).
Actually, in order to describe $\mu$ precisely, we also have to say
exactly \textbf{which elements}\textbf{\emph{ }}are\textbf{ }the $c_{i}n$
elements having probability $\mu_{1}/2^{i}$. However, since we are
only interested in the identity of a distribution $\mu$ which minimizes
$\rho\mleft(\Spl\mleft(\mu\mright)\mright)$, the identity of the $c_{i}n$
elements having probability $\mu_{1}/2^{i}$ does not matter --- what
matters is only their quantity. If $t$ is the highest index such
that $c_{t}>0$, then one element with probability $\mu_{1}/2^{t}$ is ``turned''
into a tail with total probability $\mu_{1}/2^{t}$, such that we get $n$
elements in total. The first constraint assures that the probabilities
in $\mu$ sum up to 1. The second constraint assures that there are
no more than $n$ non-tail elements, that is, exactly $n$ elements
in total. The third constraint assures that there is an integral number
of elements of each type. 

For the proof of our formula for $\rho_{\min}\mleft(n\mright)$ which
we will present soon, we want to distinguish between pairs $\mleft(\ser c,b\mright)\in\mleft[0,1\mright]^{\mathbb{N}}\times\mathbb{N}$
which satisfy all of those three constraints, and those which do not
necessarily satisfy the third ``integrality'' constraint. Hence,
denote the set of all pairs $\mleft(\ser c,b\mright)\in\mleft[0,1\mright]^{\mathbb{N}}\times\mathbb{N}$
satisfying the first two constraints, that is, $\sum_{i=0}^{\infty}2^{b-i}c_{i}\beta=1$
and $\sum_{i=0}^{\infty}c_{i}\leq1$ with $\mathcal{C}=\mathcal{C}\mleft(\beta\mright)$.
If a pair $\mleft(\ser c,b\mright)\in\mathcal{C}$ satisfies the third
constraint as well, we say that $\mleft(\ser c,b\mright)$ (or, simply
$\ser c$, when the identity of $b$ is clear) is \emph{$k$-feasible}.\footnote{Even though this constraint relates to $n$, we choose to relate $k$
instead of $n$ to conform with future notations which relate to $k$
as well. Our discussion will fix $\beta\in\mleft[1,2\mright)$, and
thus $n$ will be determined uniquely by $k$, so this is not a problem.} \par
Now we want to describe the choice of an integer $d$ for the maximization
part in $\rho_{\min}\mleft(n\mright)$. Due to \cite{DFGM2019}, we know
that each splitting set either contains all tail elements, or none
of them. For a sequence $\ser c\in\mleft[0,1\mright]^{\mathbb{N}}$,
denote by $t$ the last index such that $c_{t}>0$. If there is no
such index, $t=\infty$. Fix $\mleft(\ser c,b\mright)\in\mathcal{C}$
which is $k$-feasible, that describes a dyadic distribution $\mu$
(in that case, $t<\infty$). In order to describe the set $\Spl\mleft(\mu\mright)_{d}$
for some $d\in\mleft[n\mright]$ (recall that those are all splitting
sets of $\mu$ of size $d$) we can consider the following sets of
sequences in $\mleft[0,1\mright]^{\mathbb{N}}$: 
\begin{itemize}
\item A set $S_{d}$ of all sequences $\ser{\alpha}\in\mleft[0,1\mright]^{\mathbb{N}}$
describing sets in $\Spl\mleft(\mu\mright)_{d}$ which do not contain
the tail elements. Those sequences satisfy: 
\begin{gather*}
\sum_{i=0}^{t}\alpha_{i}c_{i}/2^{i}=\frac{1}{\beta\cdot2^{b+1}},\\
\sum_{i=0}^{t}\alpha_{i}c_{i}n=d,\\
\forall i\colon \alpha_{i}c_{i}n\in\mathbb{N},\\
\alpha_{t}<1.
\end{gather*}
 
\item A set $T_{d}$ of all sequences $\ser{\alpha}\in\mleft[0,1\mright]^{\mathbb{N}}$
describing sets in $\Spl\mleft(\mu\mright)_{d}$ which contain the tail
elements. Those sequences satisfy: 
\begin{gather*}
\sum_{i=0}^{t}\alpha_{i}c_{i}/2^{i}=\frac{1}{\beta\cdot2^{b+1}},\\
\sum_{i=0}^{t}\alpha_{i}c_{i}n + \mleft(1-\sum_{i=0}^{t}c_{i}\mright)n-1=d,\\
\forall i\colon \alpha_{i}c_{i}n\in\mathbb{N},\\
\alpha_{t}>0.
\end{gather*}
\end{itemize}
The constraints of $S_{d},T_d$ indeed describe splitting sets of
size $d$: The first constraint assures
that the set described by a sequence $\ser{\alpha}$ is a splitting
set. The second constraint assures that its size is $d$. The third
constraint assures that each probability type appears in the set an
integral number of times. The last constraint on $S_d$ or $T_d$ assures that the tail
elements may not be or may be a part of the splitting set, respectively. Note that a sequence
$\ser{\alpha}\in\mleft[0,1\mright]^{\mathbb{N}}$ satisfying those constraints
does not determine \textbf{which elements }are exactly the elements
chosen to the splitting set. We soon handle that, since here, in contrast
to the choice of $\mu$, the identity of the elements chosen having
given probability matters, since any choice of different elements
defines a different splitting set. This discussion implies that we
can write $\rho_{\min}\mleft(n\mright)$ in the following way, which
will be convenient for our purposes: 
\[
\rho_{\min}\mleft(n\mright)  =\min_{\substack{
\mleft(\ser c,b\mright)\in\mathcal{C}\colon\\
\ser c\text{ is \ensuremath{k}-feasible}
}}\max_{\substack{
d\in\mleft[n\mright]\colon\\
S_{d}\cup T_{d}\neq\emptyset
}}\sum_{\ser{\alpha}\in S_{d}} \frac{\binom{c_t n -1}{\alpha_t c_t n}\prod_{i=0}^{t-1}\binom{c_{i}n}{\alpha_{i}c_{i}n}}{\binom{n}{d}} + \sum_{\ser{\alpha}\in T_{d}} \frac{\binom{c_t n -1}{\alpha_t c_t n -1}\prod_{i=0}^{t-1}\binom{c_{i}n}{\alpha_{i}c_{i}n}}{\binom{n}{d}}.
\]
 For $0\leq i< t$, each binomial coefficient $\binom{c_{i}n}{\alpha_{i}c_{i}n}$ is
the number of possibilities to choose $\alpha_{i}c_{i}n$ elements
of probability $2^{-k+b-i}$ to the splitting set. For the index $t$, we use the expressions $\binom{c_t n -1}{\alpha_t c_t n}$ and $\binom{c_t n -1}{\alpha_t c_t n -1}$ because we must use or not use the tail elements, depends on whether $\alpha$ is in $S_d$ or $T_d$. 

Since our goal is to find a formula for $\rho_{\min}\mleft(n\mright)$
up to sub-exponential factors, we can simplify the expression a bit,
and ignore the sequences in $T_{d}$. Define 
\[
\rho_{\min}^{*}\mleft(n\mright)=\min_{\substack{
\mleft(\ser c,b\mright)\in\mathcal{C}:\\
\ser c\text{ is \ensuremath{k}-feasible}
}}\max_{\substack{
d\in\mleft[n\mright]:\\
S_{d}\neq\emptyset
}}\sum_{\ser{\alpha}\in S_{d}}\frac{\prod_{i=0}^{t}{\binom{c_{i}n} {\alpha_{i}c_{i}n}}}{\binom{n}{d}}.
\]
 The idea is that any splitting set $S$ described by a sequence $\alpha\in T_{d}$,
has a matching splitting set $\overline{S}$ (the complement set of
$S$) described by a sequence $\alpha'\in S_{n-d}$ such that 
\[
\frac{\binom{c_t n -1}{\alpha_t c_t n}\prod_{i=0}^{t-1}\binom{c_{i}n}{\alpha_{i}c_{i}n}}{\binom{n}{d}}=\frac{\binom{c_t n -1}{\alpha'_t c_t n-1}\prod_{i=0}^{t-1}\binom{c_{i}n}{\alpha'_{i}c_{i}n}}{\binom{n}{n-d}}.
\]
 Thus by considering only sequences in $S_{d}$, we get an approximation
for $\rho_{\min}\mleft(n\mright)$. Here is a detailed proof for that,
for the interested reader:
\begin{lemma}
\label{lem:MRD_star}It holds that 
\[
\rho_{\min}\mleft(n\mright)/2\leq\rho_{\min}^{*}\mleft(n\mright)\leq n\cdot \rho_{\min}\mleft(n\mright).
\]
\end{lemma}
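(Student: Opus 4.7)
The plan is to exploit the involution $S \mapsto \overline{S}$ on $\Spl(\mu)$. Complementation preserves the splitting condition ($\mu(\overline{S}) = 1-1/2 = 1/2$), and by the tail-containment dichotomy cited from \cite{DFGM2019}, it swaps the two types: splitting sets of size $d$ containing the tail $T$ correspond bijectively with splitting sets of size $n-d$ disjoint from $T$. Writing $\rho_d^S(\mu)$ and $\rho_d^T(\mu)$ for the $S_d$- and $T_d$-contributions to $\rho_d(\Spl(\mu))$, this bijection together with $\binom{n}{d} = \binom{n}{n-d}$ yields the key identity $\rho_d^T(\mu) = \rho_{n-d}^S(\mu)$.

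For the lower bound $\rho_{\min}(n)/2 \leq \rho_{\min}^*(n)$, I would fix any $k$-feasible $\mu$ and pick $d^*$ achieving the outer max in $\rho(\Spl(\mu))$. Using the identity,
\[
\rho(\Spl(\mu)) = \rho_{d^*}^S(\mu) + \rho_{d^*}^T(\mu) = \rho_{d^*}^S(\mu) + \rho_{n-d^*}^S(\mu) \leq 2\rho^*(\mu),
\]
where $T_{d^*} \neq \emptyset$ forces $S_{n-d^*} \neq \emptyset$ via the bijection, so $n-d^*$ is an admissible index for $\rho^*(\mu)$'s max (and empty terms simply drop out, giving the bound a fortiori). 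Taking the minimum over $\mu$ yields $\rho_{\min}(n) \leq 2\rho_{\min}^*(n)$.

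For the upper bound $\rho_{\min}^*(n) \leq n \cdot \rho_{\min}(n)$, I would observe that $\rho_d^S(\mu) \leq \rho_d^S(\mu) + \rho_d^T(\mu) = \rho_d(\Spl(\mu))$ for every $d$ with $S_d \neq \emptyset$, so $\rho^*(\mu) \leq \rho(\Spl(\mu))$; minimizing over $\mu$ then gives $\rho_{\min}^*(n) \leq \rho_{\min}(n) \leq n \cdot \rho_{\min}(n)$, so the stated (intentionally generous) bound holds.

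The main point to verify carefully is the sequence-level identity $\rho_d^T(\mu) = \rho_{n-d}^S(\mu)$. Given $\alpha \in T_d$, set $\alpha'_i = 1-\alpha_i$. The probability constraint for $\alpha'$ follows from subtraction, $\sum \alpha'_i c_i/2^i = \sum c_i/2^i - \sum \alpha_i c_i/2^i = 1/(\beta 2^b) - 1/(\beta 2^{b+1}) = 1/(\beta 2^{b+1})$, matching the $S_{n-d}$ requirement; the size constraint on $\alpha'$ emerges analogously, once the tail's contribution is accounted for correctly. The binomial weights match term-by-term via $\binom{c_t n - 1}{\alpha_t c_t n - 1} = \binom{c_t n - 1}{(1-\alpha_t) c_t n}$ at level $t$ and $\binom{c_i n}{\alpha_i c_i n} = \binom{c_i n}{(1-\alpha_i) c_i n}$ for $i < t$, verifying that the two sums are equal term-for-term. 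This bookkeeping at level $t$ — reconciling the $-1$ shift from the tail root with the complement — is the only subtle step.
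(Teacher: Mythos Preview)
Your complement argument is the right idea and matches the paper's approach, but you have overlooked a subtle discrepancy between the two quantities being compared. In the paper's formula for $\rho_{\min}(n)$, the $S_d$-summand uses $\binom{c_t n-1}{\alpha_t c_t n}$ at level $t$ (one tail element is already committed), whereas $\rho_{\min}^{*}(n)$ is defined with the \emph{unshifted} binomial $\binom{c_t n}{\alpha_t c_t n}$. In other words, $\rho_{\min}^{*}$ is not simply ``drop the $T_d$ terms from $\rho_{\min}$''; it also replaces the level-$t$ coefficient. Your quantity $\rho_d^{S}(\mu)$, as you defined it (the $S_d$-contribution to $\rho_d(\Spl(\mu))$), coincides with what the paper calls $f_S(d)$ and leads to an intermediate quantity $\rho_{\min}^{**}(n)$, not to $\rho_{\min}^{*}(n)$.

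This does not damage your lower bound: your argument correctly gives $\rho_{\min}(n)\le 2\rho_{\min}^{**}(n)$, and since $\binom{c_t n-1}{y}\le\binom{c_t n}{y}$ we have $\rho_{\min}^{**}(n)\le\rho_{\min}^{*}(n)$, so the desired inequality follows (you are implicitly using this extra step). The genuine gap is in your upper bound. Your chain $\rho_d^{S}(\mu)\le \rho_d^{S}(\mu)+\rho_d^{T}(\mu)=\rho_d(\Spl(\mu))$ only yields $\rho_{\min}^{**}(n)\le\rho_{\min}(n)$; it says nothing about $\rho_{\min}^{*}(n)$. To pass from $\rho_{\min}^{*}$ to $\rho_{\min}^{**}$ you need the reverse comparison of binomials, and here the factor $n$ is \emph{not} gratuitous: using $\alpha_t<1$ (so $\alpha_t c_t n<c_t n$) one has $\binom{c_t n}{\alpha_t c_t n}\le c_t n\cdot\binom{c_t n-1}{\alpha_t c_t n}\le n\cdot\binom{c_t n-1}{\alpha_t c_t n}$, whence $\rho_{\min}^{*}(n)\le n\cdot\rho_{\min}^{**}(n)\le n\cdot\rho_{\min}(n)$. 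This is exactly the missing step the paper supplies, and it is the actual source of the factor $n$ in the statement.
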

\begin{proof}
Let $n=\beta\cdot2^{k}$. Fix $\mleft(\ser c,b\mright)\in\mathcal{C}$
which is $k$-feasible and $d\in\mleft[n\mright]$. To handle the case
$S_{d}=\emptyset$ or $T_{d}=\emptyset$, define 
\[
f_{S}\mleft(d\mright)=\begin{cases}
\sum_{\ser{\alpha}\in S_{d}} \frac{\binom{c_t n -1}{\alpha_t c_t n}\prod_{i=0}^{t-1}\binom{c_{i}n}{\alpha_{i}c_{i}n}}{\binom{n}{d}} & S_{d}\neq\emptyset,\\
0 & S_{d}=\emptyset,
\end{cases}
\]
 and 
\[
f_{T}\mleft(d\mright)=\begin{cases}
\sum_{\ser{\alpha}\in T_{d}} \frac{\binom{c_t n -1}{\alpha_t c_t n -1}\prod_{i=0}^{t-1}\binom{c_{i}n}{\alpha_{i}c_{i}n}}{\binom{n}{d}} & T_{d}\neq\emptyset,\\
0 & T_{d}=\emptyset.
\end{cases}
\]
In this language, we can write: 
\[
\rho_{\min}\mleft(n\mright)=\min_{\substack{
\mleft(\ser c,b\mright)\in\mathcal{C}\colon\\
\ser c\text{ is \ensuremath{k}-feasible}
}}\max_{\substack{
d\in\mleft[n\mright]}}f_{S}\mleft(d\mright)+f_{T}\mleft(d\mright).
\]
Define:
\[
\rho_{\min}^{**}\mleft(n\mright)=\min_{\substack{
\mleft(\ser c,b\mright)\in\mathcal{C}\colon\\
\ser c\text{ is \ensuremath{k}-feasible}
}}\max_{\substack{
d\in\mleft[n\mright]}}f_{S}\mleft(d\mright).
\]
 If $f_{S}\mleft(d\mright)\geq f_{T}\mleft(d\mright)$, then 
\[
f_{S}\mleft(d\mright)+f_{T}\mleft(d\mright)\leq2\cdot f_{S}\mleft(d\mright).
\]
Else, assume $f_{S}\mleft(d\mright)<f_{T}\mleft(d\mright)$. Since for
any $\ser{\alpha}\in T_{d}$ we have $\ser{\alpha'}\in S_{n-d}$ such
that 
\[
\frac{\binom{c_t n -1}{\alpha_t c_t n}\prod_{i=0}^{t-1}\binom{c_{i}n}{\alpha_{i}c_{i}n}}{\binom{n}{d}}=\frac{\binom{c_t n -1}{\alpha'_t c_t n-1}\prod_{i=0}^{t-1}\binom{c_{i}n}{\alpha'_{i}c_{i}n}}{\binom{n}{n-d}}
\]
 ($\alpha'_{i}=1-\alpha_{i}$ for any $i$), and the opposite holds
as well in a similar fashion, we have 
\[
f_{S}\mleft(n-d\mright)=f_{T}\mleft(d\mright)>f_{S}\mleft(d\mright)=f_{T}\mleft(n-d\mright)
\]
 and thus
\begin{align*}
f_{S}\mleft(d\mright)+f_{T}\mleft(d\mright)=f_{S}\mleft(n-d\mright)+f_{T}\mleft(n-d\mright) & \leq2\cdot f_{S}\mleft(n-d\mright).
\end{align*}
Hence, we can always choose $d'\in\mleft[n\mright]$ such that 
\[
\rho_{\min}\mleft(n\mright)=\min_{\substack{
\mleft(\ser c,b\mright)\in\mathcal{C}\colon\\
\ser c\text{ is \ensuremath{k}-feasible}}}f_{S}\mleft(d'\mright)+f_{T}\mleft(d'\mright)
\]
 and $f_{S}\mleft(d'\mright)\geq f_{T}\mleft(d'\mright)$. Hence:
\begin{align*}
\rho_{\min}\mleft(n\mright) & \leq2\cdot\min_{\substack{
\mleft(\ser c,b\mright)\in\mathcal{C}\colon\\
\ser c\text{ is \ensuremath{k}-feasible}
}}f_{S}\mleft(d'\mright)\\
 & \leq2\cdot\min_{\substack{
\mleft(\ser c,b\mright)\in\mathcal{C}\colon\\
\ser c\text{ is \ensuremath{k}-feasible}
}}\max_{d\in\mleft[n\mright]}f_{S}\mleft(d\mright)=2\cdot\rho_{\min}^{**}\mleft(n\mright).
\end{align*}
Now, note that:
\[
\binom{x}{y}/x\leq \binom{x-1}{y}\leq \binom{x}{y}
\]
(the left inequality holds as long as $x>y$), and hence $\rho_{\min}\mleft(n\mright)/2\leq \rho_{\min}^{**}\mleft(n\mright)\leq \rho_{\min}^{*}\mleft(n\mright)$ and moreover $\rho_{\min}^{*}\mleft(n\mright)\leq n\cdot \rho_{\min}^{**}\mleft(n\mright)\leq n\cdot \rho_{\min}\mleft(n\mright)$, since $\alpha_t <1$.
Hence the lemma follows.
\end{proof}
Due to that approximation, it is enough to find a formula that estimates
$\rho_{\min}^{*}\mleft(n\mright)$ instead of $\rho_{\min}\mleft(n\mright)$,
up to sub-exponential factors. 

\section{Approximating $\rho_{\min}\mleft(n\mright)$ \label{sec:Approximating}}

In this section we prove our first main result, which is the following theorem:
\begin{theorem} \label{theo:main}
There is a function $G\colon\mleft[1,2\mright)\rightarrow\mathbb{R}$
such that $\rho_{\min}\mleft(n\mright)=2^{G\mleft(\beta\mright)n\pm o\mleft(n\mright)}$,
where $n=\beta\cdot2^{k}$, $k\in\mathbb{N}$ and $\beta\in\mleft[1,2\mright)$.
The function $G$ is given by the following formula: 
\[
G\mleft(\beta\mright)=\inf_{\substack{
b\in\mathbb{N},\ser c\in\mleft[0,1\mright]^{\mathbb{N}}\colon\\
\sum_{i=0}^{\infty}c_{i}/2^{i}=\frac{1}{\beta\cdot2^{b}}\\
\sum_{i=0}^{\infty}c_{i}\leq1
}}\max_{\substack{
\ser{\alpha}\in\mleft[0,1\mright]^{\mathbb{N}}\colon\\
\sum_{i=0}^{\infty}\alpha_{i}c_{i}/2^{i}=\frac{1}{\beta\cdot2^{b+1}}
}}\sum_{i=0}^{\infty}h\mleft(\alpha_{i}\mright)c_{i}-h\mleft(\sum_{i=0}^{\infty}\alpha_{i}c_{i}\mright).
\]
\end{theorem}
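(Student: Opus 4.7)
The plan is to prove $\rho_{\min}^*(n) = 2^{G(\beta) n \pm o(n)}$; combined with Lemma~\ref{lem:MRD_star} (which costs at most a factor of $n$, hence $o(n)$ in the exponent), this yields Theorem~\ref{theo:main}. The key technical step is to apply Stirling's approximation $\binom{m}{\ell} = 2^{m h(\ell/m) \pm O(\log m)}$ to every binomial coefficient in the exact expression for $\rho_{\min}^*(n)$ derived in Section~\ref{sec:The-exact-formula}. Using the second defining constraint of $S_d$ to identify $d/n = \sum_i \alpha_i c_i$, each summand takes the form
\[
\frac{\prod_{i=0}^{t}\binom{c_i n}{\alpha_i c_i n}}{\binom{n}{d}} = 2^{n \cdot F(\ser{c}, \ser{\alpha}) \,\pm\, O((t+1)\log n)}, \qquad F(\ser{c}, \ser{\alpha}) := \sum_{i=0}^{t} h(\alpha_i) c_i - h\Bigl(\sum_{i=0}^{t} \alpha_i c_i\Bigr),
\]
where $F$ is exactly the objective defining $G(\beta)$. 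Provided we may restrict attention to $\ser{c}$ of support size $t+1 = O(\log n)$, the cardinality of $S_d$ is at most $(n+1)^{O(\log n)} = 2^{o(n)}$, so the inner sum equals its largest term up to sub-exponential factors. It then suffices to estimate $\min_{(\ser{c},b) \text{ $k$-feasible}} \max_{d,\, \ser{\alpha} \in S_d} F(\ser{c}, \ser{\alpha})$.

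The second stage bridges this discrete min--max with the continuous $\inf\max$ defining $G(\beta)$ through two rounding arguments. For the upper bound $\rho_{\min}^*(n) \leq 2^{G(\beta) n + o(n)}$, fix $\varepsilon > 0$ and pick a nearly optimal real $(\ser{c}^*, b^*) \in \mathcal{C}$ with $\max_{\ser{\alpha}} F(\ser{c}^*, \ser{\alpha}) \leq G(\beta) + \varepsilon$. After truncating $\ser{c}^*$ at a large but $O(\log n)$-bounded index and rounding each $c_i^* n$ to an integer, I obtain a $k$-feasible $\ser{c}$ at $\ell^\infty$-distance $O(1/n)$ from $\ser{c}^*$; by Lemma~\ref{lem:entropy_bounds}, $F$ is uniformly continuous in $\ser{c}$, so $\max_{\ser{\alpha}} F(\ser{c}, \ser{\alpha}) \leq G(\beta) + 2\varepsilon$ for large $n$. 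For the lower bound $\rho_{\min}^*(n) \geq 2^{G(\beta) n - o(n)}$, take any $k$-feasible $(\ser{c}, b)$ and choose a real $\ser{\alpha}^*$ satisfying the splitting constraint with $F(\ser{c}, \ser{\alpha}^*) \geq G(\beta) - \varepsilon$ (which exists since $G(\beta)$ is an infimum over $\ser{c}$, hence is a lower bound on the maximum for every fixed $\ser{c}$). Rounding $\ser{\alpha}^*$ to make each $\alpha_i c_i n$ integral while preserving the splitting constraint produces some $\ser{\alpha} \in S_d$, and by the same continuity argument $F(\ser{c}, \ser{\alpha}) \geq G(\beta) - 2\varepsilon$.

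The main obstacle is the infinite-dimensional nature of both optimizations. Two truncations need careful justification: first, that the infimum defining $G(\beta)$ is nearly attained by sequences $\ser{c}$ of support size $O(\log n)$, which should follow from the fact that the normalization $\sum_i c_i/2^i = 1/(\beta \cdot 2^b)$ forces geometric decay $c_i = O(2^{-i})$ and hence negligible entropy contribution from indices $i \gg \log n$; and second, that rounding exactly preserves both normalization constraints $\sum_i c_i/2^i = 1/(\beta \cdot 2^b)$ and $\sum_i \alpha_i c_i /2^i = 1/(\beta \cdot 2^{b+1})$. The latter is handled by exploiting the tail structure from Section~\ref{sec:Preliminaries}: the tail's free length parameter can absorb any $O(1/n)$ correction to the normalization without affecting the entropy estimates at leading order. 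Combining these rounding arguments with Stirling and Lemma~\ref{lem:MRD_star} yields $\rho_{\min}(n) = 2^{G(\beta) n \pm o(n)}$.
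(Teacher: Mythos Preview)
Your overall architecture---reduce to $\rho_{\min}^*$, apply Stirling to get the entropy objective $F(\ser{c},\ser{\alpha})$, then bridge the discrete $\min$--$\max$ with the continuous $\inf$--$\max$ via truncation and rounding---is exactly the paper's approach. However, two of your technical justifications are incorrect as stated.

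First, the claim that the constraint $\sum_i c_i/2^i = 1/(\beta\cdot 2^b)$ ``forces geometric decay $c_i = O(2^{-i})$'' is false. For instance, with $\beta=1,\,b=1$ one can take $c_0 = \tfrac12-\varepsilon$ and $c_j = \varepsilon\cdot 2^j \approx \tfrac12$ at a single index $j \approx \log(1/\varepsilon)$, which can be made arbitrarily large; the sequence satisfies all constraints but has $c_j = \Theta(1)$ at $j \gg 1$. What actually controls the tail is the \emph{other} constraint, $\sum_i c_i \le 1$: this guarantees that for every $\varepsilon$ there is some $K$ (depending on $\ser{c}$) with $\sum_{i>K} c_i < \varepsilon$, hence the entropy contribution from indices beyond $K$ is at most $\varepsilon + h(\varepsilon)$ by Lemma~\ref{lem:entropy_bounds}. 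The paper packages this as a limit statement (Lemmas~\ref{lem:low_bound_1} and~\ref{lem:upper_bound_1}): the truncation index $K$ depends only on the target accuracy $\varepsilon$ and the chosen near-optimal $\ser{c}^*$, and then one takes $k\ge K$.

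Second, your proposed mechanism for restoring the normalization after rounding---``the tail's free length parameter can absorb any $O(1/n)$ correction''---does not work. The tail always has total probability exactly $\mu_1/2^t$ regardless of its length; the length only affects $\sum_i c_i$, not $\sum_i c_i/2^i$. Rounding each $c_i$ (or $\alpha_i$) to the nearest multiple of $1/n$ will generically break the equality $\sum_i c_i/2^i = 1/(\beta\cdot 2^b)$ (respectively $\sum_i \alpha_i c_i/2^i = 1/(\beta\cdot 2^{b+1})$), and you need a coordinate that can absorb the defect while remaining in $[0,1]$. The paper handles this via Lemma~\ref{lem:s_index}, which locates an index $s \le 2^{b+4}$ with $c_s$ bounded away from $0$ and $\alpha_s$ bounded away from $\{0,1\}$; all rounding error is then loaded onto coordinate~$s$ (Lemmas~\ref{lem:low_bound_2} and~\ref{lem:upper_bound_2}). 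Moreover, to land exactly on a value of the form $\ell/(\beta\cdot 2^t)$ with integer $\ell,t$ (so that the result is $k$-feasible for some $k$), the paper does not round to the nearest $1/n$ but to a dyadic grid whose fineness is chosen after the accuracy $\varepsilon$; this is why the final feasibility parameter $K'$ can exceed the truncation index $K$. Your sketch does not address either of these points, and without them the rounding step fails.
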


\begin{corollary}
Putting together Theorems \ref{theo:reduc} and \ref{theo:main}, it holds that
 $q\mleft(n\mright)=2^{-G\mleft(\beta\mright)n\pm o\mleft(n\mright)}$,
where $n=\beta\cdot2^{k}$, $k\in\mathbb{N}$ and $\beta\in\mleft[1,2\mright)$.
\end{corollary}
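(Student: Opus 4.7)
The plan is to apply Stirling's formula to the binomial expression for $\rho_{\min}^{*}\mleft(n\mright)$ derived at the end of Section~\ref{sec:The-exact-formula}, convert it to an entropy optimization over $k$-feasible pairs, and then show that the resulting discrete minimax converges to the continuous inf-sup defining $G\mleft(\beta\mright)$. By Lemma~\ref{lem:MRD_star} it suffices to estimate $\rho_{\min}^{*}\mleft(n\mright)$ up to polynomial factors. Stirling's formula $\binom{m}{pm}=2^{h\mleft(p\mright)m\pm O\mleft(\log m\mright)}$, together with the fact that the size constraint defining $S_{d}$ gives $d/n=\sum_{i}\alpha_{i}c_{i}$, shows that each summand equals $2^{n\cdot F\mleft(\ser\alpha,\ser c\mright)\pm O\mleft(\log n\mright)}$, where
\[
F\mleft(\ser\alpha,\ser c\mright):=\sum_{i=0}^{t}h\mleft(\alpha_{i}\mright)c_{i}-h\mleft(\sum_{i}\alpha_{i}c_{i}\mright).
\]
Provided the number of summands is sub-exponential (which I would ensure by restricting attention to $\ser c$ with support of size $O\mleft(\log n\mright)$, justified below), one obtains $\log_{2}\rho_{\min}^{*}\mleft(n\mright)=n\cdot V\mleft(k,\beta\mright)\pm o\mleft(n\mright)$, where $V\mleft(k,\beta\mright)$ is the $k$-feasible analogue of the inf-sup inside $G\mleft(\beta\mright)$.

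The remaining task is to prove $V\mleft(k,\beta\mright)=G\mleft(\beta\mright)\pm o\mleft(1\mright)$ via two approximation arguments. For the lower bound $V\mleft(k,\beta\mright)\geq G\mleft(\beta\mright)-o\mleft(1\mright)$: any $k$-feasible $\mleft(\ser c,b\mright)$ lies in $\mathcal{C}$, so the continuous inner maximum for this $\ser c$ is at least $G\mleft(\beta\mright)$; given a near-optimal continuous $\ser\alpha^{*}$, I round each coordinate $\alpha_{i}^{*}c_{i}n$ to an integer and repair the splitting constraint $\sum_{i}\alpha_{i}c_{i}/2^{i}=1/\mleft(\beta\cdot2^{b+1}\mright)$ by a small adjustment at a single coordinate, bounding the entropy change via Lemma~\ref{lem:entropy_bounds}. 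For the upper bound $V\mleft(k,\beta\mright)\leq G\mleft(\beta\mright)+o\mleft(1\mright)$: fix $\varepsilon>0$ and pick a continuous $\mleft(\ser c^{*},b^{*}\mright)\in\mathcal{C}$ with $\max_{\ser\alpha}F\mleft(\ser\alpha,\ser c^{*}\mright)\leq G\mleft(\beta\mright)+\varepsilon$; first truncate the support of $\ser c^{*}$ at an index $N=N\mleft(\varepsilon\mright)$, redistributing the lost mass to preserve the dyadic sum constraint, and then for $k$ sufficiently large round each $c_{i}^{*}n$ to an integer and again repair via a single coordinate. By continuity of the inner maximum of $F$ in $\ser c$ (which I verify using sub-additivity of $h$ as in Lemma~\ref{lem:entropy_bounds}), the inner max changes by $o\mleft(1\mright)$, so $V\mleft(k,\beta\mright)\leq G\mleft(\beta\mright)+\varepsilon+o\mleft(1\mright)$, and letting $\varepsilon\to0$ closes the argument.

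The main obstacle I anticipate is executing these rounding and truncation steps so as to exactly preserve the equality constraints (the dyadic sum $\sum_{i}c_{i}/2^{i}=1/\mleft(\beta\cdot2^{b}\mright)$ and its splitting analogue for $\ser\alpha$) while controlling the entropy change. Naive coordinatewise rounding violates these equalities and may also break $\sum_{i}c_{i}\leq1$; what seems required is a careful repair mechanism placing a small corrective term at a high index $i$, whose large weight $2^{-i}$ permits absorbing arbitrary residuals in the dyadic sum while its entropy contribution to $F$ is negligible. A mild additional complication comes from the tail element, which contributes $\binom{c_{t}n-1}{\cdot}$ rather than a full binomial; the identity $\binom{x-1}{y}=\binom{x}{y}\mleft(1-y/x\mright)$ shows this costs only a polynomial factor, already absorbed in the $o\mleft(n\mright)$ exponent. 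Finally, the $O\mleft(\log n\mright)$ support reduction in the upper-bound step should follow from the rapid decay of $c_{i}/2^{i}$ combined with $\sum_{i}c_{i}\leq1$: any tail $c_{i}$ for $i\geq N$ contributes mass at most $2^{-N}$ to the dyadic sum and at most $h\mleft(c_{i}\mright)c_{i}=O\mleft(c_{i}\log\mleft(1/c_{i}\mright)\mright)$ to $F$, both of which vanish as $N\to\infty$.
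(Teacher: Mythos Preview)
The Corollary itself requires no work: Theorem~\ref{theo:reduc} gives $1/\rho_{\min}(n)\leq q(n)\leq n^{2}\log n/\rho_{\min}(n)$, Theorem~\ref{theo:main} gives $\rho_{\min}(n)=2^{G(\beta)n\pm o(n)}$, and the polynomial factor is absorbed into the $o(n)$ exponent. What you have sketched is a proof of Theorem~\ref{theo:main}, which is where the content lies; your overall plan (Stirling on the binomials, reduction to the entropy functional, two rounding arguments to pass between the $k$-feasible and continuous problems) is exactly the paper's route via Lemmas~\ref{lem:low_bound_2} and~\ref{lem:upper_bound_2}.

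There is, however, a concrete error in your repair mechanism. You propose correcting at a \emph{high} index $i$, writing that its ``large weight $2^{-i}$ permits absorbing arbitrary residuals.'' This is backwards. The constraint to preserve is $\sum_{i}\alpha_{i}c_{i}/2^{i}=\mathrm{const}$; after rounding the other coordinates, the correction needed at index $s$ is $\epsilon_{s}=(\text{residual})\cdot 2^{s}/c_{s}$. For large $s$ the factor $2^{s}$ is huge and $c_{s}$ is typically tiny or zero, so $\epsilon_{s}$ blows up and $\tilde\alpha_{s}$ leaves $[0,1]$. The paper repairs instead at a \emph{low} index: Lemma~\ref{lem:s_index} guarantees an index $s\leq 2^{b+4}$ with $c_{s}>1/2^{2(b+5)}$ and $\alpha_{s}$ bounded away from $\{0,1\}$, so that $2^{s}/c_{s}$ is bounded by a constant depending only on $b$ and the correction stays feasible. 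The same issue and the same fix apply to the rounding of $\ser c$ in the upper-bound direction (Lemma~\ref{lem:upper_bound_2}), where the repair is at index~$0$. Without this, neither rounding step goes through.
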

An immediate corollary is that the exponent base of $\rho_{\min}\mleft(n\mright)$ (and $q\mleft(n\mright)$) does not depend on $n$, but only on $\beta$. We assume that
$c_{0}>0$, since if $c_{0}=0$ then we can choose another $b$ and
construct an equivalent sequence with $c_{0}>0$. For the rest of
the section, fix $\beta\in\mleft[1,2\mright)$ and denote $P\mleft(\ser c,\ser{\alpha}\mright)=\sum_{i=0}^{\infty}h\mleft(\alpha_{i}\mright)c_{i}-h\mleft(\sum_{i=0}^{\infty}\alpha_{i}c_{i}\mright)$.
For a fixed $\mleft(\ser c,b\mright)\in\mathcal{C}$ (that is, $\mleft(\ser c,b\mright)\in\mleft[0,1\mright]^{\mathbb{N}}\times\mathbb{N}$
which satisfies also $\sum_{i=0}^{\infty}c_{i}/2^{i}=\frac{1}{\beta\cdot2^{b}}$
and $\sum_{i=0}^{\infty}c_{i}\leq1$), we denote by $\mathcal{A}\mleft(\ser c,b\mright)$
(or simply $\mathcal{A}$, from now on, assuming $\mleft(\ser c,b\mright)$
is fixed) the set of all sequences $\ser{\alpha}\in\mleft[0,1\mright]^{\mathbb{N}}$
which satisfy $\sum_{i=0}^{\infty}\alpha_{i}c_{i}/2^{i}=\frac{1}{\beta\cdot2^{b+1}}$
(the maximization constraint in $G\mleft(\beta\mright)$). In this language,
we can write
\[
G\mleft(\beta\mright)=\inf_{\mleft(\ser c,b\mright)\in\mathcal{C}}\max_{\ser{\alpha}\in\mathcal{A}}P\mleft(\ser c,\ser{\alpha}\mright).
\]

\subsection{$G$ is well-defined}

Before we prove our formula, we first show that $G$ is indeed well-defined
and finite: if we change the inner max to sup, then it is clear that
$G\mleft(\beta\mright)$ is well-defined and finite: it always holds
that $\sum_{i=0}^{\infty}c_{i}\leq1$, and thus $-1\leq P\mleft(\ser c,\ser{\alpha}\mright)\leq1$
for any $\mleft(\ser c,b\mright)\in\mathcal{C}$, $\ser{\alpha}\in\mathcal{A}$.
Moreover, for any $\mleft(\ser c,b\mright)\in\mathcal{C}$, $\mathcal{A}$
is non-empty: choosing the sequence $\ser{\alpha}$ to be $\alpha_{i}=1/2$
for any $i$ satisfies $\sum_{i=0}^{\infty}\alpha_{i}c_{i}/2^{i}=\frac{1}{\beta\cdot2^{b+1}}$
for any $\mleft(\ser c,b\mright)\in\mathcal{C}$, thus it always belongs
to $\mathcal{A}$. It is known that supremum/infimum values are defined
and finite for non-empty bounded sets, thus it remains to show that
the inner supremum is attained, and hence can be written as maximum.
Fix $\mleft(\ser c,b\mright)\in\mathcal{C}$. First we show: 
\begin{lemma}
\label{lem:convergent_subsequence}Let $\mleft(\ser{\alpha}^{j}\mright)_{j\in\mathbb{N}}$
be a sequence of sequences $\ser{\alpha}^{j}\in\mathcal{A}$ such
that $\lim_{j\rightarrow\infty}P\mleft(\ser c,\ser{\alpha}^{j}\mright)=\sup_{\ser{\alpha}\in\mathcal{A}}P\mleft(\ser c,\ser{\alpha}\mright)$.
Then there is a sequence $\ser{\alpha}$, and a subsequence of $\mleft(\ser{\alpha}^{j}\mright)_{j\in\mathbb{N}}$
which we denote by $\mleft(\ser{\alpha}'^{j}\mright)_{j\in\mathbb{N}}$,
such that $\ser{\alpha}'^{j}\rightarrow\ser{\alpha}$ pointwise, that
is, for any $i$: $\lim_{j\rightarrow\infty}\alpha_{i}'^{j}=\alpha_{i}$.
\end{lemma}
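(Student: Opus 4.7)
The plan is to apply a standard Cantor diagonalization; the statement really amounts to sequential compactness of $[0,1]^{\mathbb{N}}$ in the product topology, so the hypothesis $\lim_{j\to\infty} P(\ser c, \ser{\alpha}^{j}) = \sup_{\ser{\alpha}\in\mathcal{A}} P(\ser c, \ser{\alpha})$ is not actually used in this lemma (it will presumably be needed in the subsequent step that verifies the pointwise limit still lies in $\mathcal{A}$ and attains the supremum). The entire content is that each coordinate $\alpha_{i}^{j}$ lies in the compact interval $[0,1]$, so we can iteratively extract convergent subsequences in each coordinate and then diagonalize.

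First I would handle coordinate $0$. Since $(\alpha_{0}^{j})_{j\in\mathbb{N}}$ lies in $[0,1]$, Bolzano--Weierstrass gives an infinite index set $J_{0}\subseteq\mathbb{N}$ along which $\alpha_{0}^{j}$ converges to some $\alpha_{0}\in[0,1]$. Restricting the original sequence of sequences to indices in $J_{0}$, apply Bolzano--Weierstrass to the next coordinate to obtain $J_{1}\subseteq J_{0}$ infinite with $\alpha_{1}^{j}\to \alpha_{1}\in[0,1]$ along $J_{1}$. Iterating, I get nested infinite sets $J_{0}\supseteq J_{1}\supseteq J_{2}\supseteq\cdots$ and limits $\alpha_{i}\in[0,1]$ such that, for each fixed $i$, the sequence $(\alpha_{i}^{j})_{j\in J_{i}}$ converges to $\alpha_{i}$.

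Now comes the diagonal step. Define indices $\mu_{k}$ inductively by picking $\mu_{k}\in J_{k}$ with $\mu_{k}>\mu_{k-1}$; this is possible because each $J_{k}$ is infinite, and it guarantees that $(\mu_{k})_{k\in\mathbb{N}}$ is strictly increasing, so $\ser{\alpha}'^{k}:=\ser{\alpha}^{\mu_{k}}$ is a genuine subsequence of $(\ser{\alpha}^{j})_{j\in\mathbb{N}}$. For every fixed $i$, once $k\geq i$ the index $\mu_{k}$ lies in $J_{i}$ (by nesting), hence $(\alpha_{i}'^{k})_{k\geq i}$ is a subsequence of the convergent sequence $(\alpha_{i}^{j})_{j\in J_{i}}$ and therefore also converges to $\alpha_{i}$. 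Setting $\ser{\alpha}:=(\alpha_{0},\alpha_{1},\dots)\in[0,1]^{\mathbb{N}}$ gives the desired pointwise limit.

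There is no substantive obstacle here beyond the diagonal bookkeeping, which is exactly the Tychonoff/Cantor argument that a countable product of sequentially compact spaces is sequentially compact. The real work of the section is deferred to the lemmas that follow this one, where one has to show that $\ser{\alpha}$ remains in $\mathcal{A}$ (which requires controlling the infinite sum $\sum_{i}\alpha_{i}c_{i}/2^{i}$ under pointwise-only convergence, likely via dominated convergence using $c_{i}/2^{i}$ as the summable dominator) and that $P(\ser c,\ser{\alpha})\geq \lim_{j\to\infty} P(\ser c,\ser{\alpha}^{j})$ (which will need continuity/semicontinuity of the entropy-type functional under pointwise convergence).
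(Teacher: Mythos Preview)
Your proof is correct and follows essentially the same approach as the paper: iterated Bolzano--Weierstrass on each coordinate followed by a Cantor diagonal subsequence. Your observation that the hypothesis on $P$ is not used here is also accurate---the paper only invokes it in the subsequent lemma.
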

\begin{proof}
Consider the sequence $\mleft(\alpha_{0}^{j}\mright)_{j\in\mathbb{N}}$.
Since $\alpha_{0}^{j}\in\mleft[0,1\mright]$ for any $j$, $\mleft(\alpha_{0}^{j}\mright)_{j\in\mathbb{N}}$
must have a convergent subsequence due to Bolzano-Weiersrtrass. Denote
that subsequence by $\mleft(\alpha_{0}'^{j}\mright)_{j\in\mathbb{N}}$,
and let $\alpha_{0}=\lim_{j\rightarrow\infty}\alpha_{0}'^{j}$. Deonte
by $\ser{\alpha}^{\mleft(0\mright)}=\mleft(\ser{\alpha}^{\mleft(0\mright),j}\mright)_{j\in\mathbb{N}}$
the subsequence of $\mleft(\ser{\alpha}^{j}\mright)_{j\in\mathbb{N}}$
that is constructed from the same indices as $\mleft(\alpha_{0}'^{j}\mright)_{j\in\mathbb{N}}$.
Now, consider the sequence $\mleft(\alpha_{1}^{\mleft(0\mright),j}\mright)_{j\in\mathbb{N}}$.
This sequence as well has a subsequence which converges, say to $\alpha_{1}$.
Let $\ser{\alpha}^{\mleft(1\mright)}$ be the subsequence of $\mleft(\ser{\alpha}^{j}\mright)_{j\in\mathbb{N}}$
that is constructed from the same indices of the subsequence of $\mleft(\alpha_{1}^{\mleft(0\mright),j}\mright)_{j\in\mathbb{N}}$
which converges to $\alpha_{1}$. Note that in addition to $\alpha_{1}^{\mleft(1\mright),j}\rightarrow\alpha_{1}$,
we also have $\alpha_{0}^{\mleft(1\mright),j}\rightarrow\alpha_{0}$
since the limit of a convergent sequence equals the limit of any of
its subsequences. We can proceed in the same fashion, constructing
a sequence $\ser{\alpha}$, and for any $r$ a subsequence $\ser{\alpha}^{\mleft(r\mright)}$
of $\mleft(\ser{\alpha}^{j}\mright)_{j\in\mathbb{N}}$ such that $\alpha_{s}^{\mleft(r\mright),j}\rightarrow\alpha_{s}$
for any $s\leq r$. We take as $\mleft(\ser{\alpha}'^{j}\mright)_{j\in\mathbb{N}}$
the diagonal sequence $\mleft(\ser{\alpha}^{\mleft(j\mright),j}\mright)_{j\in\mathbb{N}}$
which converges pointwise to $\ser{\alpha}$ . 
\end{proof}
Let $\ser{\alpha}$ be the sequence guaranteed by this lemma. We will
show that the supremum is attained at $\ser{\alpha},$ that is, $P\mleft(\ser c,\ser{\alpha}\mright)=\sup_{\ser{\alpha'}\in\mathcal{A}}P\mleft(\ser c,\ser{\alpha'}\mright)$.
It remains to show:
\begin{lemma}
The sequence $\ser{\alpha}$ found by Lemma \ref{lem:convergent_subsequence}
is in $\mathcal{A}$ (that is, $\ser{\alpha}$ is feasible for $\mleft(\ser c,b\mright)$)
and $\lim_{j\rightarrow\infty}P\mleft(\ser c,\ser{\alpha}^{j}\mright)=P\mleft(\ser c,\ser{\alpha}\mright)$.
\end{lemma}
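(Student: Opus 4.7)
My plan is to verify the two claims separately, using the pointwise convergence $\alpha_i'^j \to \alpha_i$ guaranteed by Lemma~\ref{lem:convergent_subsequence} together with Tannery's theorem (the discrete analogue of dominated convergence) applied with the dominating summable sequence $c_i / 2^i$ (for the feasibility constraint) and $c_i$ (for the $P$-value).

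First, I would check feasibility. Pointwise convergence immediately gives $\alpha_i \in [0,1]$ for every $i$, since $[0,1]$ is closed. For the equality $\sum_{i=0}^{\infty} \alpha_i c_i / 2^i = 1/(\beta \cdot 2^{b+1})$, I would observe that for each $i$ we have $0 \le \alpha_i'^j c_i / 2^i \le c_i / 2^i$, and $\sum_{i=0}^{\infty} c_i/2^i = 1/(\beta \cdot 2^{b}) < \infty$, so the summable dominating sequence $(c_i/2^i)$ lets us swap limit and sum: $\sum_{i=0}^{\infty} \alpha_i c_i / 2^i = \lim_{j\to \infty} \sum_{i=0}^{\infty} \alpha_i'^j c_i / 2^i = 1/(\beta \cdot 2^{b+1})$, where the last equality holds because each $\ser{\alpha}'^j \in \mathcal{A}$. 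Hence $\ser{\alpha} \in \mathcal{A}$.

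Next, I would establish $\lim_{j \to \infty} P(\ser c, \ser{\alpha}'^j) = P(\ser c, \ser{\alpha})$, which combined with the fact that $P(\ser c, \ser{\alpha}'^j)$ still converges to the sup (being a subsequence of $P(\ser c, \ser{\alpha}^j)$) yields that the supremum is attained at $\ser{\alpha}$. Split $P$ into its two terms. For the first term, $0 \le h(\alpha_i'^j) c_i \le c_i$ since $h \le 1$ on $[0,1]$, and $\sum_i c_i \le 1 < \infty$, so again by Tannery's theorem and continuity of $h$,
\[
\lim_{j \to \infty} \sum_{i=0}^{\infty} h(\alpha_i'^j) c_i = \sum_{i=0}^{\infty} h(\alpha_i) c_i.
\]
For the second term, the same dominated-convergence argument with dominating sequence $(c_i)$ shows $\sum_i \alpha_i'^j c_i \to \sum_i \alpha_i c_i$, and since $h$ is continuous on $[0,1]$, $h(\sum_i \alpha_i'^j c_i) \to h(\sum_i \alpha_i c_i)$. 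Subtracting gives the claim.

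The only mild subtlety is ensuring we have a legitimate domination hypothesis before invoking Tannery. This is immediate for the $c_i/2^i$ bound (which is summable by the defining constraint of $\mathcal{C}$) and for the $c_i$ bound (summable since $\sum c_i \le 1$), so there is no real obstacle; the argument is essentially a routine compactness-plus-dominated-convergence computation. Once both pieces are in place, the supremum is attained at $\ser{\alpha} \in \mathcal{A}$, so it may be written as a maximum, which is precisely the well-definedness claim required for $G(\beta)$.
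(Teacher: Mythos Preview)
Your proof is correct and follows essentially the same strategy as the paper: both control the tails via the summability of $(c_i/2^i)$ and $(c_i)$ and use pointwise convergence on the remaining finite part. The only cosmetic difference is that you invoke Tannery's theorem and the continuity of $h$ directly, whereas the paper unwinds the same argument by hand with an explicit $\epsilon$--$I$--$J$ truncation and Lemma~\ref{lem:entropy_bounds}; your packaging is a bit cleaner, and your remark that one first proves convergence along the subsequence $(\ser{\alpha}'^{j})$ and then transfers it to $(\ser{\alpha}^{j})$ via the already-known convergence to the supremum is a point the paper leaves implicit.
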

\begin{proof}
Let us begin by showing $\ser{\alpha}\in\mathcal{A}$: pointwise convergence
of $\mleft(\ser{\alpha}^{j}\mright)_{j\in\mathbb{N}}$ to $\ser{\alpha}$
ensures that $\ser{\alpha}\in\mleft[0,1\mright]^{\mathbb{N}}$, since
$\ser{\alpha}^{j}\in\mleft[0,1\mright]^{\mathbb{N}}$ for any $j$.
It remains to show that $\frac{1}{\beta\cdot2^{b+1}}=\sum_{i=0}^{\infty}\alpha_{i}c_{i}/2^{i}$:
Take an arbitrary $\epsilon>0$ and find $I$ such that $\sum_{i>I}c_{i}<\epsilon/3$.
Then, find $J$ such that $\mleft|\alpha_{i}^{J}-\alpha_{i}\mright|<\epsilon/3$
for all $i\leq I$. Thus: 
\begin{align*}
\frac{1}{\beta\cdot2^{b+1}} & =\sum_{i=0}^{\infty}\alpha_{i}^{J}c_{i}/2^{i}\\
 & =\sum_{i=0}^{I}\alpha_{i}^{J}c_{i}/2^{i}\pm\epsilon/3\\
 & =\sum_{i=0}^{I}\mleft(\alpha_{i}\pm\epsilon/3\mright)c_{i}/2^{i}\pm\epsilon/3\\
 & =\sum_{i=0}^{I}\alpha_{i}c_{i}/2^{i}\pm2\epsilon/3=\sum_{i=0}^{\infty}\alpha_{i}c_{i}/2^{i}\pm\epsilon,
\end{align*}
 and so indeed $\ser{\alpha}\in\mathcal{A}$. Let us show that $\lim_{j\rightarrow\infty}P\mleft(\ser c,\ser{\alpha}^{j}\mright)=P\mleft(\ser c,\ser{\alpha}\mright)$.
For some $I\in\mathbb{N}$, denote $P^{I}\mleft(\ser{c,}\ser{\alpha}\mright)=\sum_{i=0}^{I}h\mleft(\alpha_{i}\mright)c_{i}-h\mleft(\sum_{i=0}^{I}\alpha_{i}c_{i}\mright)$.
Take an arbitrary $\epsilon>0$ and let $I$ such that $\sum_{i>I}c_{i}<\epsilon$. So: 
\begin{gather} \label{eq:sup_1}
P\mleft(\ser c,\ser{\alpha}\mright)=\sum_{i=0}^{I}h\mleft(\alpha_{i}\mright)c_{i}+\sum_{i=I+1}^{\infty}h\mleft(\alpha_{i}\mright)c_{i}-h\mleft(\sum_{i=0}^{I}\alpha_{i}c_{i}+\sum_{i=I+1}^{\infty}\alpha_{i}c_{i}\mright)=P^{I}\mleft(\ser{c,}\ser{\alpha}\mright)\pm\mleft(\epsilon+h\mleft(\epsilon\mright)\mright)
\end{gather}
 due to Lemma \ref{lem:entropy_bounds}. Now we can use an argument
similar to the one used to show feasibility of $\ser{\alpha}$: find
$J$ such that $\mleft|\alpha_{i}^{J}-\alpha_{i}\mright|<\epsilon$
for all $i\leq I$. So:
\begin{align} \label{eq:sup_2}
P^{I}\mleft(\ser{c,}\ser{\alpha}\mright) & =\sum_{i=0}^{I}h\mleft(\alpha_{i}^{J}\pm\epsilon\mright)c_{i}-h\mleft(\sum_{i=0}^{I}\mleft(\alpha_{i}^{J}\pm\epsilon\mright)c_{i}\mright)\nonumber \\
 & =\sum_{i=0}^{I}h\mleft(\alpha_{i}^{J}\mright)c_{i}\pm h\mleft(\epsilon\mright)\sum_{i=0}^{I}c_{i}-h\mleft(\sum_{i=0}^{I}\alpha_{i}^{J}c_{i}\pm\epsilon\sum_{i=0}^{I}c_{i}\mright)=P^{I}\mleft(\ser{c,}\ser{\alpha^{J}}\mright)\pm2h\mleft(\epsilon\mright).
\end{align}
 And so: 
\begin{gather*}
P\mleft(\ser c,\ser{\alpha^{J}}\mright)\underset{\mleft(*\mright)}{=}P^{I}\mleft(\ser c,\ser{\alpha^{J}}\mright)\pm\mleft(\epsilon+h\mleft(\epsilon\mright)\mright)\underset{\eqref{eq:sup_2}}{=}P^{I}\mleft(\ser{c,}\ser{\alpha}\mright)\pm\mleft(\epsilon+3h\mleft(\epsilon\mright)\mright)\underset{\eqref{eq:sup_1}}{=}P\mleft(\ser c,\ser{\alpha}\mright)\pm\mleft(2\epsilon+4h\mleft(\epsilon\mright)\mright),
\end{gather*}
 where $\mleft(*\mright)$ is since $\sum_{i>I}c_{i}<\epsilon$, similarly
to eq. (\ref{eq:sup_1}). So indeed, $\lim_{j\rightarrow\infty}P\mleft(\ser c,\ser{\alpha}^{j}\mright)=P\mleft(\ser c,\ser{\alpha}\mright)$.
\end{proof}
The following desired result is an immediate corollary:
\begin{lemma}
For any $\mleft(\ser c,b\mright)\in\mathcal{C}$ there is $\ser{\alpha}\in\mathcal{A}$
such that $\sup_{\ser{\alpha'}\in\mathcal{A}}P\mleft(\ser c,\ser{\alpha'}\mright)=P\mleft(\ser c,\ser{\alpha}\mright)$.
\end{lemma}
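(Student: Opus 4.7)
The plan is to invoke the two preceding lemmas in sequence; the result really is an immediate corollary, so the proposal is short.

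First I would fix $(\ser c,b) \in \mathcal{C}$ and note that the supremum $s := \sup_{\ser{\alpha'}\in\mathcal{A}} P(\ser c,\ser{\alpha'})$ is finite, since $0 \le \sum_i h(\alpha'_i)c_i \le \sum_i c_i \le 1$ and $0 \le h(\sum_i \alpha'_i c_i) \le 1$, so $|P(\ser c,\ser{\alpha'})| \le 1$ uniformly over $\ser{\alpha'} \in \mathcal{A}$. By the definition of supremum, I can pick a sequence $(\ser{\alpha}^{j})_{j \in \mathbb{N}}$ of members of $\mathcal{A}$ with $P(\ser c,\ser{\alpha}^{j}) \to s$ as $j \to \infty$.

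Next, I would apply Lemma \ref{lem:convergent_subsequence} to this sequence to extract a subsequence $(\ser{\alpha}'^{j})_{j \in \mathbb{N}}$ that converges pointwise to some candidate limit $\ser{\alpha} \in [0,1]^{\mathbb{N}}$. Since convergent subsequences inherit the limit of their mother sequence, we still have $P(\ser c,\ser{\alpha}'^{j}) \to s$. Then the lemma immediately preceding the corollary says two things about this pointwise limit: (i) $\ser{\alpha} \in \mathcal{A}$, i.e., $\ser{\alpha}$ is feasible for $(\ser c,b)$; and (ii) $P(\ser c,\ser{\alpha}'^{j}) \to P(\ser c,\ser{\alpha})$. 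Combining (ii) with $P(\ser c,\ser{\alpha}'^{j}) \to s$ and the uniqueness of limits in $\mathbb{R}$ gives $P(\ser c,\ser{\alpha}) = s$, which is exactly the statement that the supremum is attained at $\ser{\alpha}$.

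There is essentially no obstacle here; the two substantive steps (compactness, in the form of a diagonal Bolzano--Weierstrass argument, and continuity of $P$ under pointwise convergence via truncation plus the entropy bounds from Lemma \ref{lem:entropy_bounds}) have already been carried out in the two preceding lemmas. The only place one needs to be a little careful is the standard fact that a subsequence of a convergent real sequence shares its limit, which is why the target value of $P$ along $(\ser{\alpha}'^{j})$ remains $s$ after the extraction in Lemma \ref{lem:convergent_subsequence}. Once this is observed, the corollary follows by assembling the two lemmas, and hence the inner supremum in the definition of $G(\beta)$ can legitimately be written as a maximum.
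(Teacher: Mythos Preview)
Your proposal is correct and matches the paper's approach exactly: the paper states this lemma as ``an immediate corollary'' of the two preceding lemmas, and you have simply spelled out the assembly (pick a maximizing sequence, extract a pointwise-convergent subsequence via Lemma~\ref{lem:convergent_subsequence}, then use the subsequent lemma to get feasibility of the limit and convergence of the $P$-values). The only cosmetic difference is that the paper's second lemma is phrased in terms of the original sequence $\ser{\alpha}^{j}$ rather than the extracted subsequence, but since $P(\ser c,\ser{\alpha}^{j})\to s$ by construction, this yields the same conclusion either way.
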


\subsection{Proving our formula for $\rho_{\min}\mleft(n\mright)$}

The following bounds on $\rho_{\min}^{*}\mleft(n\mright)$ immediately
imply Theorem~\ref{theo:main}, due to Lemma \ref{lem:MRD_star}:
\begin{lemma}
\label{lem:low_bound} It holds that $\rho_{\min}^{*}\mleft(n\mright)\geq2^{G\mleft(\beta\mright)n-o\mleft(n\mright)}.$ 
\end{lemma}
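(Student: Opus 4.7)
The plan is to show that for every $k$-feasible $(\ser c, b) \in \mathcal{C}$ there exist $d \in [n]$ and $\ser{\tilde\alpha} \in S_d$ whose single contribution to the defining sum of $\rho_{\min}^*(n)$ is already at least $2^{G(\beta) n - o(n)}$. Since $(\ser c, b) \in \mathcal{C}$ is one of the pairs in the infimum defining $G(\beta)$, the attainment result of the previous subsection furnishes a maximizer $\ser\alpha^* \in \mathcal{A}(\ser c, b)$ with $P(\ser c, \ser\alpha^*) \geq G(\beta)$. The strategy is to round $\ser\alpha^*$ to a $k$-feasible $\ser{\tilde\alpha} \in S_d$, bound the rounding loss in $P$, and then apply Stirling to the resulting single term.

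For the rounding, I would set $x_i = \lfloor \alpha_i^* c_i n \rfloor$ for each $i$ in the support of $\ser c$. Multiplying the splitting equation $\sum \alpha_i^* c_i / 2^i = 1/(\beta \cdot 2^{b+1})$ by $n \cdot 2^t$ shows that the integer sum $\sum x_i \, 2^{t-i}$ undershoots the integer target $2^{k-b-1+t}$ by a nonnegative integer $\delta < 2^{t+1}$. Writing $\delta = \sum_{j=0}^t e_j \, 2^j$ in binary and adding $e_j$ to $x_{t-j}$ restores the equation exactly while perturbing each $x_i$ by at most $1$. Boundary issues---an $x_i$ being forced above $c_i n$, or the resulting $\tilde\alpha_t$ equaling $1$ (which would place $\ser{\tilde\alpha}$ in $T_d$ rather than $S_d$)---are handled by a small continuous perturbation of $\ser\alpha^*$ prior to rounding; since $P$ is continuous on the feasible polytope, this costs only $o(1)$ in $P$. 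Setting $\tilde\alpha_i = x_i / (c_i n)$ and $d = \sum_i x_i$ yields $\ser{\tilde\alpha} \in S_d$.

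For the main estimate, Stirling gives $\binom{c_i n}{x_i} \geq 2^{c_i n\, h(\tilde\alpha_i) - O(\log(c_i n + 1))}$ and $\binom{n}{d} \leq 2^{n h(d/n) + O(\log n)}$, so the single term is at least $2^{n P(\ser c, \ser{\tilde\alpha}) - O((t+1)\log n)}$. To control the rounding loss in $P$, note that $|\tilde\alpha_i - \alpha_i^*| \leq 1/(c_i n)$; applying Lemma~\ref{lem:entropy_bounds} coordinatewise yields $c_i\,|h(\tilde\alpha_i) - h(\alpha_i^*)| \leq c_i\, h(1/(c_i n)) = O(\log(c_i n+1)/n)$ whenever $c_i n \geq 2$, and an analogous bound holds for the global term $h(\sum \alpha_i c_i)$. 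Summing, $P(\ser c, \ser{\tilde\alpha}) \geq P(\ser c, \ser\alpha^*) - O(t \log n / n)$, so the single term is at least $2^{n G(\beta) - O(t \log n)}$.

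The main obstacle is the support size $t$, which in principle can be as large as $\Theta(n)$ when $\ser c$ has many indices with $c_i n = O(1)$. In that regime both the Stirling error $O(t \log n)$ and the rounding losses at small indices can exceed $o(n)$. To address this, I would argue that for such spread-out $\ser c$ the maximizer $\ser\alpha^*$ can be chosen so that $\alpha_i^* \in \{0, 1\}$ at all indices with $c_i n = O(1)$ (exploiting that $c_i h(\alpha_i)$ contributes $O(1/n)$ per such index, so these coordinates barely affect $P$), thereby eliminating the rounding loss there; alternatively, one may compress the many small blocks via the tail construction of Section~\ref{sec:Preliminaries}, which effectively reduces $t$ to $O(\log n)$ at sub-exponential cost. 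Verifying that either reduction preserves the exact splitting constraint and the $\tilde\alpha_t < 1$ requirement is the most delicate part of the argument.
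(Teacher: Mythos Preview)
Your overall strategy---take the maximizer $\ser\alpha^*\in\mathcal{A}(\ser c,b)$, round it to a $k$-feasible sequence in some $S_d$, and bound the single resulting term via Stirling---is the same shape as the paper's argument. You also correctly identify the real obstruction: when $t$ (the last nonzero index of $\ser c$, or equivalently the number of distinct probability levels) is $\Theta(n)$, both your Stirling error $O(t\log n)$ and your rounding loss $O(t\log n/n)$ in $P$ fail to be $o(n)$ and $o(1)$ respectively. This is not a hypothetical: for the distribution $\mu=(2^{-1},2^{-2},\ldots,2^{-(n-1)},2^{-(n-1)})$ one has $c_i n=1$ for $\Theta(n)$ indices, and your per-coordinate bound $c_i\,h(1/(c_i n))$ degenerates there.

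The gap is that neither of your proposed fixes is substantiated. There is no reason the maximizer $\ser\alpha^*$ should satisfy $\alpha_i^*\in\{0,1\}$ at indices with small $c_i n$: those coordinates are constrained only through the single linear splitting equation, and the objective $\sum_i c_i h(\alpha_i)-h(\sum_i\alpha_ic_i)$ generically prefers interior values of $\alpha_i$. The ``compress small blocks via the tail construction'' idea is too vague to evaluate; you yourself flag that preserving the exact splitting constraint and the $\tilde\alpha_t<1$ condition is unresolved.

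The paper's resolution is different and cleaner: rather than rounding the full maximizer, it restricts $\ser\alpha$ a priori to $\mathcal{A}^{\leq k}$, i.e.\ sequences with $\alpha_i=0$ for all $i>k$. Lemma~\ref{lem:non-empty} (via the ``useful lemma'' of \cite{DFGM2019}) guarantees that a $k$-feasible $\ser\alpha\in\mathcal{A}^{\leq k}$ always exists, because the top $k{+}1$ probability levels already carry mass exceeding $1/2$. With this restriction the product $\prod_i\binom{c_in}{\alpha_ic_in}$ has at most $k{+}1=\Theta(\log n)$ nontrivial factors, so the Stirling error is $O(\sqrt{n}^{\,k})=2^{O(\log^2 n)}=2^{o(n)}$ regardless of how spread out $\ser c$ is. The remaining work (Lemma~\ref{lem:low_bound_1}, proved through Lemma~\ref{lem:low_bound_2}) is to show that the restricted maximum over $k$-feasible $\ser\alpha\in\mathcal{A}^{\leq k}$ approaches the unrestricted maximum over $\mathcal{A}$; this is where the delicate rounding actually happens, but now only at $O(\log n)$ coordinates. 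The key idea you are missing is that one should truncate the \emph{support of $\ser\alpha$}, not try to control every coordinate in the support of $\ser c$.
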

\begin{lemma}
\label{lem:upper_bound} It holds that $\rho_{\min}^{*}\mleft(n\mright)\leq2^{G\mleft(\beta\mright)n+o\mleft(n\mright)}$. 
\end{lemma}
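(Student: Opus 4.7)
The plan is to construct a $k$-feasible pair $(\ser{c}', b)$ whose value in the definition of $\rho_{\min}^{*}(n)$ is at most $2^{G(\beta) n + o(n)}$, and then estimate the inner sum term-by-term using Stirling's formula. Fix $\epsilon > 0$. Since $G(\beta) = \inf_{(\ser{c},b)\in\mathcal{C}} \max_{\ser{\alpha}\in\mathcal{A}} P(\ser{c},\ser{\alpha})$, choose $(\ser{c}^{*}, b) \in \mathcal{C}$ with $\max_{\ser{\alpha} \in \mathcal{A}(\ser{c}^{*}, b)} P(\ser{c}^{*}, \ser{\alpha}) < G(\beta) + \epsilon$. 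In general $\ser{c}^{*}$ may have infinite support; I would first truncate it at an index $T$ that grows slowly with $n$ (say $T = \lceil \log_{2} n \rceil$), absorbing the tail mass $\sum_{i > T} c_{i}^{*}/2^{i}$ and $\sum_{i > T} c_{i}^{*}$ into a single entry in a way that preserves the two constraints defining $\mathcal{C}$. I would then round each $c_{i}'$ to a multiple of $1/n$, adjusting the residues using a fixed entry to preserve the sum constraints, so that the resulting $(\ser{c}', b)$ is $k$-feasible.

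Next I would verify, using the continuity of $P$ built from Lemma~\ref{lem:entropy_bounds}, that
\[
\max_{\ser{\alpha} \in \mathcal{A}(\ser{c}', b)} P(\ser{c}', \ser{\alpha}) \leq G(\beta) + 2\epsilon
\]
for $n$ large. The idea is that any $\ser{\alpha}' \in \mathcal{A}(\ser{c}', b)$ can be matched to a nearby $\ser{\alpha} \in \mathcal{A}(\ser{c}^{*}, b)$ by adjusting coordinates to satisfy the shifted constraint, and both entropy terms in $P$ move by $o(1)$ under such small pointwise perturbations. This is the delicate step because both the constraint set $\mathcal{A}$ and the objective $P$ shift under the rounding.

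Having fixed $(\ser{c}', b)$, I would estimate each summand of the inner sum in $\rho_{\min}^{*}(n)$ using the bounds $\binom{c_{i}'n}{\alpha_{i} c_{i}' n} = 2^{h(\alpha_{i}) c_{i}' n \pm O(\log n)}$ and $\binom{n}{d} = 2^{h(d/n) n \pm O(\log n)}$. Since $d/n = \sum_{i=0}^{T} \alpha_{i} c_{i}'$ for $\ser{\alpha} \in S_{d}$, each summand equals $2^{P(\ser{c}', \ser{\alpha}) n \pm O(T \log n)} \leq 2^{(G(\beta) + 2\epsilon) n + o(n)}$. Moreover, the number of sequences in $S_{d}$ is at most $\prod_{i=0}^{T} (c_{i}' n + 1) \leq (n + 1)^{T + 1} = 2^{o(n)}$, so the entire sum is also bounded by $2^{(G(\beta) + 2\epsilon) n + o(n)}$. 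Taking the maximum over $d$ and letting $\epsilon \to 0$ slowly with $n$ yields the claimed bound.

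The main obstacle is the truncation-and-rounding step: perturbing $\ser{c}^{*}$ to obtain a $k$-feasible, finitely-supported sequence can in principle change the maximum of $P$, because it shifts both the integrand and the feasible set $\mathcal{A}$ simultaneously. Carrying out the rounding while preserving both $\sum c_{i}/2^{i} = 1/(\beta 2^{b})$ and $\sum c_{i} \leq 1$ requires careful bookkeeping, and one must choose $T$ as a function of $n$ that is large enough to make the truncated tail mass negligible, yet small enough that the error term $T \log n$ is $o(n)$. This is the place where the ``idealized limit'' formula $G(\beta)$ meets the finite-$k$ combinatorics, and is the main source of the technical difficulty alluded to in the introduction.
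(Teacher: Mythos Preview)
Your overall plan coincides with the paper's: restrict the outer minimum to a finitely supported, $k$-feasible $(\ser{c}',b)$ close to a near-optimizer of $G(\beta)$, then bound the inner sum by (number of terms) $\times$ (largest term) via Stirling and the count $(n+1)^{T+1}=2^{o(n)}$. That is exactly the content of Lemmas~\ref{lem:upper_bound_1} and~\ref{lem:redundant_sum}.

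The genuine gap is in the rounding you propose. Rounding each $c_i$ to a multiple of $1/n$ and then ``adjusting one fixed entry'' cannot in general restore the constraint $\sum_i c_i/2^i = 1/(\beta 2^b)$. Writing $M_i=c_i' n$, the constraint reads $\sum_i M_i 2^{-i}=2^{k-b}$; after flooring the $M_i$ the residual $\sum_i \{c_i^{*} n\}2^{-i}$ is typically irrational, whereas any correction achievable by integer shifts of the $M_i$'s is a dyadic rational with denominator $2^{T}$, so no such adjustment can close the gap exactly. The paper avoids this by rounding each $c_i$ not to a multiple of $1/n$ but to a number of the form $l_i/(\beta\cdot 2^{t_i})$ with $t_i$ chosen freely; then the constraint becomes a dyadic identity and the compensating entry $c_0$ is automatically of the same form (Lemma~\ref{lem:upper_bound_2}). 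The price is that the resulting sequence is only $K'$-feasible for $K'=\max_i t_i$, possibly much larger than the given $k$; this is harmless because $K'$-feasibility implies $k$-feasibility for every $k\ge K'$, which is all the asymptotic statement needs. A similar subtlety lurks in your continuity step for $\ser{\alpha}$: to push a feasible $\ser{\tilde\alpha}\in\mathcal{A}(\ser{c}',b)$ back to $\mathcal{A}(\ser{c}^{*},b)$ while staying in $[0,1]^{\mathbb N}$ one needs a coordinate $s$ with both $c_s$ and $\alpha_s$ uniformly bounded away from the endpoints; the paper isolates this as Lemma~\ref{lem:s_index}.
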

In the following subsections we prove those bounds. 

\subsubsection{Lower bounding $\rho_{\min}^{*}\mleft(n\mright)$}

Recall that if a pair $\mleft(\ser c,b\mright)\in\mathcal{C}$ satisfies
$c_{i}n\in\mathbb{N}$ for all $i$, we say that $\ser c$ is\emph{
$k$-}feasible. Similarly, if a sequence $\ser{\alpha}\in\mathcal{A}$
satisfies $\alpha_{i}c_{i}n\in\mathbb{N}$ for all $i$, and $\alpha_{t}<1$ , we say that $\ser{\alpha}$ is $k$-feasible.
Note that for a fixed $\mleft(\ser c,b\mright)\in\mathcal{C}$ which
is $k$-feasible, by our definitions: 
\[
\mleft\{ \ser{\alpha}\in\mathcal{A}\colon\ser{\alpha}\text{ is \ensuremath{k\text{-feasible}}}\mright\} =\bigcup_{d\in\mleft[n\mright]}S_{d}.
\]
 We will use that connection throughout the proof, when linking between
$\rho_{\min}^{*}\mleft(n\mright)$, which uses the set $S_{d}$ for
some optimal $d$, and $G\mleft(\beta\mright)$ which uses the set $\mathcal{A}$.
For a set of sequences $\mathcal{S}\subseteq\mleft[0,1\mright]^{\mathbb{N}}$,
let 
\[
\mathcal{S}^{\leq l}=\mleft\{ \ser s\in\mathcal{S}\colon i>l\implies s_{i}=0\mright\} .
\]
 Lemma \ref{lem:low_bound} can be inferred from the following two
lemmas: 
\begin{lemma}
\label{lem:non-empty} If $\mleft(\ser c,b\mright)\in\mathcal{C}$ and $\ser c$ is $k$-feasible, then there is $\ser{\alpha}\in\mathcal{A}^{\leq k}$
which is $k$-feasible.
\end{lemma}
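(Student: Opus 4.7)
My plan is to construct $\ser{\alpha}$ by a top-down greedy procedure: include every element at levels $0, 1, \dots, \ell - 1$ in full, and then tune a single cutoff level $\ell$ so that the accumulated mass hits $1/2$ exactly. Writing $M_{\ell} := \sum_{i=0}^{\ell} c_{i} n \cdot 2^{b - k - i}$ for the total mass of levels $\leq \ell$ (counting the tail-slot at level $t$), the first thing to check is that such a cutoff lies in $\{0, 1, \dots, k\}$. Since $\ser c$ represents a non-constant dyadic distribution with maximum probability $\mu_{1} = 2^{b - k}$, we have $b \leq k - 1$. Using $c_{i} \in [0, 1]$ and $\sum_{i} c_{i} \leq 1$, each term $c_{i}/2^{i}$ with $i > k$ is bounded by $c_{i}/2^{k+1}$, so $\sum_{i > k} c_{i}/2^{i} \leq 2^{-(k+1)}$, and hence
\[
M_{k} \;=\; 1 - \beta \cdot 2^{b} \cdot \sum_{i > k} c_{i}/2^{i} \;\geq\; 1 - \beta \cdot 2^{b - k - 1} \;>\; \tfrac{1}{2},
\]
where the last inequality uses $\beta < 2$ and $k - b \geq 1$. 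Let $\ell := \min\{i : M_{i} \geq 1/2\}$; then $\ell \leq k$, and also $\ell \leq t$ since $M_{t} = 1$.

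Next I would set $X := 2^{k + \ell - b}\,(1/2 - M_{\ell - 1}) = 2^{k + \ell - b - 1} - \sum_{i < \ell} c_{i} n \cdot 2^{\ell - i}$, define $\alpha_{i} := 1$ for $i < \ell$, $\alpha_{\ell} := X/(c_{\ell} n)$, and $\alpha_{i} := 0$ for $i > \ell$, and verify the required properties. The identity $\sum_{i} \alpha_{i} c_{i}/2^{i} = 1/(\beta \cdot 2^{b+1})$ reduces via a one-line substitution to $M_{\ell - 1}/(\beta \cdot 2^{b}) + X/(n \cdot 2^{\ell}) = 1/(\beta \cdot 2^{b+1})$, so $\ser{\alpha} \in \mathcal{A}$. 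The quantity $X$ is a non-negative integer because both exponents $k + \ell - b - 1$ and $\ell - i$ (for $i < \ell$) are non-negative under $b \leq k - 1$, and $M_{\ell - 1} < 1/2$ by minimality of $\ell$; the bound $X \leq c_{\ell} n$ is equivalent to $M_{\ell} \geq 1/2$, the very definition of $\ell$. For $i \neq \ell$, $\alpha_{i} c_{i} n \in \{0, c_{i} n\} \subset \mathbb{N}$ by $k$-feasibility of $\ser c$. It remains to check $\alpha_{t} < 1$: if $\ell < t$ this is trivial ($\alpha_{t} = 0$); if $\ell = t$ then $M_{t} = 1 > 1/2$ strictly, so the inequality $X \leq c_{t} n$ is strict, yielding $\alpha_{t} = X/(c_{t} n) < 1$. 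Finally, since $\alpha_{i} = 0$ for $i > \ell$ and $\ell \leq k$, we get $\ser{\alpha} \in \mathcal{A}^{\leq k}$.

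The delicate step, in my view, is the first one — squeezing enough from $\beta < 2$ and the non-constancy assumption $b \leq k - 1$ to guarantee $M_{k} \geq 1/2$. Everything after that is forced: once the cutoff level $\ell$ is identified, the rest reduces to algebraic bookkeeping confirming integrality, the splitting identity, and the strict inequality $\alpha_{t} < 1$ in the boundary case $\ell = t$.
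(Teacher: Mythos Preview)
Your proof is correct and follows essentially the same route as the paper's. Both arguments hinge on the inequality $M_k>1/2$ (in the paper's notation, $\sum_{i=0}^{k} c_i\,2^{b-i}\beta>1/2$), derived from $b\leq k-1$, $\beta<2$, and $\sum_i c_i\leq 1$; the paper then invokes the ``useful lemma'' (Lemma~4.1 of \cite{DFGM2019}) to produce the splitting set, whereas you carry out that greedy prefix construction explicitly, which makes your argument self-contained.
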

The purpose of this lemma is to allow us to use the estimate 
\begin{equation}
2^{h\mleft(\lambda\mright)n}/O\mleft(\sqrt{n}\mright)\leq\binom{n}{\lambda n}\leq2^{h\mleft(\lambda\mright)n}\label{eq:binom_bounds}
\end{equation}
 for $0 \leq \lambda \leq 1$ (the lower bound is due to \cite{14476}, for example) while proving
Lemma \ref{lem:low_bound}, in a sufficiently efficient fashion. 
\begin{lemma}
\label{lem:low_bound_1} Fix $\mleft(\ser c,b\mright)\in\mathcal{C}$.
Then: 
\[
\lim_{k\rightarrow\infty}\max_{\substack{
\ser{\alpha}\in\mathcal{A}^{\leq k}\colon\\
\ser{\alpha}\text{ is k-feasible}
}}P\mleft(\ser c,\ser{\alpha}\mright)=\max_{\ser{\alpha}\in\mathcal{A}}P\mleft(\ser c,\ser{\alpha}\mright).
\]
 
\end{lemma}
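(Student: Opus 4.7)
One direction, $\max_{\ser\alpha \in \mathcal{A}^{\leq k},\, k\text{-feasible}} P(\ser c, \ser\alpha) \leq \max_{\ser\alpha \in \mathcal{A}} P(\ser c, \ser\alpha)$, holds for every $k$ because every $k$-feasible $\ser\alpha \in \mathcal{A}^{\leq k}$ already lies in $\mathcal{A}$. The content of the lemma is the matching lower bound on the $\liminf$, which I would obtain by exhibiting, for every sufficiently large $k$, a $k$-feasible $\ser\alpha^{(k)} \in \mathcal{A}^{\leq k}$ whose value converges to $P(\ser c, \ser\alpha^*)$, where $\ser\alpha^* \in \mathcal{A}$ is the maximizer supplied by the preceding lemma. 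As a preliminary move, I would perturb $\ser\alpha^*$ by a vanishing amount $\epsilon_k \to 0$ so that every coordinate in $\operatorname{supp}(\ser c)$ sits in $[\epsilon_k, 1-\epsilon_k]$, rebalancing via one coordinate to preserve membership in $\mathcal{A}$; by Lemma~\ref{lem:entropy_bounds} this costs only $o(1)$ in $P$.

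The construction then has three steps. (\emph{Truncation.}) Set $\tilde\alpha_i = \alpha_i^*$ for $i \leq k$ and $\tilde\alpha_i = 0$ for $i > k$; this decreases $\sum \alpha_i c_i / 2^i$ by some $\delta_k$ with $0 \leq \delta_k \leq \sum_{i > k} c_i / 2^i \leq 2^{-k}$. (\emph{Constraint fix.}) Raise $\tilde\alpha_0$ by $\delta_k / c_0 = O(2^{-k})$ to restore the exact constraint, which is admissible because $\alpha_0^* \leq 1-\epsilon_k$. (\emph{Integer rounding.}) Writing $m_i = \tilde\alpha_i c_i n$, the constraint takes the integer form $\sum_{i=0}^{k} m_i \cdot 2^{k-i} = 2^{2k-b-1}$, which is a positive integer once $k \geq b+1$. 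Floor each $m_i$; the resulting shortfall $E$ is a nonnegative integer less than $2^{k+1}$. Restore the constraint by writing $E$ in binary relative to the weights $\{2^{k-i}\}_{i=0}^{k}$ and incrementing each corresponding $m_i$ by one. The increments are legal because the $\epsilon_k$-perturbation provides slack $\lfloor c_i n\rfloor - m_i \geq 1$ at every coordinate in $\operatorname{supp}(\ser c)$, and if necessary the correction can be spread out to preserve the strict inequality $\alpha_t^{(k)} < 1$ at the last nonzero index $t$ of $\ser c$. The outcome $\ser\alpha^{(k)}$ lies in $\mathcal{A}^{\leq k}$ and is $k$-feasible.

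Convergence $P(\ser c, \ser\alpha^{(k)}) \to P(\ser c, \ser\alpha^*)$ then follows from the continuity machinery already developed for the preceding lemma. Coordinatewise $|\alpha_i^{(k)} - \alpha_i^*| \leq \epsilon_k + O(2^{-k}) + O(1/(c_i n))$, which tends to zero uniformly on $\{0, \ldots, t\}$, while the truncated tail $\sum_{i>k} c_i$ contributes negligibly to both $\sum h(\alpha_i) c_i$ and the argument of the outer $h$. Applying Lemma~\ref{lem:entropy_bounds} coordinatewise, exactly as in \eqref{eq:sup_1} and \eqref{eq:sup_2}, bounds the difference $P(\ser c, \ser\alpha^{(k)}) - P(\ser c, \ser\alpha^*)$ by an expression that vanishes as $k \to \infty$. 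Letting $k \to \infty$ yields the reverse inequality and hence the lemma.

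The main obstacle is the \emph{exact} rounding in step~(iii): approximate rounding is trivial, but landing exactly on the target $2^{2k-b-1}$ requires the dyadic structure of the weights, which is precisely what the arithmetic coincidence $n = \beta \cdot 2^k$ delivers. The $\epsilon_k$-perturbation rate has to be chosen carefully so that it simultaneously (a) provides slack $\geq 1$ at every coordinate used for correction, (b) keeps $\alpha_t^{(k)} < 1$ after the corrections, and (c) has $P$-cost $o(1)$; picking, say, $\epsilon_k = 1/\sqrt{n}$ meets all three requirements and is where the delicate balance of the argument lies.
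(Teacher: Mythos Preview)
Your overall strategy---truncate, restore the constraint, round to integers---matches the paper's, and both the easy direction and the continuity estimates via Lemma~\ref{lem:entropy_bounds} are handled correctly. There is, however, a genuine gap in the integer-rounding step~(iii).

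You write the shortfall $E$ in binary as $E=\sum_{i=0}^{k} e_i\,2^{k-i}$ with $e_i\in\{0,1\}$ and then increment each $m_i$ by $e_i$. But nothing guarantees that the binary digits of $E$ are supported on indices where an increment is possible. Two things can go wrong. First, $e_i=1$ may occur at an index $i\notin\operatorname{supp}(\ser c)$, where $m_i$ is forced to remain $0$; a concrete instance is $\beta=1$, $b=1$, $c_0=1/3$, $c_2=2/3$ (all other $c_i=0$), $\ser\alpha^{*}\equiv 1/2$, $k=4$: one computes $E=12=2^{k-1}+2^{k-2}$, which your scheme would try to fix by incrementing at $i=1$ where $c_1=0$. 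Second, even for $i\in\operatorname{supp}(\ser c)$, your slack estimate $\lfloor c_i n\rfloor-m_i\ge \epsilon_k c_i n$ only yields a unit of room when $c_i\ge 1/(\epsilon_k n)=1/\sqrt{n}$; this fails at infinitely many indices whenever $\operatorname{supp}(\ser c)$ is infinite, so the claimed slack ``at every coordinate in $\operatorname{supp}(\ser c)$'' is simply false. The hedge ``if necessary the correction can be spread out'' does not address either failure.

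The paper sidesteps this in two ways. First, Lemma~\ref{lem:s_index} supplies a single index $s\le 2^{b+4}$ with $c_s>1/2^{2(b+5)}$ (a constant depending only on $b$) and $\alpha_s<3/4$, and \emph{all} rounding corrections are routed through~$s$; the fact that the resulting $\tilde\alpha_s$ again has the required rational form $l_s/(c_s\beta 2^{t_s})$ is obtained by an algebraic identity from the constraint, not by binary expansion. Second---and this is the structural simplification you are missing---the paper does not build a $k$-feasible sequence for every $k$. It builds one $K'$-feasible $\tilde{\ser\alpha}\in\mathcal{A}^{\le K'}$ (with $K'$ allowed to depend on $\epsilon$ and $\ser c$) and then uses that $K'$-feasibility implies $k$-feasibility for every $k\ge K'$, since $\tilde\alpha_i c_i\beta 2^{K'}\in\mathbb{N}$ forces $\tilde\alpha_i c_i\beta 2^{k}\in\mathbb{N}$. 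This is exactly Lemma~\ref{lem:low_bound_2}, from which the present lemma follows in two lines. Your argument can be repaired along these lines, but as written the rounding step does not go through.
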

For large values of $k$, Lemma \ref{lem:low_bound_1} allows us to
remove the $k$-feasibility and $i>k\implies\alpha_{i}=0$ constraints
on $\ser{\alpha}$ without changing much the value of $P\mleft(\ser c,\ser{\alpha}\mright)$.
Having Lemmas \ref{lem:non-empty}--\ref{lem:low_bound_1} in hand
and using the estimate (\ref{eq:binom_bounds}), Lemma \ref{lem:low_bound}
can be proved: 
\begin{proof}
[Proof of Lemma \ref{lem:low_bound}] Let $n=\beta\cdot2^{k}$ and
$\mleft(\ser c,b\mright)\in\mathcal{C}$ which is $k$-feasible. So:
\begin{align*}
\max_{\substack{
d\in\mleft[n\mright]\colon\\
S_{d}\neq\emptyset
}}\sum_{\ser{\alpha}\in S_{d}}\frac{\prod_{i=0}^{\infty}\binom{c_{i}n}{\alpha_{i}c_{i}n}}{\binom{n}{d}} & \underset{\text{Lem \ref{lem:non-empty}}}{\geq}\max_{\substack{
\ser{\alpha}\in\mathcal{A}^{\leq k}\colon\\
\ser{\alpha}\text{ is \ensuremath{k}-feasible}
}}\frac{\prod_{i=0}^{\infty}\binom{c_{i}n}{\alpha_{i}c_{i}n}}{\binom{n}{\sum_{i=0}^{\infty}\alpha_{i}c_{i}n}}\\
 & \underset{\text{\eqref{eq:binom_bounds}}}{\geq}\max_{\substack{
\ser{\alpha}\in\mathcal{A}^{\leq k}\colon\\
\ser{\alpha}\text{ is \ensuremath{k}-feasible}
}}\frac{\exp_{2}\mleft(\sum_{i=0}^{\infty}h\mleft(\alpha_{i}\mright)c_{i}n\mright)/O\mleft(\sqrt{n}^{k}\mright)}{\exp_{2}\mleft(h\mleft(\sum_{i=0}^{\infty}\alpha_{i}c_{i}\mright)n\mright)}\\
 & =\exp_{2}\max_{\substack{
\ser{\alpha}\in\mathcal{A}^{\leq k}\colon\\
\ser{\alpha}\text{ is \ensuremath{k}-feasible}
}}P\mleft(\ser c,\ser{\alpha}\mright)n-o\mleft(n\mright)\\
& \underset{\text{Lem \ensuremath{\ref{lem:low_bound_1}}}}{\geq}\exp_{2}\mleft(\max_{\ser{\alpha}\in\mathcal{A}}P\mleft(\ser c,\ser{\alpha}\mright)n-o\mleft(n\mright)\mright).
\end{align*}
 Let $\mleft(\ser c,b\mright)\in\mathcal{C}$ which is $k$-feasible
and 
\[
\min_{\substack{
\mleft(\ser{c'},b\mright)\in\mathcal{C}\colon\\
\ser{c'}\text{ is \ensuremath{k}-feasible}
}}\max_{\substack{
d\in\mleft[n\mright]\colon\\
S_{d}\neq\emptyset
}}\sum_{\ser{\alpha}\in S_{d}}\frac{\prod_{i=0}^{\infty}\binom{c'_{i}n }{\alpha_{i}c'_{i}n}}{\binom{n}{d}}=\max_{\substack{
d\in\mleft[n\mright]\colon\\
S_{d}\neq\emptyset
}}\sum_{\ser{\alpha}\in S_{d}}\frac{\prod_{i=0}^{\infty}\binom{c_{i}n }{\alpha_{i}c_{i}n}}{\binom{n}{d}},
\]
and deduce: 
\begin{align*}
\rho_{\min}^{*}\mleft(n\mright) & =\min_{\substack{
\mleft(\ser{c'},b\mright)\in\mathcal{C}\colon\\
\ser{c'}\text{ is \ensuremath{k}-feasible}
}}\max_{\substack{
d\in\mleft[n\mright]\colon\\
S_{d}\neq\emptyset
}}\sum_{\ser{\alpha}\in S_{d}}\frac{\prod_{i=0}^{\infty}\binom{c'_{i}n}{\alpha_{i}c'_{i}n}}{\binom{n}{d}}\\
 & =\max_{\substack{
d\in\mleft[n\mright]\colon\\
S_{d}\neq\emptyset
}}\sum_{\ser{\alpha}\in S_{d}}\frac{\prod_{i=0}^{\infty}\binom{c_{i}n}{\alpha_{i}c_{i}n}}{\binom{n}{d}}\\
 & \geq\exp_{2}\mleft(\max_{\ser{\alpha}\in\mathcal{A}}P\mleft(\ser c,\ser{\alpha}\mright)n-o\mleft(n\mright)\mright)\\
 & \geq\inf_{\substack{
\mleft(\ser{c'},b\mright)\in\mathcal{C}}}\exp_{2}\mleft(\max_{\ser{\alpha}\in\mathcal{A}}P\mleft(\ser{c'},\ser{\alpha}\mright)n-o\mleft(n\mright)\mright)=2^{G\mleft(\beta\mright)n-o\mleft(n\mright)}.\qedhere
\end{align*}
\end{proof}
Now we shall prove Lemmas \ref{lem:non-empty}--\ref{lem:low_bound_1}:
\begin{proof}
[Proof of Lemma \ref{lem:non-empty}] Let $\mleft(\ser c,b\mright)\in\mathcal{C}$
and assume $\ser c$ is $k$-feasible, then we have: 
\[
\sum_{i=k+1}^{\infty}c_{i}\cdot2^{b-i}\cdot\beta\underset{\mleft(*\mright)}{<}\sum_{i=k+1}^{\infty}c_{i}\cdot2^{k-1-\mleft(k+1\mright)}\cdot2=2^{-1}\cdot\sum_{i=k+1}^{\infty}c_{i}\leq1/2
\]
 where $\mleft(*\mright)$ is since $b\leq k-1$ (otherwise $\mleft(\ser c,b\mright)$
represents a constant dyadic distribution), $i\geq k+1$, and $\beta<2$.
Since $\sum_{i=0}^{\infty}c_{i}\cdot2^{b-i}\cdot\beta=1$, we deduce
that $\sum_{i=0}^{k}c_{i}\cdot2^{b-i}\cdot\beta>1/2$. Thus, by Lemma
4.1 in \cite{DFGM2019} (called there ``A useful lemma''), we know
that there is a splitting set of the dyadic distribution represented
by $\mleft(\ser c,b\mright)$ containing only elements with probabilities
$\mu_{1},\mu_{1}/2,\dots,\mu_{1}/2^{k}$. The same lemma also implies that $\alpha_t<1$. That is, there is $\ser{\alpha}\in\mathcal{A}^{\leq k}$ which is $k$-feasible.
\end{proof}
The proof of Lemma \ref{lem:low_bound_1} will require the following:
\begin{lemma}
\label{lem:low_bound_2} Fix $\mleft(\ser c,b\mright)\in\mathcal{C}$,
$\epsilon>0$ and $\ser{\alpha}\in\mathcal{A}$. There are $K\in\mathbb{N}$
and $\ser{\tilde{\alpha}}\in\mathcal{A}^{\leq K}$, where $\ser{\tilde{\alpha}}$
is $K$-feasible and satisfies $P\mleft(\ser c,\ser{\alpha}\mright)=P\mleft(\ser c,\ser{\tilde{\alpha}}\mright)\pm\epsilon$. 
\end{lemma}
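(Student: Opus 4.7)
The plan is to build $\ser{\tilde\alpha}$ from $\ser\alpha$ in two stages: first, truncate the support to a finite initial segment $\{0,\dots,L\}$ while restoring the sum constraint $\sum_i \alpha_i c_i / 2^i = 1/(\beta \cdot 2^{b+1})$ by adjusting a single coordinate, and second, take $K$ much larger than $L$ (so that $n = \beta \cdot 2^K$ is huge) and round each $\alpha'_i c_i n$ to an integer, using one ``pivot'' coordinate to absorb the total roundoff while preserving both the aggregate sum and the per-coordinate integrality requirement.

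For the truncation stage, the tail $T_L := \sum_{i > L} \alpha_i c_i / 2^i$ converges to $0$. Because $\sum_i c_i / 2^i = 1/(\beta \cdot 2^b)$ strictly exceeds $1/(\beta \cdot 2^{b+1})$, there must exist some $i^*$ with $c_{i^*} > 0$ and $\alpha_{i^*} < 1$. I would set $\alpha'_i = \alpha_i$ for $i \leq L$, $i \neq i^*$; $\alpha'_i = 0$ for $i > L$; and $\alpha'_{i^*} = \alpha_{i^*} + 2^{i^*} T_L / c_{i^*}$. For $L$ large this keeps $\alpha'_{i^*} \in [0, 1]$ and places $\ser{\alpha'} \in \mathcal{A}^{\leq L}$. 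A continuity argument essentially identical to the one used to show $G$ is well-defined (applying Lemma~\ref{lem:entropy_bounds} both to the individual entropy terms and to the aggregate $\sum_i \alpha_i c_i$) then yields $P(\ser c, \ser{\alpha'}) = P(\ser c, \ser\alpha) \pm \epsilon/2$ for $L$ sufficiently large.

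For the rounding stage, let $j^* := \max\{i \leq L : c_i > 0\}$ (after, if necessary, an infinitesimal pre-perturbation of $\ser\alpha$ to guarantee $\alpha'_{j^*} < 1$, using the same existence argument as above) and pick $K$ so that $n = \beta \cdot 2^K$ greatly exceeds $2^L$. Define $\tilde m_i := \lfloor \alpha'_i c_i n \rfloor$ for $i < j^*$ with $c_i > 0$ (and $\tilde m_i := 0$ otherwise), then determine $\tilde m_{j^*}$ from the equation $\sum_i \tilde m_i / 2^i = 2^{K-b-1}$, giving
\[
\tilde m_{j^*} = 2^{K + j^* - b - 1} - \sum_{i < j^*,\, c_i > 0} \tilde m_i \cdot 2^{j^* - i},
\]
a difference of nonnegative integers. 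A direct calculation gives $\tilde m_{j^*} = \alpha'_{j^*} c_{j^*} n + O(2^L)$, which lies in $[0, c_{j^*} n)$ for $K$ large enough. Setting $\tilde\alpha_i := \tilde m_i / (c_i n)$ produces $\ser{\tilde\alpha} \in \mathcal{A}^{\leq L} \subseteq \mathcal{A}^{\leq K}$ that is $K$-feasible, with $|\tilde\alpha_i - \alpha'_i| = O(1/n)$ coordinatewise for $i \leq L$; a second application of Lemma~\ref{lem:entropy_bounds} yields $P(\ser c, \ser{\tilde\alpha}) = P(\ser c, \ser{\alpha'}) \pm \epsilon/2$, completing the proof.

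The principal obstacle is the interplay between the two integrality conditions (aggregate versus per-coordinate) together with the boundary requirements $\tilde\alpha_i \in [0,1]$ and $\tilde\alpha_t < 1$. The key idea is to decouple them by scale: letting $K$ grow while $L$ stays fixed gives the pivot coordinate $j^*$ headroom of order $\Theta(c_{j^*} n)$, which dwarfs the rounding error of order $O(2^L)$ and so absorbs all roundoff safely without spoiling either feasibility or the value of $P$.
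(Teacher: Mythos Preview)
Your two-stage approach (truncate, then round) is correct and follows the same broad strategy as the paper---zero out the tail, round to achieve integrality, and use a single pivot coordinate to restore the weighted-sum constraint, with Lemma~\ref{lem:entropy_bounds} controlling the change in $P$. The implementations differ in the pivot choice: you round on $j^{*}=\max\{i\le L:c_i>0\}$, which makes the integrality of $\tilde m_{j^{*}}$ automatic (only nonnegative powers of $2$ appear in $\tilde m_{j^{*}}=2^{K+j^{*}-b-1}-\sum_{i<j^{*}}\tilde m_i 2^{j^{*}-i}$) but forces you to arrange $\alpha'_{j^{*}}<1$ via a pre-perturbation; the paper instead invokes Lemma~\ref{lem:s_index} to obtain a pivot $s\le 2^{b+4}$ with uniform slack $c_s>2^{-2(b+5)}$ and $\alpha_s<3/4$, and then combines truncation and rounding in a single step by writing $\tilde\alpha_i=l_i/(c_i\beta 2^{t_i})$. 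Your route buys you the simplification of avoiding Lemma~\ref{lem:s_index} altogether; the paper's route buys uniform quantitative bounds on the pivot from the start, so no separate pre-perturbation is needed. One point worth writing out carefully in your version: when $\alpha_{j^{*}}=1$, the truncation pivot $i^{*}$ (which is increased) must be different from $j^{*}$, and the pre-perturbation must leave strictly positive headroom $1-\alpha'_{j^{*}}$ that dominates the positive rounding drift $\sum_{i<j^{*}}(\alpha'_i c_i n-\tilde m_i)2^{j^{*}-i}=O(L\cdot 2^{L})$; this is all arrangeable since $K$ is chosen after $L$ and the pre-perturbation, but the order of choices should be made explicit.
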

Having Lemma \ref{lem:low_bound_2}, Lemma \ref{lem:low_bound_1}
is almost immediate: 
\begin{proof}
[Proof of Lemma \ref{lem:low_bound_1}] Fix $\mleft(\ser c,b\mright)\in\mathcal{C}$
and let $\epsilon>0$. Let $\ser{\alpha}\in\mathcal{A}$ such that
$P\mleft(\ser c,\ser{\alpha}\mright)=\max_{\ser{\alpha'}\in\mathcal{A}}P\mleft(\ser c,\ser{\alpha'}\mright)$.
Let $K\in\mathbb{N}$ large enough such that Lemma \ref{lem:low_bound_2}
holds for $\mleft(\ser c,b\mright)$, $\ser{\alpha}$ and $\epsilon$.
Then for any $k\geq K$: 
\begin{gather*}
\max_{\ser{\alpha'}\in\mathcal{A}}P\mleft(\ser c,\ser{\alpha'}\mright)=P\mleft(\ser c,\ser{\alpha}\mright)\underset{\text{Lem }\ref{lem:low_bound_2}}{\leq}P\mleft(\ser c,\ser{\tilde{\alpha}}\mright)+\epsilon\leq\max_{\substack{
\ser{\alpha'}\in\mathcal{A}\\
\ser{\alpha'}\text{ is \ensuremath{k}-feasible}
}}P\mleft(\ser c,\ser{\alpha'}\mright)+\epsilon.
\end{gather*}
 Obviously,
\[
\max_{\substack{
\ser{\alpha'}\in\mathcal{A}^{\leq k}\\
\ser{\alpha'}\text{ is \ensuremath{k}-feasible}
}}P\mleft(\ser c,\ser{\alpha'}\mright)\leq\max_{\ser{\alpha'}\in\mathcal{A}}P\mleft(\ser c,\ser{\alpha'}\mright)
\]
for any large enough $k$, thus the lemma follows. It is necessary
that $k$ is large enough: Note that for the $K$ determined by Lemma
\ref{lem:low_bound_2}, for example, we can be sure that there is
$\ser{\alpha}\in\mathcal{A}^{\leq K}$ which is $K$-feasible, while
for small $k$ values there might not be such $\ser{\alpha}$, and
then the left-hand side is not well defined.
\end{proof}
It remains to prove Lemma \ref{lem:low_bound_2}, which is the main
part of the proof for the lower bound. We first explain the proof
idea, and then give the detailed proof. 
\begin{proof}
[Proof sketch of Lemma \ref{lem:low_bound_2}] Let $\mleft(\ser c,b\mright)\in\mathcal{C}$
and $\ser{\alpha}\in\mathcal{A}$. Since the binary entropy function
is sub-additive and symmetric, it holds that h$\mleft(x\pm\epsilon\mright)=h\mleft(x\mright)\pm h\mleft(\epsilon\mright)$
for $0\leq x\pm\epsilon\leq1$ (as we have shown in Lemma \ref{lem:entropy_bounds}).
Based on that, the proof idea is that if we make very small changes
in $\ser{\alpha}$, to get some other sequence which we denote $\ser{\tilde{\alpha}}$,
we can get $P\mleft(\ser c,\ser{\alpha}\mright)=P\mleft(\ser c,\ser{\tilde{\alpha}}\mright)\pm\epsilon$.
The main difficulty is to make small changes to $\ser{\alpha}$ while
ensuring that for some $K\in\mathbb{N}$, $\ser{\tilde{\alpha}}\in\mathcal{A}^{\leq K}$
and $\ser{\tilde{\alpha}}$ is $K$-feasible. We can solve that difficulty
by defining a sequence $\ser{\epsilon}$ of very small values carefully
selected, and then defining $\ser{\tilde{\alpha}}$ in the following
way: 
\[
\tilde{\alpha}_{i}=\begin{cases}
\alpha_{s}+\epsilon_{s} & i=s,\\
\alpha_{i}-\epsilon_{i} & i\neq s,
\end{cases}
\]
 where $s$ is some index chosen to make sure that $\ser{\tilde{\alpha}}\in\mleft[0,1\mright]^{\mathbb{N}}$.
The value $\tilde{\alpha}_{s}$ adds $\epsilon_{s}$ in order to
``balance'', in a way, the subtraction of $\epsilon_{i}$ in other
indices, such that we have $\sum_{i=0}^{\infty}\alpha_{i}c_{i}/2^{i}=\sum_{i=0}^{\infty}\tilde{\alpha}_{i}c_{i}/2^{i}$
and hence the constraint $\sum_{i=0}^{\infty}\tilde{\alpha}_{i}c_{i}/2^{i}=\frac{1}{\beta\cdot2^{b+1}}$
is satisfied (since the constraint $\sum_{i=0}^{\infty}\alpha_{i}c_{i}/2^{i}=\frac{1}{\beta\cdot2^{b+1}}$
is satisfied). The purpose of the addition or subtraction of the $\ser{\epsilon}$
sequence values is to ``round'' the values of $\ser{\alpha}$ to
get $\ser{\tilde{\alpha}}$ which is $K$-feasible and belongs to
$\mathcal{A}^{\leq K}$. The exact choice of the sequence $\ser{\epsilon}$
that guarantees that is described in the detailed proof. 
\end{proof}
Before we give a detailed proof of Lemma \ref{lem:low_bound_2}, we
prove a lemma which will be useful in the detailed proof, and also
later when upper bounding $\rho_{\min}^{*}\mleft(n\mright)$. Its purpose
is to ensure that when we change a sequence $\ser{\alpha}\in\mathcal{A}$
to a sequence $\ser{\tilde{\alpha}}$, we use $\ser{\tilde{\alpha}}\in\mleft[0,1\mright]^{\mathbb{N}}$.
\begin{lemma}
\label{lem:s_index}Fix $b\in\mathbb{N}$. For any $\ser c\in\mleft[0,1\mright]^{\mathbb{N}}$
such that $\sum_{i=0}^{\infty}c_{i}/2^{i}=\frac{1}{\beta\cdot2^{b}}$
for some $1\leq\beta<2$, and for any $\ser{\alpha}\in\mleft[0,1\mright]^{\mathbb{N}}$
which satisfies $\sum_{i=0}^{\infty}\alpha_{i}c_{i}/2^{i}=\frac{1}{2^{b+1}\beta}$,
there are indices $s_{1},s_{2}\leq2^{b+4}$ (possibly $s_{1}=s_{2}$)
such that $c_{s_{1}},c_{s_{2}}>1/2^{2\mleft(b+5\mright)}$ and $\alpha_{s_{1}}>1/2^{2\mleft(b+5\mright)}$,
$\alpha_{s_{2}}<3/4$. 
\end{lemma}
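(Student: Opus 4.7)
The plan is to prove both claims by a pigeonhole-style argument: I suppose no suitable index exists and derive a contradiction from the identity that defines $\ser{\alpha}\in\mathcal{A}$. The two quantities I will exploit are the given constraint $\sum_{i=0}^{\infty}\alpha_{i}c_{i}/2^{i} = 1/(\beta\cdot 2^{b+1})$ (for $s_{1}$) and its complementary form $\sum_{i=0}^{\infty}(1-\alpha_{i})c_{i}/2^{i} = 1/(\beta\cdot 2^{b}) - 1/(\beta\cdot 2^{b+1}) = 1/(\beta\cdot 2^{b+1})$ (for $s_{2}$), each of which exceeds $1/2^{b+2}$ because $\beta<2$. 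In both cases I will partition the indices into three ranges --- the ``tail'' range $i>2^{b+4}$, the indices $i\le 2^{b+4}$ with $c_{i}$ below the threshold, and the remaining indices $i\le 2^{b+4}$ where $c_{i}$ is above the threshold but (by the contradiction hypothesis) $\alpha_{i}$ is extreme --- and show that the sum over these three ranges cannot reach $1/(\beta\cdot 2^{b+1})$.

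For $s_{1}$, I assume that for every $i\le 2^{b+4}$ either $c_{i}\le 1/2^{2(b+5)}$ or $\alpha_{i}\le 1/2^{2(b+5)}$. Using $\sum_{i\ge 0}1/2^{i}\le 2$ and the identity $\sum_{i}c_{i}/2^{i}=1/(\beta\cdot 2^{b})$, the three pieces of $\sum\alpha_{i}c_{i}/2^{i}$ are bounded respectively by $2/2^{2^{b+4}}$ (tail), by $(1/2^{2(b+5)})\cdot 2=1/2^{2b+9}$ (small $c_{i}$), and by $(1/2^{2(b+5)})\cdot 1/(\beta\cdot 2^{b})\le 1/2^{3b+10}$ (small $\alpha_{i}$). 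For every $b\ge 0$ the sum of these three bounds is strictly less than $1/2^{b+2}\le 1/(\beta\cdot 2^{b+1})$, a contradiction.

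For $s_{2}$, I assume that for every $i\le 2^{b+4}$ either $c_{i}\le 1/2^{2(b+5)}$ or $\alpha_{i}\ge 3/4$, and split $\sum(1-\alpha_{i})c_{i}/2^{i}$ into the same three pieces. The first two pieces are bounded exactly as before by $2/2^{2^{b+4}}$ and $1/2^{2b+9}$. The third piece is the delicate one: on those indices $1-\alpha_{i}\le 1/4$, so the contribution is at most $(1/4)\sum_{i}c_{i}/2^{i}=1/(\beta\cdot 2^{b+2})$. Adding the three bounds, the whole sum is strictly less than $1/(\beta\cdot 2^{b+1})=2/(\beta\cdot 2^{b+2})$, which is the required contradiction.

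The only real obstacle is the bookkeeping of constants: the thresholds $i\le 2^{b+4}$ and $c_{i},\alpha_{i}>1/2^{2(b+5)}$ must be tight enough to locate a ``large'' index, yet loose enough that each bad contribution is strictly dominated by $1/(\beta\cdot 2^{b+1})$ for $s_{1}$, and by the smaller gap $1/(\beta\cdot 2^{b+1})-1/(\beta\cdot 2^{b+2})=1/(\beta\cdot 2^{b+2})$ for $s_{2}$. Once the constants are fixed as above, verifying each bound comes down to comparing exponents of $2$, and the argument works uniformly in $b$ with no compactness or limit step.
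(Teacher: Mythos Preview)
Your argument is correct and, for $s_{1}$, essentially matches the paper's proof: both define the index set $I=\{i\le 2^{b+4}\}$ and $S=\{i\in I:c_{i}\ge 1/2^{2(b+5)}\}$, assume $\alpha_{i}$ is small on $S$, and bound $\sum_{i}\alpha_{i}c_{i}/2^{i}$ below the required value by splitting into tail, $I\setminus S$, and $S$.

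For $s_{2}$ your route is genuinely different and cleaner. The paper argues directly that if $\alpha_{i}\ge 3/4$ for all $i\in S$ then $\sum_{i\in S}\alpha_{i}c_{i}/2^{i}$ is already too \emph{large}: it first lower-bounds $\sum_{i\in S}c_{i}/2^{i}\ge 1/(2^{b}\beta)-3/2^{b+5}$ by subtracting off the tail and $I\setminus S$ contributions, then multiplies by $3/4$ and pushes through a chain of arithmetic to show the result exceeds $1/(2^{b+1}\beta)$. You instead observe the symmetry $\sum_{i}(1-\alpha_{i})c_{i}/2^{i}=1/(2^{b+1}\beta)$ and rerun the $s_{1}$ argument verbatim with $1-\alpha_{i}$ in place of $\alpha_{i}$; the only change is that the ``third piece'' now contributes $\tfrac14\sum_{i}c_{i}/2^{i}=1/(\beta\cdot 2^{b+2})$, leaving a gap of $1/(\beta\cdot 2^{b+2})$ that the other two small pieces cannot fill. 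This unifies the two cases, avoids the paper's multi-line inequality manipulation, and makes transparent why the constant $3/4$ works (anything bounded away from $1/2$ would do).
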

\begin{proof}
Fix $b\in\mathbb{N}$ and take arbitrary $\ser c,\beta$ such that
$\sum_{i}c_{i}/2^{i}=\frac{1}{\beta\cdot2^{b}}$. Let us find $s_{1}$:
take an arbitrary sequence $\ser{\alpha}$ such that $\sum_{i=0}^{\infty}\alpha_{i}c_{i}/2^{i}=\frac{1}{2^{b+1}\beta}$.
Denote $I=\mleft\{ i:i\leq2^{b+4}\mright\} $ and let $S\subseteq I$
be the set of indices in $I$ which satisfy $c_{i}\geq1/2^{2\mleft(b+5\mright)}$.
Assume towards contradiction that for any $i\in S,$ $\alpha_{i}\leq\frac{1}{2^{2\mleft(b+5\mright)}}$.
Note that since $\sum_{i=0}^{\infty}\alpha_{i}c_{i}/2^{i}=\frac{1}{2^{b+1}\beta}>1/2^{b+2}$
and $\sum_{i>2^{b+4}}c_{i}/2^{i}\leq\sum_{i>2^{b+4}}1/2^{i}=\frac{1}{2^{b+4}}$,
it must hold that $\sum_{i=0}^{2^{b+4}}\alpha_{i}c_{i}/2^{i}\geq\frac{3}{2^{b+4}}$.
However, we have: 
\begin{align*}
\sum_{i=0}^{2^{b+4}}\alpha_{i}c_{i}/2^{i} & =\sum_{i\in S}\alpha_{i}c_{i}/2^{i}+\sum_{i\in I\backslash S}\alpha_{i}c_{i}/2^{i}\\
&  \leq\frac{1}{2^{2\mleft(b+5\mright)}}\cdot\sum_{i\in S}c_{i}+\sum_{i\in I\backslash S}c_{i}\\
& <\frac{2^{b+5}}{2^{2\mleft(b+5\mright)}}+\frac{2^{b+5}}{2^{2\mleft(b+5\mright)}}\\
& =2\cdot\frac{1}{2^{b+5}}=\frac{1}{2^{b+4}},
\end{align*}
 and that is a contradiction, thus there is $i\in S$ with $\alpha_{i}\geq1/2^{2\mleft(b+5\mright)}$.
That is, there is an index $i\leq2^{b+4}$ with $c_{i},\alpha_{i}\geq1/2^{2\mleft(b+5\mright)}$,
and this is $s_{1}$. Let us now find $s_{2}$: assume towards contradiction
that for any $i\in S$, $\alpha_{i}\geq3/4$. We show that if that
assumption is true, then $\text{\ensuremath{\sum_{i\in S}\alpha_{i}c_{i}}/\ensuremath{2^{i}} }$
is too large. First, we have: 
\[
\frac{1}{2^{b}\beta}=\sum_{i=0}^{\infty}c_{i}/2^{i}=\sum_{i=0}^{2^{b+4}}c_{i}/2^{i}+\sum_{i>2^{b+4}}c_{i}/2^{i}\leq\sum_{i=0}^{2^{b+4}}c_{i}/2^{i}+1/2^{b+4},
\]
 that is: 
\[
\sum_{i=0}^{2^{b+4}}c_{i}/2^{i}\geq\frac{1}{2^{b}\beta}-1/2^{b+4}.
\]
 Moreover, it holds that: 
\[
\sum_{i\in I\backslash S}c_{i}/2^{i}\leq\frac{2^{b+5}}{2^{2\mleft(b+5\mright)}}=1/2^{b+5}.
\]
Hence, we get that: 
\[
\sum_{i\in S}c_{i}/2^{i}\geq\frac{1}{2^{b}\beta}-1/2^{b+4}-1/2^{b+5}=\frac{1}{2^{b}\beta}-3/2^{b+5}.
\]
And thus, assuming $\alpha_{i}\geq3/4$ for any $i\in S$: 
\[
\sum_{i\in S}\alpha_{i}c_{i}/2^{i}\geq\frac{3}{4}\sum_{i\in S}c_{i}/2^{i}\geq\frac{3}{4}\cdot\mleft(\frac{1}{2^{b}\beta}-3/2^{b+5}\mright).
\]
 But then we get: 
\begin{align*}
\frac{3}{4}\mleft(\frac{1}{2^{b}\beta}-3/2^{b+5}\mright) &= \frac{3}{4}\mleft(\frac{2^{5}-3\beta}{2^{b+5}\beta}\mright)\\
&> \frac{3}{4}\cdot\frac{2^{5}-2^{3}}{2^{b+5}\beta}\\
&= \frac{3}{4}\cdot\frac{2^{3}\mleft(2^{2}-1\mright)}{2^{b+5}\beta}\\
&= \frac{3\cdot3}{2^{2}\cdot2^{b+2}\beta}\\
&> \frac{2^{3}}{2^{b+4}\beta}=\frac{1}{2^{b+1}\beta},
\end{align*}
and this contradicts the fact $\sum_{i=0}^{\infty}\alpha_{i}c_{i}/2^{i}=\frac{1}{2^{b+1}\beta}$.
Thus there is $i\leq2^{b+4}$ with $c_{i}\geq1/2^{2\mleft(b+5\mright)}$
and $\alpha_{i}<3/4$, and this is $s_{2}$. 
\end{proof}
Now we can go on with the detailed proof of Lemma \ref{lem:low_bound_2}. 
\begin{proof}
[Proof of Lemma \ref{lem:low_bound_2}] We divide the proof into
three parts: First we define the sequence $\ser{\tilde{\alpha}}$
and show it exists. Then we show $\ser{\tilde{\alpha}}\in\mathcal{A}^{\leq K'}$
and $\ser{\tilde{\alpha}}$ is $K'$-feasible for some $K'\in\mathbb{N}$.
Finally, we show $P\mleft(\ser c,\ser{\alpha}\mright)=P\mleft(\ser c,\ser{\tilde{\alpha}}\mright)\pm\epsilon$. 

\paragraph{Defining $\protect\ser{\tilde{\alpha}}$.}

Fix $\mleft(\ser c,b\mright)\in\mleft[0,1\mright]^{\mathbb{N}}$ satisfying
$\sum_{i=0}^{\infty}c_{i}/2^{i}=\frac{1}{\beta\cdot2^{b}}$ and $\sum_{i=0}^{\infty}c_{i}\leq1$
(that is, $\mleft(\ser c,b\mright)\in\mathcal{C}$). Fix $\ser{\alpha}\in\mleft[0,1\mright]^{\mathbb{N}}$
satisfying $\sum_{i=0}^{\infty}\alpha_{i}c_{i}/2^{i}=\frac{1}{\beta\cdot2^{b+1}}$
(that is, $\ser{\alpha}\in\mathcal{A}$). Let $\epsilon>0$ small
enough (the proof holds for any $\epsilon>0$ smaller than some constant).
Let $s$ be the lowest index satisfying $c_{s}>1/2^{2\mleft(b+5\mright)}$
and $\alpha_{s}<3/4$ ($s$ exists due to Lemma \ref{lem:s_index}).
Since $\sum_{i=0}^{\infty}c_{i}\leq1$, there is $K\in\mathbb{N}$
such that $\sum_{i>K}c_{i}<\epsilon$ and $K\geq s$. Define the sequence
$\ser{\tilde{\alpha}}$ as follows: 
\[
\tilde{\alpha}_{i}=\begin{cases}
\alpha_{s}+\epsilon_{s} & i=s,\\
\alpha_{i}-\epsilon_{i} & i\neq s,
\end{cases}
\]
 where $\ser{\epsilon}$ is an arbitrary sequence of small values
satisfying the following constraints:
\begin{enumerate}
\item If $i>K$, or $c_{i}=0$, or $\alpha_{i}=0$: $\epsilon_{i}=\alpha_{i}$.
\item Otherwise, if $0\leq i\leq K$ and $i\neq s$: $\epsilon_{i}=\alpha_{i}-\frac{l_{i}}{c_{i}\cdot\beta\cdot2^{t_{i}}}$,
where $l_{i},t_{i}\in\mathbb{N}$, $\epsilon_{i}>0$ and $h\mleft(\epsilon_{i}\mright)\leq\epsilon/K$.
\item $\epsilon_{s}=\sum_{i\neq s}\epsilon_{i}c_{i}\cdot2^{s-i}/c_{s}=\frac{l_{s}}{c_{s}\cdot\beta\cdot2^{t_{s}}}-\alpha_{s}$,
where $l_{s},t_{s}\in\mathbb{N}$. 
\end{enumerate}
In order to continue with the proof, we first have to show that such
a sequence $\ser{\epsilon}$ exists. Denote by $I$ the set of all
indices ``that matter'' in $\ser{\tilde{\alpha}}$, that is, $I=\mleft\{ i\leq K\colon c_{i},\alpha_{i}>0\mright\} $.
It is not hard to construct a sequence $\ser{\epsilon}$ that satisfies
constraints $\mleft(1\mright),\mleft(2\mright)$. We should satisfy constraint
$\mleft(3\mright)$ as well, that is 
\begin{align*}
\frac{l_{s}}{c_{s}\cdot\beta\cdot2^{t_{s}}}-\alpha_{s} & =\sum_{i\neq s}\epsilon_{i}c_{i}\cdot2^{s-i}/c_{s}\\
 & =\frac{2^{s}}{c_{s}}\mleft[\sum_{i\in I\backslash\mleft\{ s\mright\} }\mleft(\alpha_{i}-\text{\ensuremath{\frac{l_{i}}{c_{i}\cdot\beta\cdot2^{t_{i}}}}}\mright)\cdot c_{i}/2^{i}+\sum_{i>K}\alpha_{i}c_{i}/2^{i}\mright]\\
 & =\frac{2^{s}}{c_{s}}\mleft[\sum_{i\neq s}\alpha_{i}c_{i}/2^{i}-\sum_{i\in I\backslash\mleft\{ s\mright\} }\text{\ensuremath{\frac{l_{i}}{\beta\cdot2^{t_{i}+i}}}}\mright].
\end{align*}
 This can be written as: 
\begin{gather*}
2^{s}\mleft[\sum_{i\neq s}\alpha_{i}c_{i}/2^{i}-\sum_{i\in I\backslash\mleft\{ s\mright\} }\text{\ensuremath{\frac{l_{i}}{\beta\cdot2^{t_{i}+i}}}}\mright]=\frac{l_{s}}{\beta\cdot2^{t_{s}}}-\alpha_{s}c_{s}\\
\iff\\
\sum_{i\neq s}\alpha_{i}c_{i}/2^{i}-\sum_{i\in I\backslash\mleft\{ s\mright\} }\text{\ensuremath{\frac{l_{i}}{\beta\cdot2^{t_{i}+i}}}}=\frac{l_{s}}{\beta\cdot2^{t_{s}+s}}-\alpha_{s}c_{s}/2^{s}\\
\iff\\
\sum_{i=0}^{\infty}\alpha_{i}c_{i}/2^{i}-\sum_{i\in I\backslash\mleft\{ s\mright\} }\text{\ensuremath{\frac{l_{i}}{\beta\cdot2^{t_{i}+i}}}}=\frac{l_{s}}{\beta\cdot2^{t_{s}+s}}.
\end{gather*}
 Recall that $\sum_{i=0}^{\infty}\alpha_{i}c_{i}/2^{i}=\frac{1}{\beta\cdot2^{b+1}}$,
thus we have: 
\begin{gather*}
\frac{1}{\beta\cdot2^{b+1}}-\sum_{i\in I\backslash\mleft\{ s\mright\} }\text{\ensuremath{\frac{l_{i}}{\beta\cdot2^{t_{i}+i}}}}=\frac{l_{s}}{\beta\cdot2^{t_{s}+s}}\\
\iff\\
1/2^{b+1}-\sum_{i\in I\backslash\mleft\{ s\mright\} }\text{\ensuremath{l_{i}/2^{t_{i}+i}}}=l_{s}/2^{t_{s}+s},
\end{gather*}
 and there are $l_{s},t_{s}\in\mathbb{N}$ satisfying this equation:
Let $t_{s}=b+1+\sum_{i\in I\backslash\mleft\{ s\mright\} }t_{i}+i$,
then $l_{s}$ is determined accordingly such that the equation holds.
Clearly, $t_{s}\in\mathbb{N}$. As for $l_{s}$, note that by the
constraints:
\[
0\leq\mleft(\alpha_{s} + \sum_{i\neq s}\epsilon_{i}c_{i}\cdot2^{s-i}/c_{s}\mright)\cdot c_{s}\cdot\beta\cdot2^{t_{s}}=l_{s}.
\]
 Since clearly $l_{s}\in\mathbb{Z}$ as a sum of numbers in $\mathbb{Z}$,
we get $l_{s}\in\mathbb{N}$, and thus there is such a sequence $\ser{\epsilon}$.
We now show a few bounds on values involving the sequence $\ser{\epsilon}$,
which will help us during the rest of the proof. By the definition
of the sequence $\ser{\epsilon}$ and since for $x\leq1/2$ we have
$h\mleft(x\mright)\geq x\log\frac{1}{x}\geq x$, it holds that:
\begin{equation}
\sum_{i\leq K,i\neq s}\epsilon_{i}\leq\sum_{i\leq K,i\neq s}h\mleft(\epsilon_{i}\mright)\leq\sum_{i\leq K,i\neq s}\epsilon/K=\epsilon.\label{eq:small_epsilons_1_lower}
\end{equation}
 Moreover:
\begin{gather*}
\sum_{i\neq s}\epsilon_{i}c_{i}/2^{i}=\sum_{i\leq K,i\neq s}\epsilon_{i}c_{i}/2^{i}+\sum_{i>K}\epsilon_{i}c_{i}/2^{i}\leq\sum_{i\leq K,i\neq s}\epsilon_{i}+\sum_{i>K}c_{i}=2\epsilon,
\end{gather*}
 and thus: 
\begin{equation}
\epsilon_{s}=\sum_{i\neq s}\epsilon_{i}c_{i}\cdot2^{s-i}/c_{s}\leq2^{s}2\epsilon/c_{s}.\label{eq:small_epsilons_2_lower}
\end{equation}

\paragraph{$\protect\ser{\tilde{\alpha}}$ is feasible.}

Now we show that $\ser{\tilde{\alpha}}\in\mathcal{A}^{\leq K}$ and
$K'$-feasible for some $K'\in\mathbb{N}$. Based on the fact $\ser{\alpha}\in\mathcal{A}$,
we first show $\ser{\tilde{\alpha}}\in\mathcal{A}^{\leq K}$, that
is:
\begin{enumerate}
\item $\ser{\tilde{\alpha}}\in\mleft[0,1\mright]^{\mathbb{N}}$.
\item $\sum_{i=0}^{\infty}\tilde{\alpha}_{i}c_{i}/2^{i}=\frac{1}{\beta\cdot2^{b+1}}$.
\item $i>K\implies\tilde{\alpha}_{i}=0$ (this is obvious by the definition
of $\ser{\tilde{\alpha}}$).
\end{enumerate}
We show $\mleft(1\mright)$: For $i\neq s$, it is not hard to check
that $0\leq\tilde{\alpha}_{i}\leq1$ by the definition of the sequence
$\ser{\epsilon}$. For $i=s$, recall that $0\leq\alpha_{s}<3/4$
and thus $0\leq\tilde{\alpha}_{s}=\alpha_{s}+\epsilon_{s}\leq1$ for
small enough $\epsilon$ (due to (\ref{eq:small_epsilons_2_lower})).
Let us show $\mleft(2\mright)$, depending on the fact $\sum_{i=0}^{\infty}\alpha_{i}c_{i}/2^{i}=\frac{1}{\beta\cdot2^{b+1}}$:
 
\begin{align*}
\sum_{i=0}^{\infty}\tilde{\alpha}_{i}c_{i}/2^{i} & =\mleft(\alpha_{s}+\epsilon_{s}\mright)c_{s}/2^{s}+\sum_{i\neq s}\mleft(\alpha_{i}-\epsilon_{i}\mright)c_{i}/2^{i}\\
 & =\sum_{i=0}^{\infty}\alpha_{i}c_{i}/2^{i}+\epsilon_{s}c_{s}/2^{s}-\sum_{i\neq s}\epsilon_{i}c_{i}/2^{i}\\
 & =\frac{1}{\beta\cdot2^{b+1}}+\frac{\sum_{i\neq s}\epsilon_{i}c_{i}\cdot2^{s-i}}{c_{s}}c_{s}/2^{s}-\sum_{i\neq s}\epsilon_{i}c_{i}/2^{i}=\frac{1}{\beta\cdot2^{b+1}}.
\end{align*}
 Thus, indeed $\ser{\tilde{\alpha}}\in A^{\leq K}$. Now we show that
$\ser{\tilde{\alpha}}$ is $K'$-feasible where 
\[
K'=\max\mleft(\mleft\{ K\mright\} \cup\mleft\{ t_{i}:i\in I\mright\} \mright).
\]
 That is:
\begin{enumerate}
\item For any $i\in\mathbb{N}$: $\tilde{\alpha}_{i}c_{i}n\in\mathbb{N}$
where $n=\beta\cdot2^{K'}$. 
\item If there is $t$ such that $c_{t}>0$ and $c_{i}=0$ for any $i>t$,
then $\alpha_{t}<1$.
\end{enumerate}
We show $\mleft(1\mright)$: If $i>K$, or $c_{i}=0$, or $\alpha_{i}=0$
then by definition of $\ser{\tilde{\alpha}}$, $\tilde{\alpha}_{i}c_{i}n=0\in\mathbb{N}$.
Otherwise: 
\[
\tilde{\alpha}_{i}c_{i}n=\frac{l_{i}}{c_{i}\cdot\beta\cdot2^{t_{i}}}c_{i}\cdot\beta\cdot2^{K'}=l_{i}\cdot2^{K'-t_{i}}\in\mathbb{N}
\]
 since $K'\geq t_{i}$. Let us show $\mleft(2\mright)$: If $t>K$ or
$\alpha_{t}=0$ then $\tilde{\alpha}_{t}=0<1$. Otherwise, if $t\leq K$
and $t\neq s$ then $\epsilon_{t}>0$ and thus 
\[
\tilde{\alpha}_{t}=\alpha_{t}-\epsilon_{t}\leq1-\epsilon_{t}<1.
\]
 If $t=s$, then since $\alpha_{s}<3/4$, we have $\tilde{\alpha}_{t}=\alpha_{s}+\epsilon_{s}<1$
for small enough $\epsilon$. So $\ser{\tilde{\alpha}}\in\mathcal{A}^{\leq K}\subseteq\mathcal{A}^{\leq K'}$
and is $K'$-feasible, as required.

\paragraph{$P\mleft(\protect\ser c,\protect\ser{\tilde{\alpha}}\mright)$ approximates
$P\mleft(\protect\ser c,\protect\ser{\alpha}\mright)$.}

It only remains to show $P\mleft(\ser c,\ser{\alpha}\mright)=P\mleft(\ser c,\ser{\tilde{\alpha}}\mright)\pm\epsilon$.
Due to Lemma \ref{lem:entropy_bounds}, We have: 
\begin{align*}
P\mleft(\ser c,\ser{\alpha}\mright) & =\sum_{i=0}^{\infty}h\mleft(\alpha_{i}\mright)c_{i}-h\mleft(\sum_{i=0}^{\infty}\alpha_{i}c_{i}\mright)\\
 & =h\mleft(\tilde{\alpha}_{s}-\epsilon_{s}\mright)c_{s}+\sum_{i\neq s}h\mleft(\tilde{\alpha}_{i}+\epsilon_{i}\mright)c_{i}-h\mleft(\mleft(\tilde{\alpha}_{s}-\epsilon_{s}\mright)c_{s}+\sum_{i\neq s}\mleft(\tilde{\alpha}_{i}+\epsilon_{i}\mright)c_{i}\mright)\\
 & =\sum_{i=0}^{\infty}h\mleft(\tilde{\alpha}_{i}\mright)c_{i}-h\mleft(\sum_{i=0}^{\infty}\tilde{\alpha}_{i}c_{i}\mright)\pm h\mleft(\sum_{i=0}^{\infty}\epsilon_{i}c_{i}\mright)\pm\sum_{i=0}^{\infty}h\mleft(\epsilon_{i}\mright)c_{i}\\
 & =P\mleft(\ser c,\ser{\tilde{\alpha}}\mright)\pm h\mleft(\sum_{i=0}^{\infty}\epsilon_{i}c_{i}\mright)\pm\sum_{i=0}^{\infty}h\mleft(\epsilon_{i}\mright)c_{i}.
\end{align*}
 We show that the expressions $h\mleft(\sum_{i=0}^{\infty}\epsilon_{i}c_{i}\mright)$
and $\sum_{i=0}^{\infty}h\mleft(\epsilon_{i}\mright)c_{i}$ are small.
It holds that:
\begin{align*}
h\mleft(\sum_{i=0}^{\infty}\epsilon_{i}c_{i}\mright) & \underset{\text{Lem \ref{lem:entropy_bounds}}}{\leq}h\mleft(\sum_{i=0}^{K}\epsilon_{i}c_{i}\mright)+h\mleft(\sum_{i=K+1}^{\infty}\epsilon_{i}c_{i}\mright)\\
& \leq h\mleft(\sum_{i=0}^{K}\epsilon_{i}\mright)+h\mleft(\sum_{i=K+1}^{\infty}c_{i}\mright)  \underset{\eqref{eq:small_epsilons_1_lower},\eqref{eq:small_epsilons_2_lower}}{\leq}h\mleft(\epsilon+2^{s}2\epsilon/c_{s}\mright)+h\mleft(\epsilon\mright),
\end{align*}
 and: 
\begin{align*}
\sum_{i=0}^{\infty}h\mleft(\epsilon_{i}\mright)c_{i}= & \sum_{i=0}^{K}h\mleft(\epsilon_{i}\mright)c_{i}+\sum_{i=K+1}^{\infty}h\mleft(\epsilon_{i}\mright)c_{i}\\
& \leq h\mleft(\epsilon_{s}\mright)+\sum_{i\leq K,i\neq s}h\mleft(\epsilon_{i}\mright)+\sum_{i=K+1}^{\infty}c_{i}\underset{\eqref{eq:small_epsilons_1_lower},\eqref{eq:small_epsilons_2_lower}}{\leq}h\mleft(2^{s}2\epsilon/c_{s}\mright)+2\epsilon.
\end{align*}
 Thus, we can choose $\epsilon'>0$ small enough and apply the proof
for $\epsilon'$, such that $P\mleft(\ser c,\ser{\alpha}\mright)=P\mleft(\ser c,\ser{\tilde{\alpha}}\mright)\pm\epsilon$. 
\end{proof}

\subsubsection{Upper bounding $\rho_{\min}^{*}\mleft(n\mright)$}

The idea here is similar to the idea of the lower bound proof. Here
$\mathcal{C}^{\leq l}$ is the set of all pairs $\mleft(\ser c,b\mright)\in\mathcal{C}$
such that if $i>l$ then $c_{i}=0$. In order to prove Lemma \ref{lem:upper_bound},
we will prove two claims. The first allows us to remove or add constraints
on the choice of a pair $\mleft(\ser c,b\mright)\in\mathcal{C}$ without
changing much the value of $P\mleft(\ser c,\ser{\alpha}\mright)$: 
\begin{lemma}
\label{lem:upper_bound_1} It holds that: 
\[
\lim_{k\rightarrow\infty}\min_{\substack{
\mleft(\ser c,b\mright)\in\mathcal{C}^{\leq k}\colon\\
\ser c\text{ is \ensuremath{k\text{-feasible}}}
}}\max_{\ser{\alpha}\in\mathcal{A}}P\mleft(\ser c,\ser{\alpha}\mright)=G\mleft(\beta\mright).
\]
 
\end{lemma}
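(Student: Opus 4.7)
The direction $\liminf \geq G(\beta)$ is immediate, since for each $k$ the restricted minimization is over a subset of $\mathcal{C}$, so the min on the left is bounded below by the infimum defining $G(\beta)$.

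For the reverse direction $\limsup \leq G(\beta)$, fix $\epsilon>0$ and, by the definition of $G(\beta)$ as an infimum, pick $(\ser c^{*},b^{*})\in\mathcal{C}$ with
\[
\max_{\ser{\alpha}\in\mathcal{A}(\ser c^{*},b^{*})}P(\ser c^{*},\ser{\alpha})\leq G(\beta)+\epsilon.
\]
The strategy is a two-stage perturbation of $\ser c^{*}$: first truncate it to finite support, then round it to a $k$-feasible sequence, while bounding the change in the inner maximum. The target is, for all large $k$, to exhibit a $k$-feasible $(\ser c^{(k)},b^{*})\in\mathcal{C}^{\leq k}$ whose inner maximum exceeds $G(\beta)+\epsilon$ by at most an additional $\epsilon$.

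The truncation step picks $K_{1}$ so that $\sum_{i>K_{1}}c_{i}^{*}$ is tiny, and defines $\ser{\tilde c}$ by $\tilde c_{i}=c_{i}^{*}$ for $i<K_{1}$, $\tilde c_{K_{1}}=\sum_{i\geq K_{1}}c_{i}^{*}\cdot 2^{K_{1}-i}$, and $\tilde c_{i}=0$ otherwise. This preserves the constraints defining $\mathcal{C}$ and yields $(\ser{\tilde c},b^{*})\in\mathcal{C}^{\leq K_{1}}$. The rounding step, for $k\geq K_{1}$, perturbs each $\tilde c_{i}$ by an amount $O(1/n)$ so that $\hat c_{i}\cdot n\in\mathbb{N}$ while keeping $\sum\hat c_{i}/2^{i}=1/(\beta\cdot 2^{b^{*}})$ exactly and $\sum\hat c_{i}\leq 1$. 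This rounding mirrors the construction in Lemma \ref{lem:low_bound_2} (now applied to $\ser c$ rather than $\ser{\alpha}$): spread the tiny errors across the relevant indices and absorb the residual at one designated index whose existence is guaranteed by a variant of Lemma \ref{lem:s_index} applied to $\ser c$.

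The main obstacle is a stability statement for the inner maximum: given any $\ser{\hat\alpha}\in\mathcal{A}(\ser{\hat c},b^{*})$ that achieves the max for $\ser{\hat c}$, I would construct $\ser{\alpha}^{*}\in\mathcal{A}(\ser c^{*},b^{*})$ with $|P(\ser{\hat c},\ser{\hat\alpha})-P(\ser c^{*},\ser{\alpha}^{*})|<\epsilon$. The natural choice is $\alpha_{i}^{*}=\hat\alpha_{i}$ for $i<K_{1}$ and $\alpha_{i}^{*}=1/2$ for $i>K_{1}$, with a small correction at one index to enforce $\sum\alpha_{i}^{*}c_{i}^{*}/2^{i}=1/(\beta\cdot 2^{b^{*}+1})$; since $\ser{\hat c}$ and $\ser c^{*}$ agree up to $O(1/n)$ on the dominant indices and the redistributed tail mass is small, the required correction is $o(1)$ (as $K_1 \to \infty$ followed by $k \to \infty$), and Lemma \ref{lem:entropy_bounds} controls the resulting change in $P$. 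Composing this with the choice of $(\ser c^{*},b^{*})$ gives $\max_{\ser{\alpha}}P(\ser{\hat c},\ser{\alpha})\leq G(\beta)+2\epsilon$ for all large $k$, completing the proof. The hardest technical point is ensuring that the corrective adjustment in $\ser{\alpha}^{*}$ stays inside $[0,1]$; this should follow from a Lemma \ref{lem:s_index}-style slack guarantee at the chosen correction index together with taking $\epsilon$ (and thus all perturbation magnitudes) sufficiently small.
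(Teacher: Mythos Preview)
Your proposal is correct and follows essentially the same route as the paper. The paper packages your two-stage perturbation and your stability statement into a single auxiliary lemma (Lemma~\ref{lem:upper_bound_2}): given $(\ser c,b)\in\mathcal{C}$ and $\epsilon>0$, it produces a $K$-feasible $(\ser{\tilde c},b)\in\mathcal{C}^{\leq K}$ such that every $\ser{\tilde\alpha}\in\mathcal{A}(\ser{\tilde c},b)$ has a companion $\ser{\alpha}\in\mathcal{A}(\ser c,b)$ with $P(\ser{\tilde c},\ser{\tilde\alpha})=P(\ser c,\ser{\alpha})\pm\epsilon$; Lemma~\ref{lem:upper_bound_1} is then a three-line chain of inequalities. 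The only cosmetic differences are that the paper folds truncation and rounding into a single perturbation (absorbing the tail mass at index~$0$ rather than at index~$K_1$) and builds $\ser{\alpha}$ from $\ser{\tilde\alpha}$ by modifying a single coordinate~$s$ (chosen via Lemma~\ref{lem:s_index}, with the sign of the needed correction dictating whether $s=s_1$ or $s=s_2$) rather than also resetting the tail coordinates to $1/2$; since those tail $c_i$'s are tiny, both choices work. Your identification of the ``hardest technical point'' and its resolution via Lemma~\ref{lem:s_index} matches the paper exactly.
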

The second claim shows, essentially, that the summation appearing
in $\rho_{\min}^{*}\mleft(n\mright)$ is redundant for approximation
up to sub-exponential factors, if the pair $\mleft(\ser c,b\mright)$
chosen by the minimization belongs to $\mathcal{C}^{\leq k}$: 
\begin{lemma}
\label{lem:redundant_sum} Let $n=\beta\cdot2^{k}$ and $\mleft(\ser c,b\mright)\in\mathcal{C}^{\leq k}$
such that $\ser c$ is $k$-feasible. Then: 
\[
\max_{d\in\mleft[n\mright]}\sum_{\ser{\alpha}\in S_{d}^{\leq k}}\frac{\prod_{i=0}^{\infty}\binom{c_{i}n}{\alpha_{i}c_{i}n}}{\binom{n}{d}}\leq\exp_{2}\mleft(\max_{\alpha\in\mathcal{A}}P\mleft(\ser c,\ser{\alpha}\mright)n+o\mleft(n\mright)\mright).
\]
\end{lemma}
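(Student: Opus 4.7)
The plan is to show that each summand in the inner sum is at most $2^{P(\ser{c},\ser{\alpha})n + O(\log n)}$, while the number of summands is at most $2^{O(\log^{2} n)}$. Since $n = \beta \cdot 2^{k}$ gives $k = \log_{2}(n/\beta) = O(\log n)$, both error terms are $o(n)$, leaving the desired dominant exponent $\max_{\ser{\alpha}} P(\ser{c},\ser{\alpha})\cdot n$.

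First, I would estimate a single summand. For each $i \leq k$ with $c_{i} > 0$, the bound (\ref{eq:binom_bounds}) gives $\binom{c_{i}n}{\alpha_{i}c_{i}n} \leq 2^{h(\alpha_{i}) c_{i} n}$, and for the denominator the same bound gives $\binom{n}{d} \geq 2^{h(d/n) n}/O(\sqrt{n})$. Because $(\ser c,b) \in \mathcal{C}^{\leq k}$ forces $c_{i}=0$ for $i>k$ (so those binomial factors are trivially $1$) and because $d = \sum_{i} \alpha_{i} c_{i} n$, these two estimates combine into
\[
\frac{\prod_{i=0}^{\infty}\binom{c_{i}n}{\alpha_{i}c_{i}n}}{\binom{n}{d}} \;\leq\; 2^{\left(\sum_{i=0}^{\infty} h(\alpha_{i}) c_{i} \;-\; h\left(\sum_{i=0}^{\infty}\alpha_{i}c_{i}\right)\right)n \;+\; O(\log n)} \;=\; 2^{P(\ser{c},\ser{\alpha})n + O(\log n)}.
\]

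Second, I would bound the number of distinct summands. Every term in the sum is determined by the tuple $(\alpha_{i} c_{i} n)_{i=0}^{k}$, whose entries are non-negative integers bounded by $c_{i} n \leq n$. Hence the number of distinct summands is at most $(n+1)^{k+1}$, which equals $2^{O(\log^{2} n)} = 2^{o(n)}$ since $k = O(\log n)$.

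Third, I combine these two estimates. Using $S_{d}^{\leq k} \subseteq \mathcal{A}^{\leq k} \subseteq \mathcal{A}$ to replace the pointwise maximum of $P$ over $S_{d}^{\leq k}$ by the maximum over all of $\mathcal{A}$, we obtain
\[
\max_{d\in[n]}\sum_{\ser{\alpha}\in S_{d}^{\leq k}} \frac{\prod_{i=0}^{\infty}\binom{c_{i}n}{\alpha_{i}c_{i}n}}{\binom{n}{d}} \;\leq\; 2^{o(n)} \cdot 2^{\max_{\ser{\alpha}\in\mathcal{A}} P(\ser{c},\ser{\alpha})\, n + O(\log n)} \;=\; \exp_{2}\!\left(\max_{\ser{\alpha}\in\mathcal{A}} P(\ser{c},\ser{\alpha})\,n + o(n)\right),
\]
as required. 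The only delicate point I anticipate is keeping the counting argument clean: one should treat two sequences as giving the same summand whenever they agree on those coordinates $i$ with $c_{i}>0$, so that sequences differing only on ``irrelevant'' coordinates are not overcounted. Everything else is a direct, one-shot application of the entropy approximation for binomial coefficients, without requiring the intricate feasibility adjustments used in the lower bound.
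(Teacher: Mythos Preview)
Your proposal is correct and follows essentially the same approach as the paper: bound each summand by $2^{P(\ser c,\ser\alpha)n+O(\log n)}$ via the entropy estimate~(\ref{eq:binom_bounds}), bound the number of summands by $(n+1)^{k+1}=2^{O(\log^2 n)}$, and pass to $\max_{\ser\alpha\in\mathcal A}$ using $S_d^{\leq k}\subseteq\mathcal A$. Your remark about identifying sequences that agree on the coordinates with $c_i>0$ is exactly the right way to make the counting well-defined, and matches the paper's implicit treatment.
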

Having Lemmas \ref{lem:upper_bound_1} and \ref{lem:redundant_sum},
we can prove Lemma \ref{lem:upper_bound}: 
\begin{proof}
[Proof of Lemma \ref{lem:upper_bound}] We have: 
\begin{align*}
\rho_{\min}^{*}\mleft(n\mright) & =\min_{\substack{
\mleft(\ser c,b\mright)\in\mathcal{C}\colon\\
\ser c\text{ is \ensuremath{k\text{-feasible}}}
}}\max_{\substack{
d\in\mleft[n\mright]\colon\\
S_{d}\neq\emptyset
}}\sum_{\ser{\alpha}\in S_{d}}\frac{\prod_{i=0}^{\infty}\binom{c_{i}n}{\alpha_{i}c_{i}n}}{\binom{n}{d}}\\
 & \leq\min_{\substack{
\mleft(\ser c,b\mright)\in\mathcal{C}^{\leq k}\colon\\
\ser c\text{ is \ensuremath{k\text{-feasible}}}
}}\max_{\substack{
d\in\mleft[n\mright]\colon\\
S_{d}\neq\emptyset
}}\sum_{\ser{\alpha}\in S_{d}^{\leq k}}\frac{\prod_{i=0}^{\infty}\binom{c_{i}n}{\alpha_{i}c_{i}n}}{\binom{n}{d}}\\
 & \underset{\text{Lem \ref{lem:redundant_sum}}}{\leq}\exp_{2}\mleft(\min_{\substack{
\mleft(\ser c,b\mright)\in\mathcal{C}^{\leq k}\colon\\
\ser c\text{ is \ensuremath{k\text{-feasible}}}
}}\max_{\alpha\in\mathcal{A}}P\mleft(\ser c,\ser{\alpha}\mright)n+o\mleft(n\mright)\mright)\\
 & \underset{\text{Lem \ref{lem:upper_bound_1}}}{\leq}\exp_{2}\mleft(\inf_{\mleft(\ser c,b\mright)\in\mathcal{C}}\max_{\alpha\in\mathcal{A}}P\mleft(\ser c,\ser{\alpha}\mright)n+o\mleft(n\mright)\mright)=2^{G\mleft(\beta\mright)n+o\mleft(n\mright)}.\qedhere
\end{align*}
\end{proof}
Now we shall prove Lemmas \ref{lem:upper_bound_1}--\ref{lem:redundant_sum}.
We prove Lemma \ref{lem:redundant_sum} first since it is simpler. 
\begin{proof}
[Proof of Lemma \ref{lem:redundant_sum}] Let $n=\beta\cdot2^{k}$
and $\mleft(\ser c,b\mright)\in\mathcal{C}^{\leq k}$ such that $\ser c$
is $k$-feasible. Let $\ser{\alpha}\in\bigcup_{d\in\mleft[n\mright]}S_{d}^{\leq k}$
such that 
\begin{equation}
\max_{\substack{
\ser{\alpha'}\in\bigcup_{d\in\mleft[n\mright]}S_{d}^{\leq k}}}\frac{\prod_{i=0}^{\infty}\binom{c_{i}n}{\alpha'_{i}c_{i}n}}{\binom{n}{\sum_{i=0}^{\infty}\alpha'_{i}c_{i}n}}=\frac{\prod_{i=0}^{\infty}\binom{c_{i}n}{\alpha_{i}c_{i}n}}{\binom{n}{ \sum_{i=0}^{\infty}\alpha_{i}c_{i}n}}.\label{eq:max_alpha}
\end{equation}
Combinatorial considerations imply that 
\begin{equation}
\mleft|\bigcup_{d\in\mleft[n\mright]}S_{d}^{\leq k}\mright|\leq\mleft(n+1\mright)^{k+1}=O\mleft(n^{\log n+1}\mright)\label{eq:redundant_sum}
\end{equation}
 since for a sequence $\ser{\alpha'}\in\bigcup_{d\in\mleft[n\mright]}S_{d}^{\leq k}$,
if $0\leq i\leq k$ then $\alpha'_{i}n$ can potentially be any number
between $0$ and $n$, and else $\alpha'_{i}n=0$. Hence: 
\begin{align*}
\max_{d\in\mleft[n\mright]}\sum_{\ser{\alpha'}\in S_{d}^{\leq k}}\frac{\prod_{i=0}^{\infty}\binom{c_{i}n}{\alpha'_{i}c_{i}n}}{\binom{n}{d}} & \leq\sum_{\ser{\alpha'}\in\bigcup_{d\in\mleft[n\mright]}S_{d}^{\leq k}}\frac{\prod_{i=0}^{\infty}\binom{c_{i}n}{\alpha'_{i}c_{i}n}}{\binom{n}{\sum_{i=0}^{\infty}\alpha'_{i}c_{i}n}}\\
 & \underset{\eqref{eq:max_alpha},\eqref{eq:redundant_sum}}{\leq}O\mleft(n^{\log n+1}\mright)\cdot\frac{\prod_{i=0}^{\infty}\binom{c_{i}n}{\alpha_{i}c_{i}n}}{\binom{n}{\sum_{i=0}^{\infty}\alpha_{i}c_{i}n}}\\
 & \underset{\eqref{eq:binom_bounds}}{\leq}O\mleft(n^{\log n+1}\mright)\cdot\frac{\exp_{2}\mleft(  \sum_{i=0}^{\infty} \mleft(h\mleft(\alpha_{i}\mright)c_{i}n\mright) \mright)}{\exp_{2}\mleft(h\mleft(\sum_{i=0}^{\infty}\alpha_{i}c_{i}\mright)n\mright)/O\mleft(\sqrt{n}\mright)}\\
 & =\exp_{2}\mleft(P\mleft(\ser c,\ser{\alpha}\mright)n+o\mleft(n\mright)\mright)\underset{\mleft(*\mright)}{\leq}\exp_{2}\mleft(\max_{\ser{\alpha'}\in\mathcal{A}}P\mleft(\ser c,\ser{\alpha'}\mright)n+o\mleft(n\mright)\mright)
\end{align*}
where $\mleft(*\mright)$ is since $\bigcup_{d\in\mleft[n\mright]}S_{d}^{\leq k}\subseteq\mathcal{A}$
by definition, and hence $\ser{\alpha}\in\mathcal{A}$. 
\end{proof}
The proof of Lemma \ref{lem:upper_bound_1} is implied by the following:
\begin{lemma}
\label{lem:upper_bound_2}Fix $\mleft(\ser c,b\mright)\in\mathcal{C}$
and let $\epsilon>0$. There is $K\in\mathbb{N}$ and $\mleft(\ser{\tilde{c}},b\mright)\in\mathcal{C}^{\leq K}$,
where $\ser{\tilde{c}}$ is $K$-feasible, such that for any $\ser{\tilde{\alpha}}\in\mleft[0,1\mright]^{\mathbb{N}}$
satisfying $\sum_{i=0}^{\infty}\tilde{\alpha}{}_{i}\tilde{c}_{i}/2^{i}=\frac{1}{\beta\cdot2^{b+1}}$
there is $\ser{\alpha}\in\mleft[0,1\mright]^{\mathbb{N}}$ satisfying
$\sum_{i=0}^{\infty}\alpha_{i}c_{i}/2^{i}=\frac{1}{\beta\cdot2^{b+1}}$,
and $P\mleft(\ser{\tilde{c}},\ser{\tilde{\alpha}}\mright)=P\mleft(\ser c,\ser{\alpha}\mright)\pm\epsilon$. 
\end{lemma}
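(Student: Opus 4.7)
The plan is to mirror the proof of Lemma \ref{lem:low_bound_2} with the roles of $\ser c$ and $\ser\alpha$ swapped. That is, I would round the fixed $\ser c$ itself to a finite-support, $K'$-feasible approximation $\ser{\tilde c}$, and then for every $\ser{\tilde\alpha}$ that is valid for $\ser{\tilde c}$, lift $\ser{\tilde\alpha}$ back to an $\ser\alpha$ valid for $\ser c$ whose $P$-value is within $\epsilon$ of $P(\ser{\tilde c}, \ser{\tilde\alpha})$. Because of the quantifier order, the construction of $\ser{\tilde c}$ must be uniform in $\ser{\tilde\alpha}$, while the lift is allowed to depend on $\ser{\tilde\alpha}$.

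For the construction, I would pick $K$ large enough that $\sum_{i > K} c_i < \epsilon'$ for a small $\epsilon'$ depending on $\epsilon$, and apply Lemma \ref{lem:s_index} to $\ser c$ (with the default sequence $\alpha_i = 1/2$) to find an ``anchor'' index $s \leq 2^{b+4}$ with $c_s > 1/2^{2(b+5)}$. For every $i \leq K$, $i \neq s$, I would replace $c_i$ by a dyadic rational $\tilde c_i = l_i/(\beta \cdot 2^{t_i})$ with $l_i, t_i \in \mathbb N$ and $h(|c_i - \tilde c_i|) \leq \epsilon/K$, and set $\tilde c_i = 0$ for $i > K$. The value $\tilde c_s$ is then determined by the equation $\sum_i \tilde c_i / 2^i = 1/(\beta \cdot 2^b)$; the same algebraic manipulation as in Lemma \ref{lem:low_bound_2} shows that $\tilde c_s$ automatically takes the form $l_s/(\beta \cdot 2^{t_s})$ with $l_s, t_s \in \mathbb N$, and lies in $[0, 1]$ once $\epsilon$ is small enough (using $c_s > 1/2^{2(b+5)}$). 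Taking $K' = \max(\{K\} \cup \{t_i\})$, the pair $(\ser{\tilde c}, b)$ lies in $\mathcal{C}^{\leq K}$ and is $K'$-feasible.

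For the lift, given $\ser{\tilde\alpha}$ with $\sum_i \tilde\alpha_i \tilde c_i / 2^i = 1/(\beta \cdot 2^{b+1})$, I would apply Lemma \ref{lem:s_index} a second time, now to $(\ser{\tilde c}, \ser{\tilde\alpha})$, to obtain two safe indices $s_1, s_2 \leq 2^{b+4}$ with $\tilde c_{s_j} > 1/2^{2(b+5)}$, $\tilde\alpha_{s_1} > 1/2^{2(b+5)}$ and $\tilde\alpha_{s_2} < 3/4$. Then set $\alpha_i := \tilde\alpha_i$ for every $i \leq K$ outside $\{s_1, s_2\}$; $\alpha_i := 1/2$ for $i > K$; and let $\alpha_{s_1}, \alpha_{s_2}$ absorb the mismatch coming from the truncated tail $\tfrac{1}{2}\sum_{i > K} c_i/2^i$, the rounding errors $\sum_{i \leq K,\, i \neq s} \tilde\alpha_i (c_i - \tilde c_i)/2^i$, and the jump $\tilde\alpha_s (c_s - \tilde c_s)/2^s$. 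The total mismatch is $O(\epsilon)$, and it can be split between $s_1$ (for upward corrections) and $s_2$ (for downward corrections) so that both values remain in $[0,1]$. The $P$-value comparison then follows exactly as at the end of the proof of Lemma \ref{lem:low_bound_2}: repeated application of Lemma \ref{lem:entropy_bounds} bounds $\bigl|\sum_i h(\alpha_i) c_i - \sum_i h(\tilde\alpha_i) \tilde c_i\bigr|$ and $\bigl|h(\sum_i \alpha_i c_i) - h(\sum_i \tilde\alpha_i \tilde c_i)\bigr|$ by expressions of the form $h(O(\epsilon)) + O(\epsilon)$, which is $\leq \epsilon$ for sufficiently fine rounding.

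The main obstacle I anticipate is ensuring that the adjustment in the lift step is always feasible: since $\ser{\tilde\alpha}$ is an arbitrary element of $[0,1]^{\mathbb N}$ compatible with $\ser{\tilde c}$, it could be almost everywhere $0$ or almost everywhere $1$, in which case the anchor index for $\ser\alpha$ must supply slack in the right direction. This is precisely what the two-index version of Lemma \ref{lem:s_index} provides: $s_1$ acts as a ``push-up'' index and $s_2$ as a ``push-down'' one, so that whichever sign the correction has, one of them can absorb it while staying in $[0,1]$. A secondary subtlety is that the second application of Lemma \ref{lem:s_index} is with $\ser{\tilde c}$ rather than $\ser c$, but since $\tilde c_i$ is within $\epsilon/K$ of $c_i$ for $i \neq s$ and $\tilde c_s$ remains bounded below by a constant, the quantitative estimates propagate without issue.
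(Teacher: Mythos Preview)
Your approach matches the paper's: round $\ser c$ to a finite-support $K'$-feasible $\ser{\tilde c}$ by perturbing every coordinate to a number of the form $l_i/(\beta\cdot 2^{t_i})$ and absorbing the defect at one anchor index, then for each $\ser{\tilde\alpha}$ lift it back to an $\ser\alpha$ via a small correction at an index supplied by Lemma~\ref{lem:s_index}, and finish with Lemma~\ref{lem:entropy_bounds}. The differences are cosmetic: the paper anchors the $\ser{\tilde c}$-construction at index~$0$ rather than at a general $s$, uses a single correction index for the lift (chosen according to the sign of the discrepancy $r(\ser{\tilde\alpha})$) rather than both $s_1$ and $s_2$, and keeps $\alpha_i=\tilde\alpha_i$ for $i>K$ rather than resetting them to~$1/2$.

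One point does need attention, however. You assert $(\ser{\tilde c},b)\in\mathcal C^{\le K}$ but never check the constraint $\sum_i\tilde c_i\le 1$. With a general anchor $s>0$ and the remaining $c_i$ rounded \emph{down}, the net change in the sum is
\[
\sum_i\tilde c_i-\sum_i c_i=\sum_{i\ne s}(c_i-\tilde c_i)\bigl(2^{s-i}-1\bigr),
\]
and the terms with $i<s$ are strictly positive; if $\sum_i c_i=1$ this forces $\sum_i\tilde c_i>1$. The paper sidesteps this by anchoring at index~$0$: then every factor $2^{0-i}-1$ is nonpositive, so $\sum_i\tilde c_i\le\sum_i c_i$ automatically. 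Note that anchoring at~$0$ does \emph{not} require $c_0$ to be bounded below---the correction $\epsilon_0=\sum_{i\ge1}\epsilon_i/2^i$ is small regardless of $c_0$---only that $c_0<1$, which is why the paper disposes of the case $c_0=1$ separately. If you prefer a general anchor, you can still recover $\sum_i\tilde c_i\le 1$ by rounding $c_i$ up for $i<s$ and down for $i>s$, but this choice of direction has to be made explicit.
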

Having Lemma \ref{lem:upper_bound_2} in hand, we can prove Lemma
\ref{lem:upper_bound_1}:
\begin{proof}
[Proof of Lemma \ref{lem:upper_bound_1}] Let $\epsilon>0$ and
$\epsilon'=\epsilon/3$. Let $\mleft(\ser c,b\mright)\in\mathcal{C}$
which satisfies: 
\begin{equation}
\max_{\ser{\alpha}\in\mathcal{A}}P\mleft(\ser c,\ser{\alpha}\mright)\leq G\mleft(\beta\mright)+\epsilon'.\label{eq:upper_bound_1}
\end{equation}
 Let $K\in\mathbb{N}$ large enough such that Lemma \ref{lem:upper_bound_2}
holds for $\mleft(\ser c,b\mright)$ and $\epsilon'$. Let $\ser{\tilde{\alpha}}\in\mleft[0,1\mright]^{\mathbb{N}}$
which satisfies $\sum_{i=0}^{\infty}\tilde{\alpha}_{i}\tilde{c}/2^{i}=\frac{1}{\beta\cdot2^{b+1}}$
and: 
\begin{equation}
\max_{\ser{\alpha}\in\mathcal{A}}P\mleft(\ser{\tilde{c}},\ser{\alpha}\mright)\leq P\mleft(\ser{\tilde{c}},\ser{\tilde{\alpha}}\mright)+\epsilon'.\label{eq:upper_bound_2}
\end{equation}
 Hence for any $k\geq K$: 
\begin{align*}
\min_{\substack{
\mleft(\ser c,b\mright)\in\mathcal{C}^{\leq k}\\
\ser c\text{ is \ensuremath{k\text{-feasible}}}
}}\max_{\ser{\alpha}\in\mathcal{A}}P\mleft(\ser c,\ser{\alpha}\mright) & \underset{\mleft(*\mright)}{\leq}\max_{\ser{\alpha}\in\mathcal{A}}P\mleft(\ser{\tilde{c}},\ser{\alpha}\mright)\\
 & \underset{\eqref{eq:upper_bound_2}}{\leq}P\mleft(\ser{\tilde{c}},\ser{\tilde{\alpha}}\mright)+\epsilon'\\
 & \underset{\text{Lem \ref{lem:upper_bound_2}}}{\leq}P\mleft(\ser c,\ser{\alpha}\mright)+2\epsilon'\\
 & \underset{\mleft(**\mright)}{\leq}\max_{\ser{\alpha}\in\mathcal{A}}P\mleft(\ser c,\ser{\alpha}\mright)+2\epsilon'\underset{\eqref{eq:upper_bound_1}}{\leq}G\mleft(\beta\mright)+3\epsilon'
\end{align*}
 where $\mleft(*\mright)$ is since $\mleft(\ser{\tilde{c}},b\mright)\in\mathcal{C}^{\leq k}$
and $\ser{\tilde{c}}$ is $k$-feasible, and $\mleft(**\mright)$ is
since $\ser{\alpha}\in\mleft[0,1\mright]^{\mathbb{N}}$ and satisfies
$\sum_{i=0}^{\infty}\alpha_{i}c_{i}/2^{i}=\frac{1}{\beta\cdot2^{b+1}}$.
Since $3\epsilon'=\epsilon$, and since obviously 
\[
\inf_{\mleft(\ser c,b\mright)\in\mathcal{C}}\max_{\ser{\alpha}\in\mathcal{A}}P\mleft(\ser c,\ser{\alpha}\mright)\leq\min_{\substack{
\mleft(\ser c,b\mright)\in\mathcal{C}^{\leq k}\\
\ser c\text{ is \ensuremath{k\text{-feasible}}}
}}\max_{\ser{\alpha}\in\mathcal{A}}P\mleft(\ser c,\ser{\alpha}\mright),
\]
the lemma follows. 
\end{proof}
It remains to prove Lemma \ref{lem:upper_bound_2}. The proof idea
is similar to the idea appearing in the proof of Lemma \ref{lem:low_bound_2}
presented in the previous subsection. Hence, we only present a detailed
proof for this Lemma (without a proof sketch): 
\begin{proof}
[Proof of Lemma \ref{lem:upper_bound_2}] Fix $\mleft(\ser c,b\mright)\in\mleft[0,1\mright]^{\mathbb{N}}$
satisfying $\sum_{i=0}^{\infty}c_{i}/2^{i}=\frac{1}{\beta\cdot2^{b}}$
and $\sum_{i=0}^{\infty}c_{i}\leq1$ (that is, $\mleft(\ser c,b\mright)\in\mathcal{C}$).
Consider two different cases. First, assume that $\ser c$ is the
following sequence: 
\[
c_{i}=\begin{cases}
1 & i=0,\\
0 & i\neq0.
\end{cases}
\]
 It is possible since the equation $\sum_{i=0}^{\infty}c_{i}/2^{i}=1=\frac{1}{\beta\cdot2^{b}}$
holds whenever $\beta=1,b=0$. In that case, $\ser c\in\mathcal{C}^{\leq0}$
and $\ser c$ is $0$-feasible, thus the lemma follows with $K=0$,
$\ser{\tilde{c}}=\ser c$. So, assume now that $c_{0}<1$. We divide
the proof under that assumption into four parts: First we define $\ser{\tilde{c}}$.
Then we show that $\ser{\tilde{c}}\in\mathcal{C}^{\leq K'}$ and $\ser{\tilde{c}}$
is $K'$-feasible for some $K'\in\mathbb{N}$. After that, given $\ser{\tilde{\alpha}}$
we define $\ser{\alpha}$ and show $\sum_{i=0}^{\infty}\alpha_{i}c_{i}/2^{i}=\frac{1}{\beta\cdot2^{b+1}}$.
Finally, we show $P\mleft(\ser{\tilde{c}},\ser{\tilde{\alpha}}\mright)=P\mleft(\ser c,\ser{\alpha}\mright)\pm\epsilon$. 

\paragraph{Defining $\protect\ser{\tilde{c}}$.}

Recall that $\sum_{i=0}^{\infty}c_{i}\leq1$ and thus there is $K\in\mathbb{N}$
such that $\sum_{i>K}c_{i}<\epsilon$. Let $\ser{\tilde{c}}$ be the
following sequence: 
\[
\tilde{c}_{i}=\begin{cases}
c_{0}+\epsilon_{0} & i=0,\\
c_{i}-\epsilon_{i} & i\neq0,
\end{cases}
\]
 where $\ser{\epsilon}$ is an arbitrary sequence satisfying the following
constraints:
\begin{enumerate}
\item If $i>K$: $\epsilon_{i}=c_{i}$.
\item Otherwise, if $1\leq i\leq K$: $\epsilon_{i}=c_{i}-\frac{l_{i}}{\beta\cdot2^{t_{i}}}$,
where $l_{i},t_{i}\in\mathbb{N}$ and $0\leq\epsilon_{i}\leq\epsilon/K$.
\item $\epsilon_{0}=\sum_{i=1}^{\infty}\epsilon_{i}/2^{i}=\frac{l_{0}}{\beta\cdot2^{t_{0}}}-c_{0}$,
where $l_{0},t_{0}\in\mathbb{N}$. 
\end{enumerate}
In order to continue with the proof, we first have to show that such
a sequence $\ser{\epsilon}$ exists. It is not hard to construct a
sequence that satisfies constraints $\mleft(1\mright),\mleft(2\mright)$.
We should satisfy constraint $\mleft(3\mright)$ as well, that is: 
\begin{align*}
\frac{l_{0}}{\beta\cdot2^{t_{0}}}-c_{0} & =\sum_{i=1}^{\infty}\epsilon_{i}/2^{i}\\
 & =\sum_{i=1}^{K}\mleft(c_{i}-\frac{l_{i}}{\beta\cdot2^{t_{i}}}\mright)/2^{i}+\sum_{i>K}c_{i}/2^{i}=\sum_{i=1}^{\infty}c_{i}/2^{i}-\sum_{i=1}^{K}\frac{l_{i}}{\beta\cdot2^{t_{i}+i}}.
\end{align*}
That can be written as: 
\[
\sum_{i=0}^{\infty}c_{i}/2^{i}-\sum_{i=1}^{K}\frac{l_{i}}{\beta\cdot2^{t_{i}+i}}=\frac{l_{0}}{\beta\cdot2^{t_{0}}}.
\]
Recall that since $\mleft(\ser c,b\mright)\in\mathcal{C}$, we have
$\sum_{i=0}^{\infty}c_{i}/2^{i}=\frac{1}{\beta\cdot2^{b}}$, and thus
we get: 
\begin{gather*}
\frac{1}{\beta\cdot2^{b}}-\sum_{i=1}^{K}\frac{l_{i}}{\beta\cdot2^{t_{i}+i}}=\frac{l_{0}}{\beta\cdot2^{t_{0}}}\\
\iff\\
\frac{1}{2^{b}}-\sum_{i=1}^{K}\frac{l_{i}}{2^{t_{i}+i}}=\frac{l_{0}}{2^{t_{0}}}.
\end{gather*}
 We can find $l_{0},t_{0}\in\mathbb{N}$ satisfying this equation:
$t_{0}=b+\sum_{i=1}^{K}t_{i}+i$, and $l_{0}$ is determined accordingly.
Obviously, $t_{0}\in\mathbb{N}$. As for $l_{0}$, it is obviously
in $\mathbb{Z}$ as a sum of numbers in $\mathbb{Z}$. Moreover, by
the constraints: 
\[
l_{0}=\mleft(c_{0}+\sum_{i=1}^{\infty}\epsilon_{i}/2^{i}\mright)\beta\cdot2^{t_{0}}\geq0,
\]
 and thus $l_{0}\in\mathbb{N}$. We have satisfied all constraints,
thus we can find such a sequence $\ser{\epsilon}$. Let us now show
that the values of $\ser{\epsilon}$ are small, even if we sum all
of them together. That fact will help us show that the change of $\ser c$
to $\ser{\tilde{c}}$ has only little affect. 
\begin{gather}
\sum_{i=0}^{\infty}\epsilon_{i}=\sum_{i=1}^{\infty}\epsilon_{i}/2^{i}+\sum_{i=1}^{K}\epsilon_{i}+\sum_{i=K+1}^{\infty}\epsilon_{i}\leq2\sum_{i=1}^{K}\epsilon_{i}+2\sum_{i=K+1}^{\infty}\epsilon_{i}\leq2\sum_{i=1}^{K}\epsilon/K+2\epsilon=4\epsilon.\label{eq:small_epsilons_upper}
\end{gather}

\paragraph{$\protect\ser{\tilde{c}}$ is feasible.}

Now we will show that $\mleft(\ser{\tilde{c}},b\mright)\in\mathcal{C}^{\leq K}$
and $\ser{\tilde{c}}$ is $K'$-feasible for some $K'\geq K$. Based
on the fact that $\mleft(\ser c,b\mright)\in\mathcal{C}$, we first
show $\mleft(\ser{\tilde{c}},b\mright)\in\mathcal{C}^{\leq K}$, that
is: 
\begin{enumerate}
\item $\ser{\tilde{c}}\in\mleft[0,1\mright]^{\mathbb{N}}.$
\item $\sum_{i=0}^{\infty}\tilde{c}_{i}\leq1.$
\item $\sum_{i=0}^{\infty}\tilde{c}_{i}/2^{i}=\frac{1}{\beta\cdot2^{b}}.$
\item $i>K\implies\tilde{c_{i}}=0$ (this is obvious by the definition of
$\ser{\tilde{c}}$).
\end{enumerate}
We show $\mleft(1\mright)$: For any $i\neq0$ it is obvious that $0\leq\tilde{c_{i}}\leq1$
from the definition of $\ser{\tilde{c}}$. For $i=0$, $0\leq c_{0}<1$
and hence $0\leq\tilde{c_{0}}=c_{0}+\epsilon_{0}\leq1$ for small
enough $\epsilon$. Thus $\ser{\tilde{c}}\in\mleft[0,1\mright]^{\mathbb{N}}$.
Now we show $\mleft(2\mright)$: 
\begin{align*}
\sum_{i=0}^{\infty}\tilde{c}_{i} &= c_{0}+\sum_{i=1}^{\infty}\epsilon_{i}/2^{i}+\sum_{i=1}^{K}\mleft(c_{i}-\epsilon_{i}\mright)\\
&= \sum_{i=1}^{\infty}\epsilon_{i}/2^{i}+\sum_{i=0}^{K}c_{i}-\sum_{i=1}^{K}\epsilon_{i}\\
& \leq\sum_{i=0}^{K}c_{i}+\sum_{i=K+1}^{\infty}\epsilon_{i}\\
&= \sum_{i=0}^{\infty}c_{i}\leq1.
\end{align*}
 And finally $\mleft(3\mright)$: 

\begin{gather*}
\sum_{i=0}^{\infty}\tilde{c}_{i}/2^{i}=c_{0}+\epsilon_{0}+\sum_{i=1}^{\text{\ensuremath{\infty}}}\mleft(c_{i}-\epsilon_{i}\mright)/2^{i}=\sum_{i=0}^{\text{\ensuremath{\infty}}}c_{i}/2^{i}+\epsilon_{0}-\sum_{i=1}^{\text{\ensuremath{\infty}}}\epsilon_{i}/2^{i}=\sum_{i=0}^{\text{\ensuremath{\infty}}}c_{i}/2^{i}=\frac{1}{\beta\cdot2^{b}}.
\end{gather*}
 So indeed $\mleft(\ser{\tilde{c}},b\mright)\in\mathcal{C}^{\leq K}$.
We show that $\mleft(\ser{\tilde{c}},b\mright)$ is $K'$-feasible where
\[
K'=\max\mleft\{ K,t_{0},\dots,t_{K}\mright\} .
\]
 Let $i\in\mathbb{N}$. If $i>K$ then $\tilde{c}_{i}n=0\in\mathbb{N}$.
Else: 
\[
\tilde{c}_{i}n=\frac{l_{i}}{\beta\cdot2^{t_{i}}}\beta\cdot2^{K'}=l_{i}\cdot2^{K'-t_{i}}\in\mathbb{N},
\]
 since $K'\geq t_{i}$.

\paragraph{Defining $\protect\ser{\alpha}$.}

Given $\ser{\tilde{\alpha}}\in\mleft[0,1\mright]^{\mathbb{N}}$ such
that $\sum_{i=0}^{\infty}\tilde{\alpha}_{i}\tilde{c}_{i}/2^{i}=\frac{1}{\beta\cdot2^{b+1}}$,
we construct $\ser{\alpha}\in\mleft[0,1\mright]^{\mathbb{N}}$ that
``imitates'' $\ser{\tilde{\alpha}}$ and satisfies $\sum_{i=0}^{\infty}\alpha_{i}c_{i}/2^{i}=\frac{1}{\beta\cdot2^{b+1}}$.
For such a sequence $\ser{\tilde{\alpha}}$, consider the following
expression: 
\[
r\mleft(\ser{\tilde{\alpha}}\mright)=\sum_{i=1}^{\infty}\tilde{\alpha}_{i}\epsilon_{i}/2^{i}-\tilde{\alpha}_{0}\epsilon_{0}.
\]
 If $r\mleft(\ser{\tilde{\alpha}}\mright)\geq0$, let $s$ be the first
index such that $\tilde{c}_{s},\tilde{\alpha}_{s}>1/2^{2\mleft(b+5\mright)}$.
Else, let $s$ be the first index such that $\tilde{c}_{s}>1/2^{2\mleft(b+5\mright)}$
and $\tilde{\alpha}_{s}<3/4$. In any case, $s$ exists and is bounded
by $2^{b+4}$, by Lemma \ref{lem:s_index}. Define the sequence $\ser{\alpha}$
as follows: 
\[
\alpha_{i}=\begin{cases}
\tilde{\alpha}_{s}-\delta_{s} & i=s,\\
\tilde{\alpha}_{i} & i\neq s,
\end{cases}
\]
 where $\delta_{s}=\frac{2^{s}}{c_{s}}r\mleft(\ser{\tilde{\alpha}}\mright)$.
Note that $\mleft|\delta_{s}\mright|$ is small since $\mleft|r\mleft(\ser{\tilde{\alpha}}\mright)\mright|$
is small: 
\[
\mleft|r\mleft(\ser{\tilde{\alpha}}\mright)\mright|\leq\sum_{i=1}^{\infty}\tilde{\alpha}_{i}\epsilon_{i}/2^{i}+\tilde{\alpha}_{0}\epsilon_{0}\leq\sum_{i=0}^{\infty}\epsilon_{i}\underset{\eqref{eq:small_epsilons_upper}}{\leq}4\epsilon,
\]
and $s$ is bounded by a constant. We show that $\ser{\alpha}\in\mleft[0,1\mright]^{\mathbb{N}}$:
If $i\neq s$ then $0\leq\tilde{\alpha}_{i}=\alpha_{i}\leq1$. For
$i=s$, if $r\mleft(\ser{\tilde{\alpha}}\mright)\geq0$ then: 
\[
\alpha_{s}=\tilde{\alpha}_{s}-\delta_{s}>1/2^{2\mleft(b+5\mright)}-\delta_{s}>0
\]
 for small enough $\epsilon$ and obviously $\alpha_{s}=\tilde{\alpha}_{s}-\delta_{s}\leq\tilde{\alpha}_{s}\leq1$.
Otherwise, assume $r\mleft(\ser{\tilde{\alpha}}\mright)<0$, then: 
\[
\alpha_{s}=\tilde{\alpha}_{s}-\delta_{s}<3/4-\delta_{s}\leq1
\]
 for small enough $\epsilon$ and obviously $\alpha_{s}=\tilde{\alpha}_{s}-\delta_{s}\geq\tilde{\alpha}_{s}\geq0$.
Thus $\ser{\alpha}\in\mleft[0,1\mright]^{\mathbb{N}}$. We show that
$\sum_{i=1}^{\infty}\alpha_{i}c_{i}/2^{i}=\frac{1}{\beta\cdot2^{b+1}}$,
that is, $\sum_{i=1}^{\infty}\alpha_{i}c_{i}/2^{i}=\sum_{i=1}^{\infty}\tilde{\alpha}_{i}\tilde{c}_{i}/2^{i}$:
\begin{align*}
\sum_{i=1}^{\infty}\tilde{\alpha}_{i}\tilde{c}_{i}/2^{i} & =\tilde{\alpha}_{0}\mleft(c_{0}+\epsilon_{0}\mright)+\sum_{i=1}^{\infty}\tilde{\alpha}_{i}\mleft(c_{i}-\epsilon_{i}\mright)/2^{i}\\
 & =\sum_{i=0}^{\infty}\tilde{\alpha}_{i}c_{i}/2^{i}+\tilde{\alpha}_{0}\epsilon_{0}-\sum_{i=1}^{\infty}\tilde{\alpha}_{i}\epsilon_{i}/2^{i}\\
 & =\sum_{i=0}^{\infty}\alpha_{i}c_{i}/2^{i}+\delta_{s}c_{s}/2^{s}-r\mleft(\ser{\tilde{\alpha}}\mright)\\
 & =\sum_{i=0}^{\infty}\alpha_{i}c_{i}/2^{i}+\frac{2^{s}}{c_{s}}r\mleft(\ser{\tilde{\alpha}}\mright)\cdot\frac{c_{s}}{2^{s}}-r\mleft(\ser{\tilde{\alpha}}\mright)=\sum_{i=0}^{\infty}\alpha_{i}c_{i}/2^{i}.
\end{align*}

\paragraph{$P\mleft(\protect\ser c,\protect\ser{\alpha}\mright)$ approximates
$P\mleft(\protect\ser{\tilde{c}},\protect\ser{\tilde{\alpha}}\mright)$.}

It remains to show $P\mleft(\ser{\tilde{c}},\ser{\tilde{\alpha}}\mright)=P\mleft(\ser c,\ser{\alpha}\mright)\pm\epsilon$.
Due to Lemma \ref{lem:entropy_bounds}, indeed: 
\begin{align*}
P\mleft(\ser{\tilde{c}},\ser{\tilde{\alpha}}\mright) & =\sum_{i=0}^{\infty}h\mleft(\tilde{\alpha}_{i}\mright)\tilde{c}_{i}-h\mleft(\sum_{i=0}^{\infty}\tilde{\alpha}_{i}\tilde{c}_{i}\mright)\\
 & =h\mleft(\alpha_{s}+\delta_{s}\mright)\tilde{c}_{s}+\sum_{i\neq s}h\mleft(\alpha_{i}\mright)\tilde{c}_{i}-h\mleft(\sum_{i=0}^{\infty}\alpha_{i}\tilde{c}_{i}+\delta_{s}\tilde{c}_{s}\mright)\\
 & =\sum_{i=0}^{\infty}h\mleft(\alpha_{i}\mright)\tilde{c}_{i}-h\mleft(\sum_{i=0}^{\infty}\alpha_{i}\tilde{c}_{i}\mright)\pm h\mleft(\mleft|\delta_{s}\mright|\mright)\tilde{c}_{s}\pm h\mleft(\mleft|\delta_{s}\mright|\tilde{c}_{s}\mright)\\
 & =P\mleft(\ser{\tilde{c}},\ser{\alpha}\mright)\pm h\mleft(\mleft|\delta_{s}\mright|\mright)\tilde{c}_{s}\pm h\mleft(\mleft|\delta_{s}\mright|\tilde{c}_{s}\mright).
\end{align*}
 Recall that $\mleft|\delta_{s}\mright|$ is small. Now: 
\begin{align*}
P\mleft(\ser{\tilde{c}},\ser{\alpha}\mright) & =\sum_{i=0}^{\infty}h\mleft(\alpha_{i}\mright)\tilde{c}_{i}-h\mleft(\sum_{i=0}^{\infty}\alpha_{i}\tilde{c}_{i}\mright)\\
 & =\sum_{i=0}^{\infty}h\mleft(\alpha_{i}\mright)c_{i}+h\mleft(\alpha_{0}\mright)\epsilon_{0}-\sum_{i=1}^{\infty}h\mleft(\alpha_{i}\mright)\epsilon_{i}-h\mleft(\sum_{i=0}^{\infty}\alpha_{i}c_{i}+\alpha_{0}\epsilon_{0}-\sum_{i=1}^{\infty}\alpha_{i}\epsilon_{i}\mright)\\
 & =\sum_{i=0}^{\infty}h\mleft(\alpha_{i}\mright)c_{i}-h\mleft(\sum_{i=0}^{\infty}\alpha_{i}c_{i}\mright)+h\mleft(\alpha_{0}\mright)\epsilon_{0}-\sum_{i=1}^{\infty}h\mleft(\alpha_{i}\mright)\epsilon_{i}\pm h\mleft(\alpha_{0}\epsilon_{0}\mright)\pm h\mleft(\sum_{i=1}^{\infty}\alpha_{i}\epsilon_{i}\mright)\\
 & =P\mleft(\ser c,\ser{\alpha}\mright)+h\mleft(\alpha_{0}\mright)\epsilon_{0}-\sum_{i=1}^{\infty}h\mleft(\alpha_{i}\mright)\epsilon_{i}\pm h\mleft(\alpha_{0}\epsilon_{0}\mright)\pm h\mleft(\sum_{i=1}^{\infty}\alpha_{i}\epsilon_{i}\mright).
\end{align*}
Recall that $\sum_{i=0}^{\infty}\epsilon_{i}\leq4\epsilon$ due to
(\ref{eq:small_epsilons_upper}). Thus, we can choose $\epsilon'$
small enough and apply the proof for $\epsilon'$, such that 
\[
P\mleft(\ser{\tilde{c}},\ser{\tilde{\alpha}}\mright)=P\mleft(\ser{\tilde{c}},\ser{\alpha}\mright)\pm\epsilon'=P\mleft(\ser c,\ser{\alpha}\mright)\pm2\epsilon'=P\mleft(\ser c,\ser{\alpha}\mright)\pm\epsilon.\qedhere
\]
\end{proof}

\section{Applications of Theorem~\ref{theo:main} \label{sec:Applications}}

\subsection{Alternative proofs for known bounds on $q\mleft(n\mright)$}

In the previous sections we have shown the estimate $q\mleft(n\mright)=2^{-G\mleft(\beta\mright)n\pm o\mleft(n\mright)}$.
Unfortunately, we do not know how to calculate $G\mleft(\beta\mright)$
in general. However, we can use this estimate to give alternative
proofs for known bounds on $q\mleft(n\mright)$, and
in the next subsection, also to give a better lower bound. The following
theorems are stated and proved in \cite{DFGM2019}: 
\begin{theorem}[\cite{DFGM2019}]
\label{theo:upper_bound}For any $n$, it holds that $q\mleft(n\mright)\leq1.25^{n+o\mleft(n\mright)}$.
\end{theorem}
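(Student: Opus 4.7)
The plan is to reduce Theorem~\ref{theo:upper_bound} to a lower bound on $G(\beta)$: by the corollary to Theorem~\ref{theo:main}, $q(n) = 2^{-G(\beta)n \pm o(n)}$, so it suffices to prove that $G(\beta) \geq -\log_2(5/4) = \log_2(4/5)$ for every $\beta \in [1,2)$. Since $G(\beta) = \inf_{(\ser c,b) \in \mathcal{C}} \max_{\ser\alpha \in \mathcal{A}} P(\ser c, \ser\alpha)$, this amounts to exhibiting, for each $(\ser c, b) \in \mathcal{C}$, a sequence $\ser\alpha \in \mathcal{A}$ satisfying $P(\ser c, \ser\alpha) \geq \log_2(4/5)$.

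The first step is to choose the canonical balanced sequence $\alpha_i = 1/2$ for every $i$. This is in $\mathcal{A}(\ser c, b)$ for \emph{any} $(\ser c, b) \in \mathcal{C}$, because
\[
\sum_{i=0}^{\infty} \alpha_i c_i / 2^i \;=\; \tfrac{1}{2} \sum_{i=0}^{\infty} c_i / 2^i \;=\; \tfrac{1}{2} \cdot \tfrac{1}{\beta \cdot 2^b} \;=\; \tfrac{1}{\beta \cdot 2^{b+1}},
\]
which is precisely the feasibility constraint defining $\mathcal{A}$. Using $h(1/2) = 1$ and letting $\sigma = \sum_{i=0}^{\infty} c_i \in (0, 1]$, we obtain
\[
P(\ser c, \ser\alpha) \;=\; \sum_{i=0}^{\infty} h(1/2)\, c_i \;-\; h\!\left(\sum_{i=0}^{\infty} (1/2) c_i\right) \;=\; \sigma - h(\sigma/2).
\]
So the theorem reduces to a one-variable calculus fact: $\sigma - h(\sigma/2) \geq \log_2(4/5)$ for all $\sigma \in (0, 1]$.

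The last step is the routine minimization of $f(\sigma) = \sigma - h(\sigma/2)$ on $(0, 1]$. Differentiating gives $f'(\sigma) = 1 - \tfrac{1}{2} \log_2((2-\sigma)/\sigma)$, whose unique zero on $(0,1]$ is $\sigma^* = 2/5$, obtained from $(2-\sigma)/\sigma = 4$. A direct evaluation yields $f(2/5) = 2/5 - h(1/5) = 2 - \log_2 5 = \log_2(4/5)$, while $f(\sigma) \to 0$ as $\sigma \to 0^+$ and $f(1) = 0$, confirming that this interior critical point is the global minimum. Combining everything, $\max_{\ser\alpha \in \mathcal{A}} P(\ser c, \ser\alpha) \geq P(\ser c, \tfrac{1}{2} \mathbf{1}) \geq \log_2(4/5)$, whence $G(\beta) \geq -\log_2(5/4)$ for every $\beta \in [1,2)$, and the bound $q(n) \leq 1.25^{n+o(n)}$ follows.

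There is essentially no obstacle here: once the formula for $G(\beta)$ from Theorem~\ref{theo:main} is available, the only insight required is that the uniform choice $\alpha_i = 1/2$ is automatically feasible and reduces $P$ to a simple scalar function, whose minimum is serendipitously equal to $-\log_2 1.25$. It is noteworthy that the same magic constant $5\sigma = 2$, i.e.\ $\theta = 1/5$, that arises in the Duffus--Sands--Winkler fiber lower bound (via the parameter $a/n = 1/5$ in the introduction) reappears here as the worst-case total probability mass $\sum c_i$ of the non-tail elements.
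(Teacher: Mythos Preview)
Your proof is correct and essentially identical to the paper's: both choose the constant sequence $\alpha_i = 1/2$, reduce $P(\ser c,\ser\alpha)$ to the scalar function $x - h(x/2)$ with $x=\sum_i c_i$, and minimize it at $x=2/5$ to obtain $G(\beta)\ge -\log_2 1.25$. Your version is slightly more detailed (explicit derivative and boundary checks) and adds the nice remark linking $\sigma=2/5$ to the $\theta=1/5$ parameter in the Duffus--Sands--Winkler bound, but the argument is the same.
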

\begin{theorem}[\cite{DFGM2019}]
\label{theo:lower_bound} For $n=\beta\cdot2^{k}$ it holds that $q\mleft(n\mright)\geq2^{\mleft(h\mleft(\frac{1}{2^{b+1}\beta}\mright)-\frac{1}{2^{b}\beta}\mright)n-o\mleft(n\mright)}$
for any $b\in\mathbb{N}$. 
\end{theorem}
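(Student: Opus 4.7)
The plan is to apply Theorem~\ref{theo:main}, which gives $q(n) = 2^{-G(\beta) n \pm o(n)}$, and to exhibit, for each $b \in \mathbb{N}$, a single pair $(\mathbf{c}, b) \in \mathcal{C}$ for which $\max_{\alpha \in \mathcal{A}} P(\mathbf{c}, \alpha)$ is easy to compute. Since $G(\beta) = \inf_{(\mathbf{c}, b) \in \mathcal{C}} \max_{\alpha \in \mathcal{A}} P(\mathbf{c}, \alpha)$, any valid pair provides an upper bound on $G(\beta)$, which translates directly into a lower bound on $q(n)$. This is essentially the framework-level reformulation of the classical lower bound argument of Dagan et al.\ in which the hard distribution consists of a single block of equal-probability elements together with a tail.

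Fix $b \in \mathbb{N}$ and propose the degenerate sequence with $c_0 = 1/(\beta \cdot 2^b)$ and $c_i = 0$ for all $i \geq 1$. I would first verify $(\mathbf{c}, b) \in \mathcal{C}$: the equality $\sum_{i \geq 0} c_i/2^i = c_0 = 1/(\beta \cdot 2^b)$ is immediate, and the inequality $\sum_{i \geq 0} c_i = c_0 \leq 1$ holds because $\beta \geq 1$ and $b \geq 0$. Intuitively, this pair describes a dyadic distribution on a single block of $2^{k-b}$ elements of maximum probability $2^{b-k}$, supplemented by a tail consisting of the remaining elements.

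Next, I would observe that for every $\alpha \in \mathcal{A}$ the constraint $\sum_{i \geq 0} \alpha_i c_i / 2^i = 1/(\beta \cdot 2^{b+1})$ collapses to $\alpha_0 c_0 = 1/(\beta \cdot 2^{b+1})$, forcing $\alpha_0 = 1/2$, while the values $\alpha_i$ for $i \geq 1$ are irrelevant because $c_i = 0$ kills the corresponding terms of $P$ (both inside $\sum h(\alpha_i) c_i$ and inside $h(\sum \alpha_i c_i)$). Hence
\[
  \max_{\alpha \in \mathcal{A}} P(\mathbf{c}, \alpha) \;=\; h(1/2)\, c_0 - h(c_0/2) \;=\; \frac{1}{\beta \cdot 2^b} - h\mleft(\frac{1}{\beta \cdot 2^{b+1}}\mright),
\]
so $G(\beta) \leq 1/(\beta \cdot 2^b) - h(1/(\beta \cdot 2^{b+1}))$, and plugging into $q(n) = 2^{-G(\beta) n \pm o(n)}$ yields the claimed bound. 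There is no real obstacle: the construction is transparent and the maximum over $\alpha$ is forced. Optimizing the resulting bound over $b$ (for each $\beta$) then recovers the $1.232^{n - o(n)}$ bound of Dagan et al.
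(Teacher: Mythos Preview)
Your proposal is correct and essentially identical to the paper's own proof: both fix $b$, take the degenerate sequence $c_0 = 1/(\beta\cdot 2^b)$, $c_i=0$ for $i\geq 1$, verify membership in $\mathcal{C}$, observe that $\alpha_0=1/2$ is forced while the remaining $\alpha_i$ are irrelevant, and read off $G(\beta)\leq \tfrac{1}{2^b\beta}-h(\tfrac{1}{2^{b+1}\beta})$.
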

For any $\beta$, we can find a ``good" lower bound on $q\mleft(n\mright)$
by choosing $b=0$ or $b=1$ and applying $\mbox{Theorem \ref{theo:lower_bound}}$.
Specifically, when $\beta=1.25$ we get $q\mleft(n\mright)=1.25^{n\pm o\mleft(n\mright)}$
by choosing $b=1$, and for other values of $\beta$ we can always
ensure that $q\mleft(n\mright)\geq1.232^{n-o\mleft(n\mright)}$
by choosing $b=0$ or $b=1$, depending on $\beta$. We give alternative,
simple proofs for these bounds: 
\begin{proof}
[Proof of Theorem \ref{theo:upper_bound}] Fix $\beta\in\mleft[1,2\mright)$.
Let $\ser{\alpha}$ such that: $\alpha_{i}=1/2$ for any $i$. Note
that $\ser{\alpha}\in\mathcal{A}$ for any fixed $\mleft(\ser c,b\mright)\in\mathcal{C}$.
Thus: 
\begin{align*}
G\mleft(\beta\mright) & =\inf_{\mleft(\ser c,b\mright)\in\mathcal{C}}\max_{\ser{\alpha}\in\mathcal{A}}\sum_{i=0}^{\infty}h\mleft(\alpha_{i}\mright)c_{i}-h\mleft(\sum_{i=0}^{\infty}\alpha_{i}c_{i}\mright)\\
 & \geq\inf_{\mleft(\ser c,b\mright)\in\mathcal{C}}\sum_{i=0}^{\infty}h\mleft(1/2\mright)c_{i}-h\mleft(\sum_{i=0}^{\infty}\frac{1}{2}c_{i}\mright)\\
 & =\inf_{\mleft(\ser c,b\mright)\in\mathcal{C}}\sum_{i=0}^{\infty}c_{i}-h\mleft(\frac{1}{2}\sum_{i=0}^{\infty}c_{i}\mright).
\end{align*}
 Denote $x=\sum_{i=0}^{\infty}c_{i}$, so $0\leq x\leq1$, since $\mleft(\ser c,b\mright)\in\mathcal{C}$.
Hence : 
\[
\inf_{\mleft(\ser c,b\mright)\in\mathcal{C}}\sum_{i=0}^{\infty}c_{i}-h\mleft(\frac{1}{2}\sum_{i=0}^{\infty}c_{i}\mright)\geq\inf_{0\leq x\leq1}x-h\mleft(x/2\mright).
\]
Calculation shows that: 
\[
\inf_{0\leq x\leq1}x-h\mleft(x/2\mright)=0.4-h\mleft(0.4/2\mright)=-\log1.25.
\]
That is, $G\mleft(\beta\mright)\geq-\log1.25$, and hence indeed: 
\[
q\mleft(n\mright)=2^{-G\mleft(\beta\mright)n\pm o\mleft(n\mright)}\leq2^{\log1.25n+o\mleft(n\mright)}=1.25^{n+o\mleft(n\mright)}.\qedhere
\]
\end{proof}

\begin{proof}
[Proof of Theorem \ref{theo:lower_bound}] Fix $\beta\in\mleft[1,2\mright)$.
Let $\mleft(\ser{c\mleft(b\mright)},b\mright)\in\mathcal{C}$ such that
$b\in\mathbb{N}$ and $\ser{c\mleft(b\mright)}$ is the following sequence:
\[
c\mleft(b\mright)_{i}=\begin{cases}
\frac{1}{2^{b}\beta} & i=0,\\
0 & i\neq0.
\end{cases}
\]
 Indeed $\mleft(\ser{c\mleft(b\mright)},b\mright)\in\mathcal{C}$ for
any $b$, since all constraints are satisfied:
\begin{itemize}
\item $b\in\mathbb{N}.$
\item We have $0\leq\frac{1}{2^{b}\cdot2}\leq\frac{1}{2^{b}\beta}\leq\frac{1}{2^{0}\cdot1}=1$
and hence $0\leq c\mleft(b\mright)_{i}\leq1$ for any $i$.
\item $\sum_{i=0}^{\infty}c\mleft(b\mright)_{i}=\frac{1}{2^{b}\beta}\leq1$.
\item $\sum_{i=0}^{\infty}c\mleft(b\mright)_{i}2^{b-i}\cdot\beta=\frac{1}{2^{b}\beta}\cdot2^{b}\beta=1.$ 
\end{itemize}
Moreover, a sequence $\ser{\alpha}\in\mathcal{A}$ must satisfy $\alpha_{0}=1/2$,
and for any other $i$ the value of $\alpha_{i}$ does not have any
effect. Thus: 
\begin{align*}
G\mleft(\beta\mright) & =\inf_{\mleft(\ser c,b\mright)\in\mathcal{C}}\max_{\ser{\alpha}\in\mathcal{A}}\sum_{i=0}^{\infty}h\mleft(\alpha_{i}\mright)c_{i}-h\mleft(\sum_{i=0}^{\infty}\alpha_{i}c_{i}\mright)\\
 & \leq\max_{\ser{\alpha}\in\mathcal{A}}\sum_{i=0}^{\infty}h\mleft(\alpha_{i}\mright)c\mleft(b\mright)_{i}-h\mleft(\sum_{i=0}^{\infty}\alpha_{i}c\mleft(b\mright)_{i}\mright)\\
 & =h\mleft(1/2\mright)\cdot\frac{1}{2^{b}\beta}-h\mleft(\frac{1}{2}\cdot\frac{1}{2^{b}\beta}\mright)=\frac{1}{2^{b}\beta}-h\mleft(\frac{1}{2^{b+1}\beta}\mright).
\end{align*}
 Hence: 
\[
q\mleft(n\mright)=2^{-G\mleft(\beta\mright)n\pm o\mleft(n\mright)}\geq2^{\mleft(h\mleft(\frac{1}{2^{b+1}\beta}\mright)-\frac{1}{2^{b}\beta}\mright)n-o\mleft(n\mright)}.\qedhere
\]
\end{proof}

\subsection{\label{subsec:A-new-lower}A new lower bound on $q\mleft(n\mright)$}

Using our estimate $q\mleft(n\mright)=2^{-G\mleft(\beta\mright)n\pm o\mleft(n\mright)}$
we can find a tighter lower bound on $q\mleft(n\mright)$
than the one appearing in \cite{DFGM2019}, that is, $1.232^{n-o\mleft(n\mright)}$.
We do that by finding a matching upper bound on $G\mleft(\beta\mright)$.
We already know that $G\mleft(\beta\mright)\leq\frac{1}{2^{b}\beta}-h\mleft(\frac{1}{2^{b+1}\beta}\mright)$
for any $b\in\mathbb{N}$, as described in our alternative proof for
the known lower bound on $q\mleft(n\mright)$. For $\beta\leq1.7$
and $b=1$ we have $G\mleft(\beta\mright)\leq\frac{1}{2\cdot1.7}-h\mleft(\frac{1}{4\cdot1.7}\mright)\approx-0.3083$
and for $\beta\geq1.95$ we have $G\mleft(\beta\mright)\leq\frac{1}{1.95}-h\mleft(\frac{1}{2\cdot1.95}\mright)\approx-0.30846$.
So, if we find $M>-0.3083$ such that for $1.7<\beta<1.95$: $G\mleft(\beta\mright)\leq M$,
then $M$ is an upper bound for $G\mleft(\beta\mright)$. As we will
now show, it is possible for $M\approx-0.305758$. The idea is to
fix $b=1$ and consider sequences $\ser c$ in which $s=c_{0}+c_{1}$
is fixed and $c_{i}=0$ for all $i\geq2$. Then, due to the constraint
$\sum_{i=0}^{\infty}c_{i}/2^{i}=1/2\beta$ we can express $c_{0}$
and $c_{1}$ in terms of $\beta$. Finally, we use Lagrange multipliers
to find the maximizing $\ser{\alpha}$ for $\ser c$. 
\begin{theorem} \label{thm:1.236}
For any $n\in\mathbb{N}$, it holds that $q\mleft(n\mright)\geq1.236^{n-o\mleft(n\mright)}$. 
\end{theorem}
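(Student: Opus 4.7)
The plan is to invoke Theorem~\ref{theo:main} and reduce the problem to showing $G(\beta) \leq -\log_2 1.236 \approx -0.3058$ uniformly for $\beta \in [1,2)$. The argument splits into three ranges. On the two tails, the simple single-term bound from Theorem~\ref{theo:lower_bound} already suffices. For $\beta \in [1,1.7]$, take $b=1$: the bound $\frac{1}{2\beta} - h(\frac{1}{4\beta})$ is a smooth function of $\beta$ whose maximum on this interval occurs at the right endpoint, where it equals $\frac{1}{3.4} - h(\frac{1}{6.8}) \approx -0.3083$. For $\beta \in [1.95,2)$, take $b=0$: the analogous bound $\frac{1}{\beta} - h(\frac{1}{2\beta})$ is decreasing on this interval, taking its maximum at $\beta=1.95$ with value $\approx -0.3085$. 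Both values lie safely below $-\log_2 1.236$.

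The intermediate regime $\beta \in (1.7, 1.95)$ is where the work concentrates, as both single-term bounds exceed $-0.3058$ there. Following the strategy sketched in the excerpt, I would fix $b=1$ and use the two-term construction $c_i=0$ for $i \geq 2$, parameterized by $s := c_0 + c_1$. The normalization $c_0 + c_1/2 = 1/(2\beta)$ forces
\begin{equation*}
c_0 = \tfrac{1}{\beta} - s, \qquad c_1 = 2s - \tfrac{1}{\beta},
\end{equation*}
which is admissible precisely for $s \in [\frac{1}{2\beta}, \frac{1}{\beta}]$. Any $\ser{\alpha} \in \mathcal{A}$ is then effectively two-dimensional, with $\alpha_0, \alpha_1$ constrained by $\alpha_0 c_0 + \alpha_1 c_1/2 = 1/(4\beta)$; the remaining $\alpha_i$ are free but enter neither the constraint nor $P$ in a way that improves the objective, so we may set them to $1/2$. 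Applying Lagrange multipliers to
\begin{equation*}
P(\ser c, \ser{\alpha}) = h(\alpha_0)c_0 + h(\alpha_1)c_1 - h(\alpha_0 c_0 + \alpha_1 c_1)
\end{equation*}
yields, with $m := \alpha_0 c_0 + \alpha_1 c_1$, the stationarity conditions $h'(\alpha_0) - h'(m) = \lambda$ and $h'(\alpha_1) - h'(m) = \lambda/2$, equivalently $2h'(\alpha_1) = h'(\alpha_0) + h'(m)$. Together with the linear constraint, this determines the maximizer $(\alpha_0^*, \alpha_1^*)$ as a function of $\beta$ and $s$, and hence produces an upper bound $U(\beta,s)$ on $\max_{\ser{\alpha}} P(\ser c, \ser{\alpha})$, and thus on $G(\beta)$.

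The main obstacle is that the Lagrange system $2h'(\alpha_1) = h'(\alpha_0) + h'(m)$ together with the affine constraint is nonlinear and lacks a clean closed form, so verifying $\inf_s U(\beta, s) \leq -0.3058$ uniformly on $(1.7, 1.95)$ requires care. I would address this in two complementary ways: first, by choosing a convenient explicit $s = s(\beta)$ (for instance, the value equalizing $c_0$ and $c_1/2$, which is $s = 1/(2\beta) \cdot \text{const}$, or one derived from a first-order condition in $s$), reducing the problem to a one-dimensional curve; second, by solving the resulting reduced system numerically on a fine grid of $\beta$ and combining the grid values with a uniform Lipschitz estimate on $U(\beta,s)$ in $\beta$ (derived from the explicit dependence of $c_0, c_1$ and of the constraint on $\beta$) to extend the bound to the full interval. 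Since only an upper bound on $G(\beta)$ is needed, it is also legitimate to substitute suboptimal but explicit functions $\alpha_0(\beta), \alpha_1(\beta)$ for the Lagrangian solution, provided the resulting value of $P$ still lies below $-0.3058$; this would trade numerical optimality for transparency. Combining the three ranges, $G(\beta) \leq -\log_2 1.236$ throughout $[1,2)$, which by Theorem~\ref{theo:main} yields $q(n) \geq 1.236^{n - o(n)}$.
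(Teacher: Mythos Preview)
Your proposal is correct and follows essentially the same route as the paper: split $[1,2)$ into the three ranges $[1,1.7]$, $(1.7,1.95)$, $[1.95,2)$, use the single-term bound with $b=1$ and $b=0$ on the outer ranges, and on the middle range take $b=1$ with the two-term sequence $c_0=1/\beta-s$, $c_1=2s-1/\beta$, then solve the Lagrange system for $(\alpha_0,\alpha_1)$. The only substantive difference is in the final step: rather than varying $s$ with $\beta$ or appealing to a grid-plus-Lipschitz argument, the paper simply fixes a single constant $s=829/2000$ (found by a numerical sweep) and verifies that for this $s$ the maximum of $P(\ser c,\ser\alpha)$ over $\ser\alpha$ and over $\beta\in(1.7,1.95)$ is about $-0.305758$, attained near $\beta\approx 1.809$; it then confirms the interior critical point is indeed a maximum by checking the boundary values $\alpha_i\in\{0,1\}$.
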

\begin{proof}
Consider the sequence $\ser c$ defined by $c_{0}=1/\beta-s$, $c_{1}=2s-1/\beta$
and $c_{i}=0$ for all $i\geq2$, for some fixed $s$. It is feasible:
\[
\sum_{i=0}^{\infty}c_{i}/2^{i}=1/\beta-s+\mleft(2s-1/\beta\mright)/2=1/2\beta,
\]
 as required. We calculate $\max_{\ser{\alpha}\in\mathcal{A}}P\mleft(\ser c,\ser{\alpha}\mright)$
using Lagrange multipliers. Our only constraint is $\sum_{i=0}^{\infty}\alpha_{i}c_{i}/2^{i}-1/4\beta=0$,
so we get the Lagrangian function 
\[
\mathcal{L}\mleft(\alpha_{0},\alpha_{1},\lambda\mright)=h\mleft(\alpha_{0}\mright)c_{0}+h\mleft(\alpha_{1}\mright)c_{1}-h\mleft(\alpha_{0}c_{0}+\alpha_{1}c_{1}\mright)+\lambda\mleft(\alpha_{0}c_{0}+\alpha_{1}c_{1}/2-1/4\beta\mright).
\]
Recall that $h'\mleft(x\mright)=\log\frac{1-x}{x}$ and compute the
derivatives: 
\begin{align} \label{eq:d1}
\frac{d\mathcal{L}\mleft(\alpha_{0},\alpha_{1},\lambda\mright)}{d\alpha_{0}} &= c_{0}\log\frac{1-\alpha_{0}}{\alpha_{0}}-c_{0}\log\frac{1-\alpha_{0}c_{0}-\alpha_{1}c_{1}}{\alpha_{0}c_{0}+\alpha_{1}c_{1}}+\lambda c_{0}\nonumber\\
&= c_{0}\log\frac{\mleft(1-\alpha_{0}\mright)\mleft(\alpha_{0}c_{0}+\alpha_{1}c_{1}\mright)}{\alpha_{0}\mleft(1-\alpha_{0}c_{0}-\alpha_{1}c_{1}\mright)}+\lambda c_{0}=0,
\end{align}

\begin{align} \label{eq:d2}
\frac{d\mathcal{L}\mleft(\alpha_{0},\alpha_{1},\lambda\mright)}{d\alpha_{1}} &= c_{1}\log\frac{1-\alpha_{1}}{\alpha_{1}}-c_{1}\log\frac{1-\alpha_{0}c_{0}-\alpha_{1}c_{1}}{\alpha_{0}c_{0}+\alpha_{1}c_{1}}+\lambda c_{1}/2\nonumber\\
&= c_{1}\log\frac{\mleft(1-\alpha_{1}\mright)\mleft(\alpha_{0}c_{0}+\alpha_{1}c_{1}\mright)}{\alpha_{1}\mleft(1-\alpha_{0}c_{0}-\alpha_{1}c_{1}\mright)}+\lambda c_{1}/2=0,
\end{align}

\begin{equation}
\frac{d\mathcal{L}\mleft(\alpha_{0},\alpha_{1},\lambda\mright)}{d\lambda}=\alpha_{0}c_{0}+\alpha_{1}c_{1}/2-1/4\beta=0.\label{eq:d3}
\end{equation}
We assume $c_{0},c_{1}>0$, so equations \eqref{eq:d1},\eqref{eq:d2}
can be written as 
\begin{gather*}
\log\frac{\mleft(1-\alpha_{0}\mright)\mleft(\alpha_{0}c_{0}+\alpha_{1}c_{1}\mright)}{\alpha_{0}\mleft(1-\alpha_{0}c_{0}-\alpha_{1}c_{1}\mright)}=-\lambda,\\
\log\mleft(\frac{\mleft(1-\alpha_{1}\mright)\mleft(\alpha_{0}c_{0}+\alpha_{1}c_{1}\mright)}{\alpha_{1}\mleft(1-\alpha_{0}c_{0}-\alpha_{1}c_{1}\mright)}\mright)^{2}=-\lambda,
\end{gather*}
 so we get: 
\begin{gather*}
\frac{1-\alpha_{0}}{\alpha_{0}}\cdot\frac{\alpha_{0}c_{0}+\alpha_{1}c_{1}}{1-\alpha_{0}c_{0}-\alpha_{1}c_{1}}=\frac{\mleft(1-\alpha_{1}\mright)^{2}}{\alpha_{1}^{2}}\cdot\mleft(\frac{\alpha_{0}c_{0}+\alpha_{1}c_{1}}{1-\alpha_{0}c_{0}-\alpha_{1}c_{1}}\mright)^{2}\\
\iff\\
\frac{1-\alpha_{0}}{\alpha_{0}}=\frac{\mleft(1-\alpha_{1}\mright)^{2}}{\alpha_{1}^{2}}\cdot\frac{\alpha_{0}c_{0}+\alpha_{1}c_{1}}{1-\alpha_{0}c_{0}-\alpha_{1}c_{1}}
\end{gather*}
 since $\alpha_{0}c_{0}+\alpha_{1}c_{1}>0$. Together with equation
(\ref{eq:d3}), we can solve two equations with two unknowns and find
$\alpha_{0},\alpha_{1}$. Calculation shows that there is only one
real solution, which is a critical point, and we will deduce that
it is also a maximum point. First, we have to fix $s$. We used a
software program to try and find a good value for $s$. It must hold
that $c_{1}=2s-1/\beta>0$, that is, $s>\frac{1}{2\beta}$, implying
$s>5/17$ (since $\beta>1.7$). Iteratively checking all feasible
values of $s$ with gaps of $1/1000$, it seems that $s=829/2000$
is a good choice. For that value of $s$, the maximal value of $P\mleft(\ser c,\ser{\alpha}\mright)$
(where $\ser{\alpha}$ is defined by the critical point solution found
for $\alpha_{0},\alpha_{1}$) is $\sim-0.305758$, attained at $\beta\approx1.80941$.
It remains to check that the critical point solution found for $\alpha_{0},\alpha_{1}$
is indeed a maximum point. Since in our critical point $\alpha_{0},\alpha_{1}\notin\mleft\{ 0,1\mright\} $
for any $\beta$, we can do that by checking the values attained when
$\alpha_{0}\in\mleft\{ 0,1\mright\} $ or $\alpha_{1}\in\mleft\{ 0,1\mright\} $
and verify they are lower than the value attained at our critical
point. Calculation shows that when choosing such points we get a lower
value for $P\mleft(\ser c,\ser{\alpha}\mright)$ for any $\beta$, compared
to the value attained at the critical point. Thus the critical point
we have found is indeed a maximum point and $G\mleft(\beta\mright)\leq-0.305758$
for any $1\leq\beta<2$. That is, $q\mleft(n\mright)\geq2^{0.305758n-o\mleft(n\mright)}>1.236^{n-o\mleft(n\mright)}$. 
\end{proof}

Figure~\ref{fig:newbound_vs_old} demonstrates our new bound for different values of $\beta$.

\begin{figure}[!ht]
    \centering
    \textbf{Upper bounds on $G\mleft(\beta\mright)$ for $\beta \geq 1.5$}\par\medskip
    \includegraphics[scale=0.8]{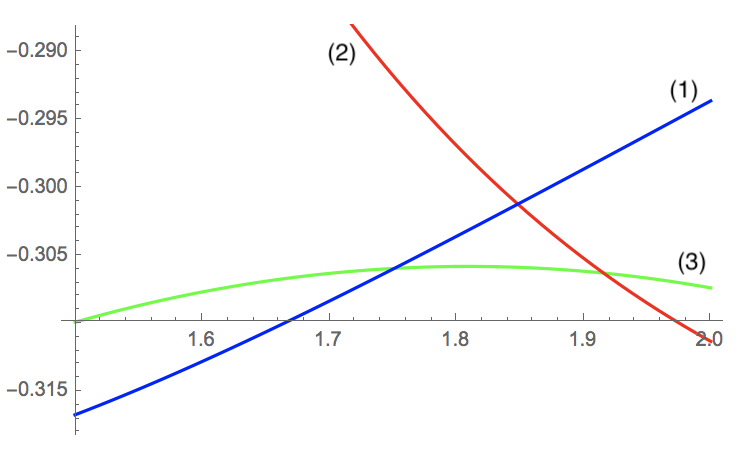}
    \caption{The blue (1) and red (2) curves are the known upper bounds $\frac{1}{2\beta}-h\mleft(\frac{1}{4\beta}\mright)$ and $\frac{1}{\beta}-h\mleft(\frac{1}{\beta}\mright)$, respectively. Our new upper bound is the green (3) curve, which is better in a range of $\beta$ values.}
    \label{fig:newbound_vs_old}
\end{figure}

\subsection{Improving the bound $q\mleft(n\mright)\leq 1.25^{n+o\mleft(n\mright)}$ for $n$ far from $1.25\cdot 2^k$}

Using our formula for $G\mleft(\beta\mright)$, it can be shown that when $\beta\neq 1.25$, the upper bound of $q\mleft(n\mright)\leq 1.25^{n+o\mleft(n\mright)}$ can be exponentially improved. This result is formally stated as follows:

\begin{theorem} \label{theo:better_upper}
If $\beta = 5/4+\delta_{\beta}$ such that $\mleft|\delta_{\beta}\mright|>0$, then there exists $\gamma>1$ such that $q\mleft(n\mright)\leq \mleft(1.25/\gamma\mright)^{n+o\mleft(n\mright)}$. Furthermore, $\gamma=2^{\Omega\mleft(\mleft|\delta_{\beta}\mright|^{2+\epsilon}\mright)}$, where $\epsilon>0$ is any fixed constant of our choice.
\end{theorem}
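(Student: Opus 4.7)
The plan is to apply Theorem~\ref{theo:main} to reduce the claim to the quantitative lower bound $G(\beta) \geq -\log_2 1.25 + \Omega(|\delta_\beta|^{2+\epsilon})$, which directly gives $q(n) \leq (1.25 / 2^{\Omega(|\delta_\beta|^{2+\epsilon})})^{n+o(n)}$ as required. Since $G(\beta) = \inf_{(\ser{c},b)\in\mathcal{C}} \max_{\ser{\alpha}\in\mathcal{A}} P(\ser{c},\ser{\alpha})$, it suffices to exhibit, for every admissible $(\ser{c},b)$, a single $\ser{\alpha}\in\mathcal{A}$ witnessing the bound. The starting observation, from the proof of Theorem~\ref{theo:upper_bound}, is that $\ser{\alpha}\equiv 1/2$ gives $P(\ser{c},\ser{\alpha}) = f(x)$, where $x:=\sum_i c_i$ and $f(y):=y-h(y/2)$ has unique minimum $f(2/5) = -\log_2 1.25$ with $f''(2/5)>0$.

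Setting $\tau:=|\delta_\beta|^{1+\epsilon/2}$, I split into two cases. In Case 1, $|x-2/5|\geq \tau$, the choice $\ser{\alpha}\equiv 1/2$ already suffices: a second-order Taylor expansion yields $f(x)\geq f(2/5)+\Omega(\tau^2) = -\log_2 1.25 + \Omega(|\delta_\beta|^{2+\epsilon})$. In Case 2, $|x-2/5|<\tau$, I use the ``$1/5$-inspired'' sequence defined by $\alpha_0 = 1/2+t$, $\alpha_i=1/5$ for $i\geq 1$, where $t$ is uniquely determined by $\ser{\alpha}\in\mathcal{A}$. A direct calculation shows $|t|=O(\sigma)$ where $\sigma:=\sum_{i\geq 1}c_i$ is the tail mass, and that $\alpha_0\in[0,1]$ for small $\sigma$. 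Expanding $P$ to first order around the base point $(1/2,1/5,1/5,\ldots)$ and invoking the identity $h(1/5)-2/5 = \log_2 1.25$ (the $\beta=1.25$ miracle) yields
\[
P(\ser{c},\ser{\alpha}) \geq -\log_2 1.25 + c_1\sigma - c_2|t| + O(\sigma^2)
\]
for explicit constants $c_1,c_2>0$, with $c_1$ dominating because $|t|/\sigma$ is bounded by a small constant. The constraint $\sum c_i/2^i = 1/(\beta\cdot 2^b)$ together with $|x-2/5|<\tau$ forces $\sigma \geq 2/5 - 1/(\beta\cdot 2^b) - \tau$, which for $b=1$ is $\Omega(|\delta_\beta|)$, yielding improvement $\Omega(|\delta_\beta|)$ --- comfortably more than required.

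The main obstacle is the case analysis on $b \neq 1$: I must verify that such $b$ either fall in Case 1 with larger improvement or force $\sigma=\Omega(1)$ in Case 2. For $b=0$, the constraint $\sum c_i/2^i = 1/\beta \geq 1/2$ forces $x \geq 1/2$, giving Case 1 with $\Omega(1)$ improvement. For $b\geq 2$, $1/(\beta\cdot 2^b) \leq 1/4$ is bounded away from $2/5$, so either $x$ is close to $1/(\beta\cdot 2^b)$ (Case 1, $\Omega(1)$) or $x\approx 2/5$ (Case 2, $\sigma=\Omega(1)$). A secondary technical challenge is verifying the balance inequality $c_1 > c_2\cdot \sup_{\ser{c}}(|t|/\sigma)$; this should follow from the elementary bound $\sum_{i\geq 1}c_i \leq 2\sum_{i\geq 1}c_i(1-2^{-i})$, which constrains how ``skewed'' the tail of $\ser{c}$ can be. The slack $\epsilon$ in the exponent $2+\epsilon$ precisely accommodates the tradeoff in the choice of threshold $\tau$ between the two cases.
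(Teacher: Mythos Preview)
Your overall strategy---reduce to $G(\beta)\geq -\log_2 1.25+\Omega(|\delta_\beta|^{2+\epsilon})$ and split on whether $x:=\sum_i c_i$ is close to $2/5$---matches the paper, and your Case~1 is exactly Lemma~\ref{lem:no_triv_sum}. The gap is in Case~2 for $b\geq 2$. Your witness $\alpha_0=\tfrac12+t$, $\alpha_i=\tfrac15$ ($i\geq 1$) forces $t=(3/10)\,p_1/c_0$ with $p_1:=\sum_{i\geq 1}c_i/2^i$, and feasibility requires $t\leq\tfrac12$. But when $b\geq 2$ and $|x-2/5|<\tau$, the constraint $c_0+p_1=1/(\beta 2^b)\leq 1/4$ allows $c_0$ to be arbitrarily small while $p_1$ stays bounded away from~$0$: e.g.\ for $b=2$ and $\beta$ near $5/4$, the choice $c_0=\epsilon$, $c_1=2/5-2\epsilon$ satisfies all constraints with $x=2/5-\epsilon$, yet gives $t\approx(3/50)/\epsilon\to\infty$. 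So your $\ser{\alpha}$ is infeasible there, and the Taylor expansion you wrote (which presupposes $t$ and $\sigma$ small) does not apply. The inequality $\sigma\leq 2(\sigma-p_1)$ you invoke controls $p_1/\sigma$, not $p_1/c_0$, which is what governs $t$. Note also that for $\delta_\beta<0$ and $b=1$, Case~2 is vacuous (since then $x\geq 1/(2\beta)>2/5$), so you are \emph{forced} into $b\geq 2$, precisely where the construction breaks.

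The paper handles Case~2 by staying in a small neighborhood of $\ser{\alpha}\equiv\tfrac12$: it partitions $\mathbb{N}=S\cup T$ and sets $\alpha_i=\tfrac12+\eta_S$ on $S$ and $\tfrac12+\eta_T$ on $T$, with $|\eta_S|,|\eta_T|=O(|\delta_\beta|^{1+\epsilon})$, so feasibility is automatic. The first-order gain is $(\eta_S q_S+\eta_T q_T)\,h'(x/2)$, and the substance lies in choosing $S$ (via Lemma~\ref{lem:better_upper_tech}) so that $\bigl|x-\tfrac{1}{\beta 2^b}\cdot\tfrac{q_S}{p_S}\bigr|=\Omega(|\delta_\beta|)$; this works uniformly in $b$ because the ratios $q_{S_{\leq I}}/p_{S_{\leq I}}$ sweep through powers of~$2$ as $I$ varies, and $x\approx 2/5$ cannot equal $2^{I-b}/\beta$ when $\beta\neq 5/4$. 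Your ``jump to $1/5$'' idea is elegant and in fact yields a stronger $\Omega(|\delta_\beta|)$ gain in the one regime where it applies ($b=1$, $\delta_\beta>0$), but it does not cover the other values of $b$ that the infimum in $G(\beta)$ ranges over.
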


We prove the theorem using a sequence of steps. First, we show that sequences $\ser{c}\in \mathcal{C}$ with $\sum_{i=0}^{\infty} c_i$ which is far from $2/5$ can be ruled out as witnesses for $q\mleft(n\mright)= 1.25^{n \pm o\mleft(n\mright)}$:

\begin{lemma} \label{lem:no_triv_sum}
Let $\ser{c} \in \mathcal{C}$ and $x=\sum_{i=0}^\infty c_i$. If $x=2/5 \pm \delta$, then $\max_{\ser{\alpha}\in \mathcal{A}}P\mleft(\ser{c},\ser{\alpha}\mright)\geq -\log \mleft(5/4\mright)+\Omega\mleft(\delta^2\mright)$.
\end{lemma}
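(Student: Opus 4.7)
The plan is to test a single well-chosen $\ser{\alpha}\in\mathcal{A}$ and obtain a lower bound on $\max_{\ser{\alpha}\in\mathcal{A}}P\mleft(\ser{c},\ser{\alpha}\mright)$ by plugging it into $P$. The natural candidate, already used in the alternative proof of Theorem~\ref{theo:upper_bound} and in verifying that $\mathcal{A}$ is nonempty, is the constant sequence $\alpha_i=1/2$ for all $i$. This sequence is always in $\mathcal{A}$, regardless of $\mleft(\ser{c},b\mright)\in\mathcal{C}$. Evaluating $P$ at this sequence yields
\[
P\mleft(\ser{c},\ser{\alpha}\mright) = \sum_{i=0}^\infty h\mleft(\tfrac{1}{2}\mright)c_i - h\mleft(\tfrac{x}{2}\mright) = x - h\mleft(\tfrac{x}{2}\mright),
\]
so it suffices to prove that $f(x):= x-h(x/2)$ satisfies $f(x)\geq -\log(5/4)+\Omega(\delta^2)$ whenever $x = 2/5\pm\delta$.

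Next I would verify that $f$ attains its minimum over $[0,1]$ at $x=2/5$, with value exactly $-\log(5/4)$ (this computation is already present in the alternative proof of Theorem~\ref{theo:upper_bound}, where one sees $f'(2/5)=1-\tfrac12\log\frac{1-x/2}{x/2}\big|_{x=2/5}=0$ and $f(2/5)=2/5-h(1/5)=-\log(5/4)$). Then I would compute the second derivative
\[
f''(x) = \frac{1}{x(2-x)\ln 2},
\]
which, since $x(2-x)\leq 1$ for all $x\in[0,1]$, satisfies the uniform lower bound $f''(x)\geq 1/\ln 2$ on $(0,1]$. Note that the regime $x=0$ does not actually occur, since $\ser{c}\in\mathcal{C}$ forces $\sum_i c_i/2^i=1/(\beta 2^b)>0$, and hence $x>0$.

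Finally I would apply Taylor's theorem with the Lagrange form of the remainder around the critical point $x=2/5$: there exists some $\xi$ between $x$ and $2/5$ such that
\[
f(x) = f(2/5) + f'(2/5)(x-2/5) + \tfrac{1}{2}f''(\xi)(x-2/5)^2 \geq -\log(5/4) + \frac{(x-2/5)^2}{2\ln 2}.
\]
Combined with $|x-2/5|=\delta$, this gives $\max_{\ser{\alpha}\in\mathcal{A}}P\mleft(\ser{c},\ser{\alpha}\mright)\geq f(x)\geq -\log(5/4)+\Omega(\delta^2)$, as required.

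There is no serious obstacle: the lemma reduces to an elementary convexity/Taylor calculation on a one-variable smooth function, once one exploits that $\alpha_i=1/2$ for all $i$ is always feasible and makes the full sequence $\ser{\alpha}$ drop out of $P$ except through the aggregate $x=\sum_i c_i$. The only point requiring minor care is the uniform positivity of $f''$ on the relevant range, which is handled by the universal bound $f''\geq 1/\ln 2$ on $(0,1]$.
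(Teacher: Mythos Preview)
Your proposal is correct and follows essentially the same approach as the paper: both choose the constant sequence $\alpha_i=1/2$, reduce to analyzing $f(x)=x-h(x/2)$, and use that $f$ has its unique minimum $-\log(5/4)$ at $x=2/5$ together with a second-order expansion to extract the $\Omega(\delta^2)$ gain. Your version is in fact more careful than the paper's, which simply invokes ``a linear approximation around $2/5$'' without computing $f''$ or justifying the uniform lower bound; your explicit computation $f''(x)=1/\bigl(x(2-x)\ln 2\bigr)\geq 1/\ln 2$ on $(0,1]$ and the Lagrange remainder make the $\Omega(\delta^2)$ constant concrete.
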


\begin{proof}
Define $\ser{\alpha}\in \mathcal{A}$ that assigns $\alpha_i = 1/2$ for all $i \in \mathbb{N}$, which is always a feasible choice. Then:
\[
 P(\ser{c},\ser{\alpha}) = x - h(x/2).
\]
The function $x - h(x/2)$ attains its unique minimum at $x = 2/5$, at which point its value is $-\log(5/4)$. By a linear approximation around $2/5$,
\[
x-h\mleft(x/2\mright)=-\log 1.25+\Omega\mleft(\delta^2\mright). \qedhere
\]
\end{proof}

The next technical lemma is required for the argument used in the proof of the theorem.

\begin{lemma} \label{lem:better_upper_tech}
For every $\delta_\beta \neq 0$ the following holds for $\delta_0=\mleft|\delta_{\beta}\mright|/100$.

Let $\beta=5/4+\delta_{\beta}$ and let $\ser{c}\in \mathcal{C}$ with $x:=\sum_{i=0}^{\infty} c_i=2/5 \pm \delta \geq \frac{1}{\beta 2^b}$, where $0 \leq \delta \leq \delta_0$. Then there exists $I\in \mathbb{N}$ such that:
\begin{gather*}
x - \frac{2^{I-b}}{\beta}=\Omega\mleft(\mleft|\delta_{\beta}\mright|\mright), \\
x - \frac{2^{I+1-b}}{\beta}=-\Omega\mleft(\mleft|\delta_{\beta}\mright|\mright).
\end{gather*}
\end{lemma}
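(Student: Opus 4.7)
The plan is to observe that the ``grid'' of points $\frac{2^j}{\beta}$ for $j \in \mathbb{Z}$ has $\frac{1}{2\beta}$ as the unique grid point near $2/5$, and this point coincides exactly with $2/5$ when $\beta = 5/4$. For $\beta = 5/4 + \delta_\beta$ with $\delta_\beta \neq 0$, the gap $\mleft|\frac{1}{2\beta} - \frac{2}{5}\mright|$ becomes $\Theta\mleft(\mleft|\delta_\beta\mright|\mright)$, sitting on one specific side of $2/5$ determined by the sign of $\delta_\beta$. The index $I$ will be chosen so that $\frac{2^{I-b}}{\beta}$ and $\frac{2^{I+1-b}}{\beta}$ are precisely the two consecutive grid points bracketing $x$ across this gap.

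The central computation is
\[
\frac{2}{5} - \frac{1}{2\beta} = \frac{8\delta_\beta}{5(5 + 4\delta_\beta)},
\]
whose magnitude is at least $\mleft|\delta_\beta\mright|/5$ when $\beta \in [1,2)$ (using the upper bound $5(5+4\delta_\beta) < 40$), and whose sign matches that of $\delta_\beta$.

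Next I split on the sign. When $\delta_\beta > 0$, the point $\frac{1}{2\beta}$ lies strictly below $2/5$, so I set $I := b - 1$. To verify $I \geq 0$: if $b = 0$ then $x \geq 1/\beta > 1/2$ by $\beta < 2$, contradicting $x \leq 2/5 + \delta_0 < 1/2$. Plugging in,
\[
x - \frac{2^{I-b}}{\beta} = x - \frac{1}{2\beta} \geq \frac{\mleft|\delta_\beta\mright|}{5} - \delta \geq \frac{19\mleft|\delta_\beta\mright|}{100},
\]
and
\[
x - \frac{2^{I+1-b}}{\beta} = x - \frac{1}{\beta} \leq \frac{2}{5} + \delta - \frac{1}{2} < -\frac{9}{100},
\]
where the second bound uses $\frac{1}{\beta} > \frac{1}{2}$. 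Since $\mleft|\delta_\beta\mright|$ is bounded by a constant, both inequalities give $\Omega\mleft(\mleft|\delta_\beta\mright|\mright)$ in magnitude. When $\delta_\beta < 0$, the point $\frac{1}{2\beta}$ lies strictly above $2/5$, so I shift down to $I := b - 2$. Here the hypothesis $x \geq \frac{1}{\beta 2^b}$ forces $b \geq 2$: if $b \leq 1$ it would give $x \geq \frac{1}{2\beta} \geq \frac{2}{5} + \frac{\mleft|\delta_\beta\mright|}{5}$, contradicting $x \leq \frac{2}{5} + \frac{\mleft|\delta_\beta\mright|}{100}$. The bracketing is analogous: $\frac{1}{4\beta} \leq \frac{1}{4}$ yields $x - \frac{1}{4\beta} \geq \frac{3}{20} - \frac{\mleft|\delta_\beta\mright|}{100}$, while the gap estimate gives $x - \frac{1}{2\beta} \leq \delta - \frac{\mleft|\delta_\beta\mright|}{5} \leq -\frac{19\mleft|\delta_\beta\mright|}{100}$.

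The main obstacle is the feasibility of $I \in \mathbb{N}$: the argument hinges on ruling out the small-$b$ cases ($b = 0$ when $\delta_\beta > 0$; $b \leq 1$ when $\delta_\beta < 0$) by combining the hypothesis $x \geq \frac{1}{\beta 2^b}$ with the bound $\delta \leq \mleft|\delta_\beta\mright|/100$. Once these degenerate cases are excluded, the two required bounds on the gaps follow immediately from the central computation and a single application of the triangle inequality.
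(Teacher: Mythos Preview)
Your proof is correct and essentially parallels the paper's. The only organizational difference is that the paper defines $I$ uniformly as the largest integer with $x - \tfrac{2^{I-b}}{\beta} \geq 0$ (existence coming from the hypothesis $x \geq \tfrac{1}{\beta 2^b}$) and then does the case analysis on the exponent $l = I-b \in \{\leq -2,\, -1,\, \geq 0\}$, whereas you split on the sign of $\delta_\beta$ and name $I$ explicitly as $b-1$ or $b-2$; the resulting $I$ and the key estimate $\bigl|\tfrac{2}{5} - \tfrac{1}{2\beta}\bigr| = \Theta(|\delta_\beta|)$ are identical.
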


\begin{proof}
Let $I\in \mathbb{N}$ be the largest value for which $x - \frac{2^{I-b}}{\beta} \geq 0$. There must be such $I$: clearly, there exists $I'$ for which $x - \frac{2^{i-b}}{\beta}<0$ for any $i\geq I'$. Moreover, by assumption, when choosing $I=0$ we have $x - \frac{2^{I-b}}{\beta}=x-\frac{1}{\beta 2^b} \geq 0$.\par
Having that, the correctness of the lemma boils down to whether the expression $\mleft|x - \frac{2^{l}}{\beta}\mright|$ might be very small when $x$ is close enough to $2/5$ and $\beta$ is far enough from $1.25$. The answer is negative: if $l\leq-2$ then $2^l/\beta\leq \frac{1}{4\beta}\leq 1/4$ and hence $\mleft|x - \frac{2^{l}}{\beta}\mright|=\Omega\mleft(1\mright)$ even for a constant $\delta$. On the other hand, if $l\geq 0$ then $2^l/\beta \geq 1/\beta \geq 1/2$, so in this case too $\mleft|x - \frac{2^{l}}{\beta}\mright|=\Omega\mleft(1\mright)$ even for a constant $\delta$. So, the only difficult value of $l$ is $l=-1$. In this case:
\[
x - \frac{2^{l}}{\beta}
=
2/5 \pm \delta -  \frac{1}{2\mleft(1.25 + \delta_{\beta}\mright)}
=
\frac{\delta_{\beta}\mleft(8 \pm 20\delta\mright) \pm 25\delta}{25 + 25\delta_{\beta}}
=
\pm \Omega\mleft(\mleft|\delta_{\beta}\mright|\mright). \qedhere
\] 

\end{proof}

Now we are ready to prove the theorem.
\begin{proof}
[Proof of Theorem~\ref{theo:better_upper}]
Let $\epsilon>0$ be a fixed constant. We will show that $G(\beta) \geq -\log(5/4) + \Omega\mleft(\mleft|\delta_{\beta}\mright|^{2+\epsilon}\mright)$, thus implying the result stated in the theorem. We show that for all $b \in \mathbb{N}$ and $\ser{c} \in [0,1]^{\mathbb{N}}$ satisfying $\sum_{i=0}^\infty c_i/2^i = \frac{1}{\beta \cdot 2^b}$ and $\sum_{i=0}^\infty c_i \leq 1$, we can find $\ser{\alpha} \in [0,1]^{\mathbb{N}}$ such that $\sum_{i=0}^\infty \alpha_i c_i/2^i = \frac{1}{\beta \cdot 2^{b+1}}$ and $P(\ser{c},\ser{\alpha}) \geq -\log(5/4) + \Omega\mleft(\mleft|\delta_{\beta}\mright|^{2+\epsilon}\mright)$.\par

Let $\delta_0=\mleft|\delta_{\beta}\mright|/100$ and denote $x=\sum_{i=0}^{\infty} c_i=2/5\pm \delta$. By Lemma~\ref{lem:no_triv_sum}, if $\mleft|x-2/5\mright|\geq\delta_0$, then the theorem follows even with $\epsilon=0$. Hence we can assume, from now on, that $|x - 2/5| < \delta_0$.\par

Let $S \subseteq \mathbb{N}$, let $T = \mathbb{N} \setminus S$, and let $\eta_S,\eta_T \in [-1,1]$ be two parameters small in magnitude. Define
\begin{align*}
    p_S &= \sum_{i \in S} c_i/2^i, &
    q_S &= \sum_{i \in S} c_i, \\
    p_T &= \sum_{i \in T} c_i/2^i, &
    q_T &= \sum_{i \in T} c_i.
\end{align*}
Consider the assignment
\[
 \alpha_i = \begin{cases}
 \frac{1}{2} + \eta_S & \text{if } i \in S, \\
 \frac{1}{2} + \eta_T & \text{if } i \in T.
 \end{cases}
\]
Since $\sum_{i=0}^\infty \frac{1}{2} \cdot c_i/2^i = \frac{1}{\beta \cdot 2^{b+1}}$, this assignment is feasible if
\[
 \eta_S p_S + \eta_T p_T = 0.
\]
We assume henceforth that $\eta = \max(\mleft|\eta_S\mright|,\mleft|\eta_T\mright|) \leq \eta_0$, where $\eta_0 = O\mleft(\mleft|\delta_{\beta}\mright|^{1+\epsilon}\mright)$. By construction, we have $h(\alpha_i) = 1 - O(\eta^2)$ using a linear approximation. In contrast,
\[
 \sum_{i=0}^\infty \alpha_i c_i = \frac{x}{2} + \eta_S q_S + \eta_T q_T.
\]
Since $|x-2/5| < \delta_0$, this shows that
\[
 h\left(\sum_{i=0}^\infty \alpha_i c_i\right) =
 h(x/2) + (\eta_S q_S + \eta_T q_T) h'(x/2) \pm O(\eta^2)
\]
using a linear approximation. Overall, this shows that
\[
 P(\ser{c},\ser{\alpha}) = x - h(x/2) + (\eta_S q_S + \eta_T q_T) h'(x/2) \pm O\mleft(\mleft|\delta_{\beta}\mright|^{2+2\epsilon}\mright).
\]
Moreover, we have $h'(x/2) = \Omega(1)$ since $x < 2/5 + \delta < 1/2$. Since $x - h(x/2) \geq -\log(5/4)$, it suffices to show that there exist $\eta_S,\eta_T$ which are bounded in magnitude by $\eta_0$ such that $|\eta_S q_S + \eta_T q_T| = \Omega(\mleft|\delta_{\beta}\mright|^{2+\epsilon})$ (if $\eta_S q_S + \eta_T q_T<0$, we simply negate $\eta_S,\eta_T$).

Suppose $p_S \geq p_T$, or equivalently $p_S \geq (p_S + p_T)/2 = \frac{1}{\beta \cdot 2^{b+1}}$. Then the condition $\eta_S p_S + \eta_T p_T = 0$ shows that $\eta_S = -\frac{p_T}{p_S} \eta_T$, and so
\[
 \eta_S q_S + \eta_T q_T =
 \left(q_T - \frac{p_T}{p_S} q_S \right) \eta_T.
\]
Since $|\eta_S| \leq |\eta_T|$, if $|q_T - \frac{p_T}{p_S} q_S| = \Omega(\mleft|\delta_{\beta}\mright|)$ then by choosing $\eta_T = \Omega\mleft(\mleft|\delta_{\beta}\mright|^{1+\epsilon}\mright)$ we would be done.

Recall that $p_S + p_T = \frac{1}{\beta \cdot 2^b}$ and $q_S + q_T = x$. Therefore
\[
 q_T - \frac{p_T}{p_S} q_S =
 x - q_S - \frac{p_T}{p_S} q_S =
 x - \frac{1}{\beta \cdot 2^b} \frac{q_S}{p_S}.
\]
Our goal therefore is to find a set $S$ such that the following two conditions hold:
\begin{gather*}
    p_S \geq \frac{1}{\beta \cdot 2^{b+1}}, \\
    \left| x - \frac{1}{\beta \cdot 2^b} \frac{q_S}{p_S} \right| = \Omega(\mleft|\delta_{\beta}\mright|).
\end{gather*}

Let $S_{\leq I}=\mleft\{0,\dots,I\mright\}$ and $S_{>I}=\mleft\{I+1,\dots\mright\}$. Let us first assume that the choice of $S$ to be $S_{\leq 0}$ yields 
\[
x - \frac{1}{\beta \cdot 2^b} \frac{q_{S_{\leq 0}}}{p_{S_{\leq 0}}}
=
x - \frac{1}{\beta \cdot 2^b}
<
0.  
\]
So, it must hold that $b=1$: notice that 
\[
\frac{1}{2^{b+1}}
\leq
\frac{1}{\beta 2^b}
=
\sum_{i=0}^{\infty} c_i/2^i
\leq
x
=
2/5 \pm \delta,
\]
so if $b=0$ it is a contradiction. On the other hand, if $b\geq2$ then
\[
0 
>
x - \frac{1}{\beta \cdot 2^b}
\geq
2/5 \pm \delta-\frac{1}{4\beta}
\geq
2/5 \pm \delta-\frac{1}{4}
\]
is a contradiction. So, we get that
\[
0
>
x - \frac{1}{\beta \cdot 2^b} \frac{q_{S_{\leq 0}}}{p_{S_{\leq 0}}}
=
x-\frac{1}{2\beta}
=
2/5 \pm \delta -\frac{2}{5 + 4\delta_{\beta}}, 
\]
implying that $\delta_{\beta}<0$, and hence indeed $2/5+\delta -\frac{2}{5+4\delta_{\beta}}=-\Omega(\mleft|\delta_{\beta}\mright|)$. Notice that
\[
\frac{q_{S_{\leq i}}}{p_{S_{\leq i}}}
\geq
\frac{q_{S_{\leq 0}}}{p_{S_{\leq 0}}}
=
1
\]for any $i$, so we choose $S$ to be $S_{\leq I}$ such that $p_{S_{\leq I}} \geq \frac{1}{\beta \cdot 2^{b+1}}$ and then both conditions hold.\par
Suppose now that the choice of $S$ to be $S_{\leq 0}$ yields
\[
x - \frac{1}{\beta \cdot 2^b} \frac{q_{S_0}}{p_{S_0}}
=
x - \frac{1}{\beta \cdot 2^b}
\geq
0  
\]
and let $I\in \mathbb{N}$ be the one picked by Lemma~\ref{lem:better_upper_tech}. By construction, one of $S_{\leq I},S_{>I}$ satisfies the first condition. As for the second condition,
\begin{gather*}
x - \frac{1}{\beta \cdot 2^b} \frac{q_{S_{\leq I}}}{p_{S_{\leq I}}} \geq
x - \frac{1}{\beta \cdot 2^b} 2^{I}
=
x - \frac{2^{I-b}}{\beta}
=\Omega(\mleft|\delta_{\beta}\mright|),\\
x - \frac{1}{\beta \cdot 2^b} \frac{q_{S_{>I}}}{p_{S_{>I}}} \leq
x - \frac{1}{\beta \cdot 2^b} 2^{I+1}
=
x - \frac{2^{I+1-b}}{\beta}
=
-\Omega(\mleft|\delta_{\beta}\mright|),
\end{gather*}
by Lemma~\ref{lem:better_upper_tech}, and so it is satisfied for both $S_{\leq I},S_{>I}$.
\end{proof}

\section{$d$-ary questions} \label{sec:d-ary}
In this section we generalize some results on $q\mleft(n\mright)$ appearing in \cite{DFGM2019} to the $d$-ary setting, in which each question has $d$ possible answers (instead of only ``Yes" or ``No"). In this setting, a set of allowed questions $\mathcal{Q}$ contains a collection of partitions of $X_n$ to $d$ distinguished subsets $\mleft(S_i\mright)_{i\in \mleft[d\mright]}$. We denote the natural generalization of $q\mleft(n\mright)$ to the $d$-ary setting with $q^{\mleft(d\mright)}\mleft(n\mright)$. That is, $q^{\mleft(d\mright)}\mleft(n\mright)$ is the minimal size of a set of allowed questions $\mathcal{Q}$ that allows Alice to construct an optimal strategy for any distribution on $X_n$ picked by Bob.\par
We present two results in this section. The first states that for any $d=o\mleft(n/\log^{2}n\mright)$, it holds that $q^{\mleft(d\mright)}\mleft(n\mright)<2^{n+o\mleft(n\mright)}$; this improves exponentially on the trivial upper
bound $q^{\mleft(d\mright)}\mleft(n\mright)\leq d^{n}/d!$.
The second result is that for any fixed $d$, the upper bound we have just mentioned is tight up
to sub-exponential factors for infinitely many $n$ values.\par
In the binary setting, our results on $q\mleft(n\mright)$ rely on the reduction of~\cite{DFGM2019} from calculating $q\mleft(n\mright)$ to calculating $\rho_{\min}\mleft(n\mright)$, that is, on the fact that $q\mleft(n\mright)\approx 1/\rho_{\min}\mleft(n\mright)$ (Theorem~\ref{theo:reduc}). We take here the same approach: define $\rho^{\mleft(d\mright)}_{\min}\mleft(n\mright)$ to be the natural generalization of $\rho_{\min}\mleft(n\mright)$ to the $d$-ary setting (a formal definition appears later). We will show that $q^{\mleft(d\mright)}\mleft(n\mright)\approx 1/\rho^{\mleft(d\mright)}_{\min}\mleft(n\mright)$, and then we find bounds on $\rho^{\mleft(d\mright)}_{\min}\mleft(n\mright)$ to derive bounds on $q^{\mleft(d\mright)}\mleft(n\mright)$. Most of the lemmas in this section are simple generalizations of those appearing in \cite{DFGM2019}.\par
Let us first generalize some basic notions from the standard binary setting. All logarithms have base $d$, unless written otherwise.
\begin{definition}
A distribution $\mu$ is $d$-adic if every element with non-zero probability
in $\mu$ has probability $d^{-\ell}$ for some positive integer $\ell$. 
\end{definition}


\paragraph{$d$-ary search trees.}\label{sec:dary}
In the $d$-ary setting, similarly to the standard binary setting, a strategy to reveal the secret element $x$ is represented by a search tree. The difference is that in the $d$-ary setting, we use $d$-ary search trees (instead of binary search trees, namely, decision trees): each internal node, representing
a question, has $d$ outgoing edges, representing the possible answers.\par
However, if $n=\mleft|X_n\mright|$ is not equivalent
to $1$ modulo $d-1$, then such a tree cannot be constructed.
So, if that is the case, we add a minimal set $X_l$ of zero probability
elements, such that $n+l$ is equivalent to $1$ modulo $d-1$. A
$d$-ary search tree can now be successfully constructed for $X_n\cup X_l$.
For our convenience, we still consider $X_n$ to be the set of elements (rather than $X_n\cup X_l$): note that $l<d$, and thus if we assume that $d$ is an
asymptotically small enough function of $n$, this has no effect on
the results in this section (in particular, we do not care about
sub-exponential factors in our estimates). Indeed, we have to limit
the discussion only for $d=o\mleft(n/\mleft(\log n \log \log n\mright)\mright)$, for other reasons,
even when $n$ is equivalent to $1$ modulo $d-1$, so this issue
has no meaning in our work.
\par
Definitions and notations from Section~\ref{sec:Preliminaries} regarding decision trees naturally generalize to $d$-ary search trees.

\paragraph{$d$-ary Huffman algorithm.}
Similarly to the binary case, if $\mu$ is a distribution over $X_n$, then the $d$-ary version of Huffman's algorithm finds a $d$-adic distribution $\tau$ that defines a search tree $T$ with $T\mleft(x_i\mright)=\log\frac{1}{\tau_{i}}$ for any non-zero element, such that the cost of $T$ on $\mu$, which is
\[
T\mleft(\mu\mright)=\sum_{i=1}^{n}\mu_{i}\log\frac{1}{\tau_{i}}=\sum_{i=1}^{n}\mu_{i}\log\frac{1}{\mu_{i}}+\sum_{i=1}^{n}\mu_{i}\log\frac{\mu_{i}}{\tau_{i}}=H\mleft(\mu\mright)+D\mleft(\mu\|\tau\mright)
\]
(where $D\mleft(\mu\|\tau\mright)=\sum_{i=1}^n\mu_i \log\mleft(\mu_i/\tau_i\mright)$ is the \emph{Kullback--Leibler divergence}), is optimal. This implies the inequality $T\mleft(\mu\mright)\geq H\mleft(\mu\mright)$ due to non-negativity of $D\mleft(\mu\|\tau\mright)$. It holds as equality when $\mu$ is $d$-adic.

\paragraph{The chain rule of conditional entropy.}
Let $S=\mleft(S_{j}\mright)_{j\in\mleft[d\mright]}$ be a partition of
$X_n$ into $d$ sets, and let $\mu$ be a distribution
over $X_n$. Let $M$ be a random variable drawn from $\mu$,
and let $P$ be a random variable indicating the set in $S$ that
$M$ belongs to. The probability distribution of $P$ is the distribution
$\pi$, defined by $\pi_{j}=\sum_{i\in D_{j}}\mu_{i}$ for any $j\in\mleft[d\mright]$.
The chain rule of conditional entropy states that: 
\[
H\mleft(M\mright)=H\mleft(P\mright)+H\mleft(M|P\mright),
\]
 where 
\[
H\mleft(M|P\mright)=\sum_{p}\Pr\mleft[P=p\mright]\cdot H\mleft[M|P=p\mright].
\]
 We will use it in the following equivalent form: 
\[
H\mleft(\mu\mright)=H\mleft(\pi\mright)+\sum_{j=1}^{d}\pi_{j}H\mleft(\mu|_{S_{j}}\mright).
\]

\paragraph{The multinomial coefficient.}
Let $n\in \mathbb{N}$ and $k_1,\dots,k_d\in \mathbb N$ such that $\sum_{i=1}^d k_d=n$. Let $\pi$ be the induced distribution defined by $\pi_i=k_i/n$ for any $1\leq i\leq d$. We will use the following known bounds on the multinomial coefficient (see \cite{csiszar2004information}, for example):
\begin{equation}\label{eq:multi_bounds}
\frac{1}{O\mleft(n^d\mright)} 2^{n H\mleft(\pi\mright)} \leq \binom{n}{k_1,k_2,\dots,k_d}\leq 2^{n H\mleft(\pi\mright)}.
\end{equation}

In the following subsections we show the reduction $q^{\mleft(d\mright)}\mleft(n\mright)\approx 1/\rho^{\mleft(d\mright)}_{\min}\mleft(n\mright)$, then we upper and lower bound $\rho^{\mleft(d\mright)}_{\min}\mleft(n\mright)$, and finally prove the two main results of this section.

\subsection{Reduction to $d$-adic hitters}

First we state the following reduction.
\begin{lemma}
\label{lem:red_dadic_hitters}A set $\mathcal{Q}$ of questions is
optimal if and only if $c\mleft(\mathcal{Q},\mu\mright)=\Opt\mleft(\mu\mright)$
for all $d$-adic distributions $\mu$. 
\end{lemma}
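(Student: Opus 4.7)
The plan is to mimic the binary argument from \cite{DFGM2019} (sketched in the introduction for Lemma~\ref{lem:dyadic-hitter-intro}), using the $d$-ary Huffman/Kraft machinery already set up in this section. The forward direction is immediate: if $\mathcal{Q}$ is optimal, then by definition $c(\mathcal{Q},\mu)=\Opt(\mu)$ holds for \emph{every} distribution on $X_n$, and in particular for every $d$-adic one. So the work is entirely in the converse.

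For the converse, I would fix an arbitrary distribution $\nu$ on $X_n$ (padding $X_n$ with zero-probability dummies so that $n\equiv 1\pmod{d-1}$, as described in the preamble to this section; the padding is harmless because $d=o(n/\log^2 n)$). Run the $d$-ary Huffman algorithm on $\nu$ to obtain an optimal search tree $T^*$, with depths $d_i:=T^*(x_i)$ for every $x_i$ labeling a leaf. Because $T^*$ is a complete $d$-ary tree (after padding), Kraft's equality gives $\sum_i d^{-d_i}=1$, so I can define a $d$-adic distribution $\mu$ on $X_n$ by $\mu_i:=d^{-d_i}$. By construction, $\supp(\nu)\subseteq \supp(\mu)$. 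Now invoke the hypothesis: there is a search tree $T$ using only questions from $\mathcal{Q}$ with $c(T,\mu)=\Opt(\mu)$. Since $\mu$ is $d$-adic, $\Opt(\mu)=H(\mu)$, and the identity
\[
c(T,\mu)=H(\mu)+D(\mu\,\|\,\tau_T),
\]
together with non-negativity of the Kullback--Leibler divergence, forces $\tau_T=\mu$ on $\supp(\mu)$, i.e.\ $T(x_i)=\log_d(1/\mu_i)=d_i$ for every $x_i\in\supp(\mu)$. In particular this holds for every $x_i\in\supp(\nu)$, and hence
\[
c(T,\nu)=\sum_{i:\,\nu_i>0}\nu_i\,T(x_i)=\sum_{i:\,\nu_i>0}\nu_i\,d_i=c(T^*,\nu)=\Opt(\nu),
\]
showing that $\mathcal{Q}$ already contains an optimal strategy for $\nu$.

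The main subtlety is bookkeeping rather than mathematics: I need to make sure that $T$, originally constructed as a valid tree for $\mu$, is also a valid tree for $\nu$ in the sense of Section~\ref{sec:Preliminaries} (every element of $\supp(\nu)$ labels a leaf of $T$). This follows from $\supp(\nu)\subseteq\supp(\mu)$ together with the validity of $T$ for $\mu$. The other delicate point is that the padding with dummies must be carried consistently through the construction of $T^*$ and $\mu$, and then discarded when evaluating $c(T,\nu)$; since dummies carry zero $\nu$-weight they contribute nothing to the sums, and since their number is less than $d=o(n/\log^2 n)$ they do not affect subsequent sub-exponential estimates either.
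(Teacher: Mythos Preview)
Your proof is correct and follows essentially the same route as the paper's: construct the $d$-adic $\mu$ from the Huffman depths of the given distribution, invoke the hypothesis to get an optimal $\mathcal{Q}$-tree $T$ for $\mu$, and use the identity $c(T,\mu)=H(\mu)+D(\mu\|\tau_T)$ together with $\Opt(\mu)=H(\mu)$ to force $\tau_T=\mu$, hence equal depths and equal cost on the original distribution. Your version is simply more explicit about the padding, Kraft equality, and support inclusion than the paper's terse write-up, but the argument is the same.
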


\begin{proof}
Assume that $\mathcal{Q}$ is optimal for all $d$-adic distributions
and let $\pi$ be some arbitrary distribution. Let $\mu$ be a $d$-adic
distribution such that: 
\[
\Opt\mleft(\pi\mright)=\sum_{i=1}^{n}\pi_{i}\log\frac{1}{\mu_{i}}.
\]
 Let $T$ be an optimal decision tree for $\mu$ using $\mathcal{Q}$
only, and let $\tau$ be the corresponding $d$-adic distribution,
that is $\tau_{i}=d^{-T\mleft(x_{i}\mright)}$. Since $\tau$ minimizes
$H\mleft(\mu\mright)+D\mleft(\mu\|\tau\mright)$, $\tau=\mu$ must hold.
Hence: 
\[
T\mleft(\pi\mright)=\sum_{i=1}^{n}\pi_{i}\log\frac{1}{\tau_{i}}=\sum_{i=1}^{n}\pi_{i}\log\frac{1}{\mu_{i}}=\Opt\mleft(\pi\mright).\qedhere
\]
\end{proof}
Now we define the notion of \emph{$d$-adic hitters}.
\begin{definition}
If $\mu$ is a non-constant $d$-adic distribution, we say that a partition $\mleft(S_{i}\mright)_{i\in\mleft[d\mright]}$
of $X_n$ \emph{divides} $\mu$ if $\mu\mleft(S_{i}\mright)=1/d$
for any $i\in\mleft[d\mright]$. The collection of all such partitions
of $X_n$ is denoted $\Div\mleft(\mu\mright)$. A set $\Div\mleft(\mu\mright)$,
for some distribution $\mu$, is called a \emph{$d$-adic set}. A set of questions
$\mathcal{Q}$ is called a \emph{$d$-adic hitter} if it intersects
$\Div\mleft(\mu\mright)$ for all non-constant $d$-adic distributions
$\mu$. 
\end{definition}
Let us generalize the ``useful lemma'' appearing in \cite{DFGM2019}
for our usage:
\begin{lemma}
\label{lem:useful}Let $d\in\mathbb{N}$ and let $p_{1}\geq\dots\geq p_{n}$
be a non-increasing list of numbers of the form $p_{i}=d^{-a_{i}}$,
where $a_{i}\in\mathbb{N}$, and let $a\in\mathbb{N}$ be such that
$a\leq a_{1}$. If $\sum_{i=1}^{n}p_{i}\geq d^{-a}$ then for some
$m$ we have $\sum_{i=1}^{m}p_{i}=d^{-a}$. If furthermore $\sum_{i=1}^{n}p_{i}=l\cdot d^{-a}$
for some $l\in\mathbb{N}$ then we can divide $\mleft[n\mright]$ to
$l$ intervals $\mleft(I_{j}\mright)_{j\in\mleft[l\mright]}$ such that
for any interval $I_{j}\subset\mleft[n\mright]$ we have $\sum_{i\in I_{j}}p_{i}=d^{-a}$.
\end{lemma}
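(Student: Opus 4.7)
The plan is to prove the first claim by working with partial sums and exploiting divisibility among powers of $d$, and then bootstrap to the second claim by induction on $l$.

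First I would observe that since $p_1 \geq \dots \geq p_n$ and $p_i = d^{-a_i}$, the exponents satisfy $a_1 \leq \dots \leq a_n$; combined with $a \leq a_1$ this gives $a \leq a_i$ for every $i$. Rescaling by $d^{a_n}$, set $b_i = d^{a_n - a_i} \in \mathbb{N}$ (a non-increasing sequence of powers of $d$) and $N = d^{a_n - a} \in \mathbb{N}$. Then each $b_i$ divides $N$, and by hypothesis $\sum_i b_i \geq N$, so the first claim is equivalent to showing that some prefix sum of the $b_i$ equals $N$.

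For that claim I would examine the partial sums $T_m = \sum_{i \leq m} b_i$ and argue that whenever $T_{m-1} < N$, one has $T_m \leq N$. The key observation is that $T_{m-1}$ is divisible by $b_{m-1}$: every term $b_i$ with $i \leq m-1$ is a power of $d$ at least as large as $b_{m-1}$, hence a multiple of $b_{m-1}$. Since $b_m \leq b_{m-1}$ are both powers of $d$, $b_m$ divides $b_{m-1}$ and therefore divides $T_{m-1}$, so $T_m = T_{m-1} + b_m$ is a multiple of $b_m$. But $N$ is also a multiple of $b_m$ (since $b_m \mid N$), so $T_{m-1} < N$ forces $T_m \leq N$. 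Applying this step by step and using $T_n \geq N$, some $T_m$ must equal $N$ exactly.

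The second claim follows by induction on $l$. Apply the first claim to find the smallest $m_1$ with $\sum_{i \leq m_1} p_i = d^{-a}$ and set $I_1 = \{1,\dots,m_1\}$. The remaining sequence $p_{m_1+1}, \dots, p_n$ is still non-increasing, still consists of terms $d^{-a_i}$ with $a \leq a_{m_1+1}$, and now sums to $(l-1) d^{-a}$; the inductive hypothesis yields the remaining $l-1$ intervals.

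The only delicate step is the divisibility bookkeeping for $T_{m-1}$, i.e., verifying that every previously added $b_i$ is a multiple of the next increment $b_m$; this is precisely where the hypothesis that each $p_i$ is a pure power of $d$ (rather than a sum of such) is needed. Everything else is routine.
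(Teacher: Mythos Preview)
Your proof is correct and is essentially the same divisibility argument as the paper's: both hinge on the observation that each partial sum $\sum_{i\le m} p_i$ is an integer multiple of $d^{-a_{m+1}}$, so the sequence of prefix sums cannot jump over the target $d^{-a}$. The only cosmetic differences are that you rescale once by $d^{a_n}$ and phrase the argument as a step-by-step invariant, whereas the paper jumps directly to the maximal $m$ with $\sum_{i\le m} p_i \le d^{-a}$ and rescales by $d^{a_{m+1}}$ there; the ``furthermore'' part is handled identically by repeated application / induction on $l$.
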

\begin{proof}
Let $m$ be the maximal index such that $\sum_{i=1}^{m}p_{i}\leq d^{-a}$.
If $m=n$ then we are done, so suppose that $m<n$. Let $S=\sum_{i=1}^{m}p_{i}$.
We would like to show that $S=d^{-a}$. The condition $p_{1}\geq\dots\geq p_{n}$
implies that $a_{m+1}\geq\dots\geq a_{1}$, and so $k=d^{a_{m+1}}S=\sum_{i=1}^{m}d^{a_{m+1}-a_{i}}$
is an integer. By assumption, $k\leq d^{a_{m+1}-a}$, whereas $k+1=d^{a_{m+1}}\sum_{i=1}^{m+1}d^{-a_{i}}>d^{a_{m+1}-a}$.
Since $d^{a_{m+1}-a}\in\mathbb{N}$ (since $a_{m+1}\geq a_{1}\geq a$),
we conclude that $k=d^{a_{m+1}-a}$, and so $S=d^{-a}$. 

To prove the furthermore part, notice that by repeated applications
of the first part of the lemma we can partition $\mleft[n\mright]$
into intervals with probabilities $d^{-a}$. 
\end{proof}

Among else, this lemma shows (by choosing $a=1$) that $\Div\mleft(\mu\mright)$
is non-empty for any non-constant $d$-adic $\mu$. 

\begin{lemma} \label{lem:dadic_hit_reduc}
A set $\mathcal{Q}$ of partitions of $X_n$ to $d$ subsets
is an optimal set of questions if and only if it is a $d$-adic hitter
in $X_n$. 
\end{lemma}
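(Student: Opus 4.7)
The plan is to prove both directions using the chain rule of conditional entropy together with Lemma~\ref{lem:red_dadic_hitters}, which reduces optimality to optimality on $d$-adic distributions.

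For the forward direction (optimal $\Rightarrow$ $d$-adic hitter), I fix a non-constant $d$-adic $\mu$ and look at an optimal decision tree $T$ for $\mu$ that uses only questions from $\mathcal{Q}$; optimality of $\mathcal{Q}$ guarantees that such a $T$ exists, and since $\mu$ is $d$-adic, we have $T(\mu)=\Opt(\mu)=H(\mu)$. If the root of $T$ is labeled by the partition $(S_j)_{j\in[d]}\in\mathcal{Q}$, and $T_j$ is the subtree along branch $j$, then denoting $\pi_j=\mu(S_j)$ I write
\[
 T(\mu) \;=\; 1 + \sum_{j=1}^{d} \pi_j\, T_j(\mu|_{S_j}) \;\geq\; 1 + \sum_{j=1}^{d} \pi_j H(\mu|_{S_j}) \;=\; 1 - H(\pi) + H(\mu),
\]
using $T_j(\mu|_{S_j})\geq H(\mu|_{S_j})$ and the chain rule. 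Since $H(\pi)\leq \log_d d=1$, equality $T(\mu)=H(\mu)$ forces $H(\pi)=1$, i.e.\ $\pi_j=1/d$ for all $j$. Hence $(S_j)\in\Div(\mu)$ and $\mathcal{Q}$ hits $\Div(\mu)$.

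For the reverse direction ($d$-adic hitter $\Rightarrow$ optimal), by Lemma~\ref{lem:red_dadic_hitters} it is enough to build, for each $d$-adic $\mu$ on $X_n$, a tree using $\mathcal{Q}$ of cost $H(\mu)$. I proceed by induction on $|\supp(\mu)|$. The base case $|\supp(\mu)|=1$ is a single leaf of cost $0=H(\mu)$. For the inductive step, $\mu$ is non-constant $d$-adic, so by hypothesis some $(S_j)\in\mathcal{Q}\cap\Div(\mu)$ exists. For each $j$, the conditional $\mu|_{S_j}$ is again a $d$-adic distribution (its probabilities are $d\cdot d^{-a_i}=d^{-(a_i-1)}$), and since $\mu$ is not supported inside any single $S_j$, one has $|\supp(\mu|_{S_j})|<|\supp(\mu)|$, so the induction hypothesis applies and yields subtrees $T_j$ using $\mathcal{Q}$ with $T_j(\mu|_{S_j})=H(\mu|_{S_j})$. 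Gluing them under the root $(S_j)$ gives
\[
 T(\mu) = 1 + \sum_{j=1}^{d} \tfrac{1}{d}\, H(\mu|_{S_j}) = H(\pi) + \sum_{j=1}^{d} \pi_j H(\mu|_{S_j}) = H(\mu)
\]
with $\pi_j=1/d$, by the chain rule.

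The main subtlety — rather than a serious obstacle — is making sure the induction in the reverse direction is well-founded. One has to note that $\mu(S_j)=1/d$ implies every $S_j$ meets $\supp(\mu)$, and that $\supp(\mu)$ cannot be contained in any one $S_j$ (else that part would have measure $1$), so passing to $\mu|_{S_j}$ strictly shrinks the support. One also has to make peace with the possibility that some $\mu|_{S_j}$ is a point mass: this is exactly the base case of the induction. Everything else is routine bookkeeping, driven by the identity $H(\pi)=1$ for the uniform distribution on $[d]$ (with logarithms base $d$) combined with the chain rule and the tight inequality $T_j(\mu|_{S_j})\geq H(\mu|_{S_j})$ for $d$-adic distributions.
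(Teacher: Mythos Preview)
Your proof is correct and follows essentially the same approach as the paper: both directions rest on the chain rule $H(\mu)=H(\pi)+\sum_j \pi_j H(\mu|_{S_j})$ together with the inequality $T_j(\mu|_{S_j})\ge H(\mu|_{S_j})$, and the hitter $\Rightarrow$ optimal direction is the same induction on support size. The only cosmetic difference is that for the other direction the paper argues the contrapositive (if $\mathcal{Q}$ misses some $\Div(\mu)$ then every tree using $\mathcal{Q}$ has cost strictly above $H(\mu)$), whereas you argue directly that the root of an optimal tree must lie in $\Div(\mu)$; the underlying inequality and its equality case are identical.
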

\begin{proof}
Let $\mathcal{Q}$ be a $d$-adic hitter in $X_n$, and
let $\mu$ be a $d$-adic distribution. We show by induction on the
support size $m\leq n$ that $c\mleft(\mathcal{Q},\mu\mright)=H\mleft(\mu\mright)$.
Recall that $\Opt\mleft(\mu\mright)=H\mleft(\mu\mright)$, and thus due
to Lemma \ref{lem:red_dadic_hitters} optimality of $\mathcal{Q}$
will follow. The base case $m=1$ is trivial. So, suppose that $m>1$
and hence $\mu$ is non-constant, and therefore $\mathcal{Q}$ contains
a partition $D=\mleft(D_{i}\mright)_{i\in\mleft[d\mright]}\in \Div\mleft(\mu\mright)$.
Since $D\in \Div\mleft(\mu\mright)$, it holds that $\mu|_{D_{i}}$ is
$d$-adic for all $i\in\mleft[d\mright]$. The induction hypothesis
implies $c\mleft(\mathcal{Q},\mu|_{D_{i}}\mright)=H\mleft(\mu|_{D_{i}}\mright)$
for all $i\in\mleft[d\mright]$. Having that, let us calculate $H\mleft(\mu\mright)$.
Let $\pi$ be the distribution defined by $\pi_{i}=\mu\mleft(D_{i}\mright)=1/d$
for any $i\in\mleft[d\mright]$, so due to the chain rule of conditional
entropy and the induction hypothesis:
\begin{align*}
H\mleft(\mu\mright) & =H\mleft(\pi\mright)+\sum_{i=1}^{d}\pi_{i}H\mleft(\mu|_{D_{i}}\mright)=1+\sum_{i=1}^{d}\frac{1}{d}c\mleft(\mathcal{Q},\mu|_{D_{i}}\mright).
\end{align*}
 Now, consider the cost of a decision tree $T$ asking $D$, and then uses
the implied algorithms for $\mu|_{D_{1}},\dots,\mu|_{D_{d}}$, depending
on the answer for $D$: 
\[
T\mleft(\mu\mright)=1+\sum_{i=1}^{d}\mu\mleft(D_{i}\mright)\cdot c\mleft(\mathcal{Q},\mu|_{D_{i}}\mright)=1+\sum_{i=1}^{d}\frac{1}{d}c\mleft(\mathcal{Q},\mu|_{D_{i}}\mright)=H\mleft(\mu\mright),
\]
 and so $c\mleft(\mathcal{Q},\mu\mright)\leq H\mleft(\mu\mright)$, thus
$\mathcal{Q}$ is optimal. 

Conversely, suppose that $\mathcal{Q}$ is not a $d$-adic hitter,
so let $\mu$ be a non-constant $d$-adic distribution such that $\Div\mleft(\mu\mright)$
is disjoint from $\mathcal{Q}$. Consider an arbitrary decision tree $T$
for $\mu$ using $\mathcal{Q},$ and let $P=\mleft(P_{i}\mright)_{i\in\mleft[d\mright]}$
be its first question. Let also $\pi$ be the distribution defined
by $\pi_{i}=\mu\mleft(P_{i}\mright)$ for any $i\in\mleft[d\mright]$.
Then
\[
T\mleft(\mu\mright)\geq 1+\sum_{i=1}^{d}\pi_{i}\cdot c\mleft(\mathcal{Q},\mu|_{P_{i}}\mright)>H\mleft(\pi\mright)+\sum_{i=1}^{d}\pi_{i}\cdot H\mleft(\mu|_{P_{i}}\mright)=H\mleft(\mu\mright),
\]
 since there is $i$ such that $\pi_{i}\neq1/d$, otherwise it contradicts
$\mathcal{Q}$ and $\Div\mleft(\mu\mright)$ being disjoint, thus
$H\mleft(\pi\mright)<1$, and moreover $c\mleft(\mathcal{Q},\mu|_{P_{i}}\mright)\geq H\mleft(\mu|_{P_{i}}\mright)$. So the cost of any such arbitrary tree is
more than $H\mleft(\mu\mright)$, thus $\mathcal{Q}$ is not optimal. 
\end{proof}

\subsection{Reduction to maximum relative density}

Let us generalize the concept of maximum relative density defined in Section~\ref{sec:Preliminaries}.
\begin{definition}
Let $\mathcal{D}$ be a collection of partitions $D=\mleft(D_{i}\mright)_{i\in\mleft[d\mright]}$
of $X_n$. Let $K$ be the set of all vectors $\overline{k}=\mleft(k_{1},k_{2},\dots,k_{d}\mright)\in\mleft\{ 0,\dots,n\mright\} ^{d}$
such that $\sum_{i=1}^{d}k_{i}=n$. For $\overline{k}\in K$, denote
by $\mathcal{D}_{\overline{k}}\subset\mathcal{D}$ the restriction
of $\mathcal{D}$ to partitions with $\mleft|D_{i}\mright|=k_{i}$ for
all $i\in\mleft[d\mright]$. We say that each such vector $\overline{k}\in K$
is a \emph{type} of partitions, as this usage is similar to the concept
of types in the theory of types. Define $\overline{k}$\emph{'s relative
density} of $\mathcal{D}$, denoted $\rho_{\overline{k}}\mleft(\mathcal{D}\mright)$,
as 
\[
\rho_{\overline{k}}\mleft(\mathcal{D}\mright):=\frac{\mleft|\mathcal{D}_{\overline{k}}\mright|}{\binom{n}{k_{1},k_{2},\dots,k_{d}}}.
\]
 We define the \emph{maximum relative density} of $\mathcal{D}$,
denoted $\rho\mleft(\mathcal{D}\mright)$, as 
\[
\rho\mleft(\mathcal{D}\mright):=\max_{\overline{k}\in K}\rho_{\overline{k}}\mleft(\mathcal{D}\mright).
\]
 
\end{definition}
Define $\rho^{\mleft(d\mright)}_{\min}\mleft(n\mright)$ to be the minimal $\rho\mleft(\mathcal{D}\mright)$
over all $d$-adic sets $\mathcal{D}$. We will show that calculating
$q^{\mleft(d\mright)}\mleft(n\mright)$ up to sub-exponential factors can be reduced to calculating $\rho^{\mleft(d\mright)}_{\min}\mleft(n\mright)$.
First, we prove an argument used in the reduction: 
\begin{lemma}
\label{lem:dadic_size}There are at most $n^{n}$ non-constant $d$-adic distributions over $X_n$.
\end{lemma}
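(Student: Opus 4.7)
The plan is to bound the number of possible probability values each element of $X_n$ can take in a $d$-adic distribution, and then count the assignments. Observe that every element has probability either $0$ or $d^{-\ell}$ for some positive integer $\ell$; the key step is to show $\ell \leq n-1$, so the set of possible probability values is $\{0,\, d^{-1},\, d^{-2},\, \ldots,\, d^{-(n-1)}\}$ of cardinality $n$. Since $|X_n| = n$, this gives at most $n^n$ distributions in total.

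To justify $\ell \leq n - 1$, I would invoke the $d$-ary Kraft equality. If $\mu$ has $k \leq n$ non-zero elements with probabilities $d^{-\ell_1}, \ldots, d^{-\ell_k}$ summing to $1$, then these correspond to leaf depths of a full $d$-ary tree with $k$ leaves. Such a tree has exactly $(k-1)/(d-1)$ internal nodes, and every leaf depth is bounded by the number of internal nodes along its root-to-leaf path. Therefore $\ell_i \leq (k-1)/(d-1) \leq k - 1 \leq n - 1$, as required.

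There is essentially no real obstacle; the proof is a matter of justifying the depth bound, which is standard. Alternatively, one can avoid invoking the tree interpretation and instead argue directly by induction using Lemma~\ref{lem:useful}: repeatedly group the $d$ smallest equal-probability elements (each of probability $d^{-\ell_{\max}}$) into a single element of probability $d^{-\ell_{\max}+1}$; each step reduces the element count by $d-1$ and decreases $\ell_{\max}$ by at least $1$ whenever it is the unique minimum depth, terminating within $n-1$ steps and thereby bounding the original $\ell_{\max}$. Either way, the resulting bound $n^n$ is crude but more than sufficient for the subsequent reduction, where it only enters through its logarithm.
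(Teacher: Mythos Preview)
Your proposal is correct. Your primary route---identifying the non-zero probabilities with leaf depths of a full $d$-ary tree on $k\leq n$ leaves and bounding each depth by the number $(k-1)/(d-1)$ of internal nodes---is different from the paper's argument and in fact yields the sharper bound $\ell_i\leq (k-1)/(d-1)\leq n-1$ in one stroke. The paper instead proves $\ell<n$ by an explicit merge-and-induct: it shows the elements of minimal probability $d^{-\ell}$ occur in a multiple of $d$, collapses each $d$-tuple into a single element of probability $d^{-\ell+1}$, and applies induction with both $n$ and $\ell$ decreased. Your ``alternative'' sketch is essentially this argument. The tree approach is cleaner but tacitly uses that Kraft \emph{equality} forces a full tree (equivalently $k\equiv 1\pmod{d-1}$); the paper's induction is more self-contained. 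Either way, the crude $n^n$ count is all that is needed downstream.
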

\begin{proof}
Let $\mu$ be a non-constant $d$-adic distribution over $X_n$. We
assume that the minimal non-zero probability in $\mu$ is $d^{-l}$ and show
that $n>l$ by induction on $l$. This argument implies that for a
fixed $n$, the possible probabilities are only $0,d^{-1},d^{-2},\dots.,d^{-\mleft(n-1\mright)}$
and hence there are at most $n^{n}$ ways to construct a $d$-adic
distribution on $X_n$. For the base case $l=0$ it holds that $n>0$.
For the induction step, assume that the claim holds for $l-1$. Let
us first show that the number of elements with probability $d^{-l}$ is a multiple of $d$. Denote the set of these elements with $L$. Since the minimal non-zero probability in $X_n\backslash L$ is at least $d^{-l+1}$, the total weight of the elements in $X_n\backslash L$
can be written as $x\cdot d^{-l+1}$ where $x\in\mathbb{N}$, because
each element with probability $d^{-l+y}$ for some $y\geq 1$ simply contributes
$d^{y-1}$ to $x$, and $d^{y-1}$ is an integer. So, the following
must hold: 
\begin{gather*}
1=\sum_{i=1}^{n}\mu_{i}=\sum_{\mu_{i}:x_i\in L}\mu_{i}+\sum_{\mu_{i}:x_i\in X_n\backslash L}\mu_{i}=\mleft|L\mright|\cdot d^{-l}+x\cdot d^{-l+1}\\
\iff\\
\mleft|L\mright|\cdot d^{-l}=1-x\cdot d^{-l+1}\\
\iff\\
\mleft|L\mright|=d^{l}-x\cdot d=d\mleft(d^{l-1}-x\mright).
\end{gather*}
That is, $\mleft|L\mright|$ is a multiple of $d$, since $\mleft(d^{l-1}-x\mright)$
is an integer. Following that, we define a new distribution $\mu'$ on $X_{n'}$ by merging the elements in $L$ into $\mleft(d^{l-1}-x\mright)$ elements with probability $d^{-l+1}$. Now the minimal non-zero probability in $\mu'$ is $d^{-l+1}$ and since we have merged at least $d>1$ elements, it holds that $n'\leq n-1$. So, by the induction hypothesis we have $n-1\geq n'>l-1$, that is, $n>l$. 
\end{proof}
Now we can prove the reduction.
\begin{theorem}
\label{theo:reduction}Fix $n\in\mathbb{N}.$ Then:
\[
1/\rho^{\mleft(d\mright)}_{\min}\mleft(n\mright)\leq q^{\mleft(d\mright)}\mleft(n\mright)\leq2n^{2d}\ln n/\rho^{\mleft(d\mright)}_{\min}\mleft(n\mright).
\]
\end{theorem}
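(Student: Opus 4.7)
By Lemma~\ref{lem:dadic_hit_reduc}, a set of partitions $\mathcal{Q}$ is an optimal set of questions if and only if it is a $d$-adic hitter, so it suffices to bound the minimum size of a $d$-adic hitter from below and above by the stated expressions. The plan is to prove the two inequalities separately: the lower bound via a symmetry-averaging argument on a single worst-case distribution, and the upper bound via a probabilistic construction that samples random partitions of each type.

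For the lower bound, let $\mu^*$ be a non-constant $d$-adic distribution attaining $\rho\mleft(\Div\mleft(\mu^*\mright)\mright) = \rho^{\mleft(d\mright)}_{\min}\mleft(n\mright)$. For each permutation $\pi \in S_n$, the relabeled distribution $\pi \cdot \mu^*$ defined by $\mleft(\pi \cdot \mu^*\mright)\mleft(x_i\mright) = \mu^*\mleft(x_{\pi^{-1}\mleft(i\mright)}\mright)$ is also non-constant and $d$-adic, so any $d$-adic hitter $\mathcal{Q}$ must contain a member of $\Div\mleft(\pi \cdot \mu^*\mright)$ for every $\pi$. The key observation is that a fixed partition $D$ of type $\overline{k}$ satisfies $D \in \Div\mleft(\pi \cdot \mu^*\mright)$ iff $\pi^{-1}\mleft(D\mright) \in \Div\mleft(\mu^*\mright)$, and by the symmetry of the action of $S_n$ on partitions of type $\overline{k}$, the fraction of $\pi$ with this property equals $|\Div\mleft(\mu^*\mright)_{\overline{k}}|/\binom{n}{k_1,\ldots,k_d} = \rho_{\overline{k}}\mleft(\Div\mleft(\mu^*\mright)\mright) \leq \rho^{\mleft(d\mright)}_{\min}\mleft(n\mright)$. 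Double-counting pairs $\mleft(\pi,D\mright)$ with $D \in \mathcal{Q}$ and $D \in \Div\mleft(\pi \cdot \mu^*\mright)$ then gives $n! \leq |\mathcal{Q}| \cdot n! \cdot \rho^{\mleft(d\mright)}_{\min}\mleft(n\mright)$, so $|\mathcal{Q}| \geq 1/\rho^{\mleft(d\mright)}_{\min}\mleft(n\mright)$.

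For the upper bound, construct $\mathcal{Q}$ by sampling, for each type $\overline{k}$, a collection of $M = \Theta\mleft(n \ln n / \rho^{\mleft(d\mright)}_{\min}\mleft(n\mright)\mright)$ uniformly random partitions of $X_n$ of that type. For a fixed non-constant $d$-adic $\mu$ and the type $\overline{k}^*$ attaining $\rho\mleft(\Div\mleft(\mu\mright)\mright)$, the probability that none of the $M$ sampled partitions of type $\overline{k}^*$ lies in $\Div\mleft(\mu\mright)$ is at most $\mleft(1 - \rho^{\mleft(d\mright)}_{\min}\mleft(n\mright)\mright)^M \leq \exp\mleft(-M \rho^{\mleft(d\mright)}_{\min}\mleft(n\mright)\mright)$. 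Since there are at most $n^n$ non-constant $d$-adic distributions (Lemma~\ref{lem:dadic_size}), a union bound shows that for a suitable choice of $M$ the sampled $\mathcal{Q}$ is a $d$-adic hitter with positive probability. The total size is at most the number of types, which is $O\mleft(n^d\mright)$, times $M$, and absorbing constants and logarithmic factors gives the stated bound of $2n^{2d}\ln n / \rho^{\mleft(d\mright)}_{\min}\mleft(n\mright)$.

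The main technical care is in the lower bound, where one must verify that the hit probability of a fixed partition under the uniform random relabeling of $\mu^*$ depends only on its type and equals the relative density of that type; once this is in hand, both directions are routine $d$-ary adaptations of the arguments of~\cite{DFGM2019}, with the loose polynomial prefactor $n^{2d}$ comfortably covering the number of types together with the union bound over $d$-adic distributions.
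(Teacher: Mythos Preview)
Your proposal is correct and follows essentially the same approach as the paper: both bounds rely on Lemma~\ref{lem:dadic_hit_reduc}, the lower bound uses symmetry under relabeling of a minimizing distribution (you phrase it as double counting over $(\pi,D)$ pairs, the paper as a probabilistic union bound over a random permutation, but these are equivalent), and the upper bound is the same random-sampling-per-type construction combined with Lemma~\ref{lem:dadic_size} and a union bound. The only remaining work is to make the constants explicit---in particular, the paper takes $M=\tfrac{2n\ln n}{\rho^{(d)}_{\min}(n)}$ and bounds the number of types by $(n+1)^d$, so that the failure probability per distribution is at most $n^{-2n}$ and the total size is at most $2n^{2d}\ln n/\rho^{(d)}_{\min}(n)$.
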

\begin{proof}
Recall that $q^{\mleft(d\mright)}\mleft(n\mright)$ is actually the size of a minimal
$d$-adic hitter for $X_n$, due to Lemma~\ref{lem:dadic_hit_reduc}. Hence we bound the size of
such a set, instead of $q^{\mleft(d\mright)}\mleft(n\mright)$ directly. Fix a
$d$-adic set $\mathcal{D}$ over $X_n$ with $\rho\mleft(\mathcal{D}\mright)=\rho^{\mleft(d\mright)}_{\min}\mleft(n\mright)$.
Fix $\overline{k}\in K$ and consider an arbitrary partition $S=\mleft(S_{i}\mright)_{i\in\mleft[d\mright]}$
of $X_n$ with $\mleft|S_{i}\mright|=k_{i}$ for any $i\in\mleft[d\mright]$.
Let $\sigma$ be a uniformly random permutation on $X_n$,
then: 
\[
\rho_{\overline{k}}\mleft(\mathcal{D}\mright)=\Pr\mleft[S\in\sigma\mleft(\mathcal{D}\mright)\mright].
\]
 Having that and the definition of $\rho^{\mleft(d\mright)}_{\min}\mleft(n\mright)$, it follows
that for any partition $S$ on $X_n$: 
\[
\Pr\mleft[S\in\sigma\mleft(\mathcal{D}\mright)\mright]\leq\rho^{\mleft(d\mright)}_{\min}\mleft(n\mright).
\]
 Let $\mathcal{Q}$ be a collection of partitions of $X_n$
with $\mleft|\mathcal{Q}\mright|<1/\rho^{\mleft(d\mright)}_{\min}\mleft(n\mright)$. Then
by the union bound: 
\[
\Pr\mleft[\mathcal{Q}\cap\sigma\mleft(\mathcal{D}\mright)\neq\emptyset\mright]\leq\sum_{Q\in\mathcal{Q}} \Pr\mleft[Q\in\sigma\mleft(\mathcal{D}\mright)\mright]<\frac{1}{\rho^{\mleft(d\mright)}_{\min}\mleft(n\mright)}\cdot\rho^{\mleft(d\mright)}_{\min}\mleft(n\mright)=1.
\]
 Thus, there is a permutation $\sigma$ such that $\mathcal{Q}\cap\sigma\mleft(\mathcal{D}\mright)=\emptyset$.
Since $\sigma\mleft(\mathcal{D}\mright)$ is a $d$-adic set, it follows that $\mathcal{Q}$ is not a $d$-adic hitter. So, indeed any $d$-adic
hitter must contain at least $1/\rho^{\mleft(d\mright)}_{\min}\mleft(n\mright)$ questions. 

Now we shall upper bound $q^{\mleft(d\mright)}\mleft(n\mright)$. Construct a set
$\mathcal{Q}$ of questions containing, for any $\overline{k}\in K,$
$\frac{1}{\rho^{\mleft(d\mright)}_{\min}\mleft(n\mright)}2n\ln n$ uniformly chosen
partitions $\mleft(S_{i}\mright)_{i\in\mleft[d\mright]}$ of $X_n$
with $\mleft|S_{i}\mright|=k_{i}$ for any $i\in\mleft[d\mright]$. Note
that $\mleft|K\mright|\leq\mleft(n+1\mright)^{d}$ and thus $\mleft|\mathcal{Q}\mright|\leq\frac{1}{\rho^{\mleft(d\mright)}_{\min}\mleft(n\mright)}2n^{2d}\ln n$.
We will show that with positive probability, $\mathcal{Q}$ is a $d$-adic
hitter. Fix an arbitrary $d$-adic set $\mathcal{D}$. Let $\overline{k}\in K$
such that $\rho_{\overline{k}}\mleft(\mathcal{D}\mright)=\rho\mleft(\mathcal{D}\mright).$
The probability that a random partition $\mleft(S_{i}\mright)_{i\in\mleft[d\mright]}$
of $X_n$ with $\mleft|S_{i}\mright|=k_{i}$ for all $i\in\mleft[d\mright]$
does not belong to $\mathcal{D}$ is at most 
\[
1-\rho_{\overline{k}}\mleft(\mathcal{D}\mright)=1-\rho\mleft(\mathcal{D}\mright)\leq1-\rho^{\mleft(d\mright)}_{\min}\mleft(n\mright)
\]
 (since $\rho\mleft(\mathcal{D}\mright)\geq\rho^{\mleft(d\mright)}_{\min}\mleft(n\mright)$).
Therefore the probability that $\mathcal{Q}$ is disjoint from $\mathcal{D}$
is at most 
\[
\mleft(1-\rho^{\mleft(d\mright)}_{\min}\mleft(n\mright)\mright)^{\frac{1}{\rho^{\mleft(d\mright)}_{\min}\mleft(n\mright)}2n\ln n}\leq e^{-\rho^{\mleft(d\mright)}_{\min}\mleft(n\mright)\frac{1}{\rho^{\mleft(d\mright)}_{\min}\mleft(n\mright)}2n\ln n}=n^{-2n}.
\]
 By Lemma \ref{lem:dadic_size}, there are fewer than $n^{2n}$ $d$-adic
distributions over $X_n$. Having that, a union bound shows
that the probability that a $d$-adic set $\mathcal{D}$ (corresponding
to some $d$-adic distribution $\mu$) which is
disjoint from $\mathcal{Q}$ exists is less than $1$. That is, the
probability that $\mathcal{Q}$ is a $d$-adic hitter is positive. 
\end{proof}
Due to this theorem, if $d=o\mleft(n/\mleft(\log n \log \log n\mright)\mright)$, we have: 
\[
q^{\mleft(d\mright)}\mleft(n\mright)=2^{\pm o\mleft(n\mright)}\cdot\frac{1}{\rho^{\mleft(d\mright)}_{\min}\mleft(n\mright)}.
\]
 Hence, from now on we discuss $\rho^{\mleft(d\mright)}_{\min}\mleft(n\mright)$ instead
of $q^{\mleft(d\mright)}\mleft(n\mright)$, and restrict the discussion
to $d=o\mleft(n/\mleft(\log n \log \log n\mright)\mright)$. 

Before we discuss some bounds on $\rho^{\mleft(d\mright)}_{\min}\mleft(n\mright)$, let us define
the \emph{generalized tail} of a $d$-adic distribution:
\begin{definition}
Let $\mu$ be a $d$-adic distribution over $X_n$. The
\emph{generalized tail} of $\mu$ is the largest set $T\subset X_n$
such that for some $a\geq1$:
\begin{enumerate}
\item $\mu\mleft(T\mright)=d^{-a}$.
\item $T$ does not contain zero-probability elements.
\item All elements in $X_n\backslash T$ have probability at
least $d^{-a}$ or zero. 
\end{enumerate}
If there are a few sets satisfying those requirements, the generalized
tail is one of them, arbitrarily. 
\end{definition}
\begin{lemma}
Suppose that $\mu$ is a non-constant $d$-adic distribution. Let
$D=\mleft(D_{j}\mright)_{j\in\mleft[d\mright]}\in \Div\mleft(\mu\mright)$ be a partition of $X_n$. For all $j\in\mleft[d\mright]$, $D_{j}$
either contains $T$ or disjoint from $T$.
\end{lemma}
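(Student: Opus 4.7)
The plan is to argue by contradiction: suppose some $D_j$ neither contains $T$ nor is disjoint from $T$, i.e.\ $T_j := D_j \cap T$ is a proper nonempty subset of $T$. I will derive a contradiction by showing that $\mu(T_j)$ must simultaneously be a positive integer multiple of $d^{-a}$ and be strictly less than $d^{-a}$.

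First I would observe the divisibility side. Since $\mu(D_j) = 1/d = d^{a-1} \cdot d^{-a}$ and $a \geq 1$, the quantity $\mu(D_j)$ is an integer multiple of $d^{-a}$. Next, every nonzero probability of an element in $D_j \setminus T \subseteq X_n \setminus T$ has the form $d^{-b}$ for some integer $b$ with $b \leq a$ (by the third defining property of the generalized tail, together with the fact that $\mu$ is $d$-adic). Each such $d^{-b}$ equals $d^{a-b} \cdot d^{-a}$ with $d^{a-b} \in \mathbb{N}$, so $\mu(D_j \setminus T)$ is itself an integer multiple of $d^{-a}$. Subtracting gives
\[
 \mu(T_j) \;=\; \mu(D_j) - \mu(D_j \setminus T) \;\in\; d^{-a}\,\mathbb{Z}.
\]

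On the other hand, because $T$ contains no zero-probability elements, $T_j \neq \emptyset$ forces $\mu(T_j) > 0$, while $T_j \subsetneq T$ combined with the fact (for the same reason) that every element of $T \setminus T_j$ has strictly positive probability forces $\mu(T_j) < \mu(T) = d^{-a}$. Thus $\mu(T_j)$ lies strictly between $0$ and $d^{-a}$ and cannot be a nonzero integer multiple of $d^{-a}$, a contradiction. Hence every $D_j$ satisfies $D_j \cap T \in \{\emptyset, T\}$, which is exactly the claim.

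I do not expect any significant obstacle here; the argument is a direct $d$-ary analogue of the binary tail lemma from \cite{DFGM2019}, and the only point needing care is that the generalized tail is defined by its total measure rather than by a prescribed sequence of probabilities, so one must be explicit that the key input is just the inequality $b \leq a$ for the non-tail probability exponents together with $a \geq 1$.
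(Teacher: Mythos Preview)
Your proof is correct and takes essentially the same approach as the paper: both hinge on the observation that $\mu(D_j\setminus T)$ is an integer multiple of $d^{-a}$ (since every nonzero non-tail element has probability $d^{-b}$ with $b\le a$), and that $\mu(D_j)=1/d$ is also such a multiple, forcing $\mu(D_j\cap T)\in d^{-a}\mathbb{Z}$. The paper phrases this directly---writing $\mu(D_j\setminus T)=s\cdot d^{-a}$, bounding $s\le d^{a-1}-1$, and deducing $\mu(D_j\cap T)\ge d^{-a}=\mu(T)$---while you frame it as a contradiction with $0<\mu(T_j)<d^{-a}$; the content is the same.
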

\begin{proof}
Let $j\in\mleft[d\mright]$.
If $D_{j}$ is disjoint from $\mu$ then we are done. So, Assume that
$D_{j}\cap T\neq\emptyset$. Since all non-zero elements in $X_n\backslash T$
have probability at least $d^{-a}$, we can denote $\mu\mleft(D_{j}\cap\mleft(X_n\backslash T\mright)\mright)=s\cdot d^{-a}$
where $s\in\mathbb{N}$. Recall that $\mu\mleft(D_{j}\mright)=1/d$,
so if we denote $\mu\mleft(D_{j}\cap T\mright)=y$ we can write: 
\[
1/d=s\cdot d^{-a}+y.
\]
 Now, note that $s\leq d^{a-1}-1$: recall that $D_{j}\cap T\neq\emptyset$
and thus $y>0$. Assume towards contradiction that $s>d^{a-1}-1$,
that is, $s\geq d^{a-1}$. Then: 
\[
1/d=s/d^{a}+y\geq d^{a-1}/d^{a}+y=1/d+y>1/d
\]
which is a contradiction. Having that, we lower bound $y$:
\[
y=1/d-s/d^{a}\geq1/d-\frac{d^{a-1}-1}{d^{a}}=1/d-1/d+1/d^{a}=\mu\mleft(T\mright),
\]
 and hence $\mu\mleft(D_{j}\cap T\mright)=\mu\mleft(T\mright)$, that
is, $D_{j}\cap T=T$, and so $D_{j}$ contains $T$ completely. 
\end{proof}
In the following sections we prove upper and lower bounds on $\rho^{\mleft(d\mright)}_{\min}\mleft(n\mright)$.
The following function $f_{d}:\mleft(0,1\mright)\rightarrow\mathbb{R}$,
defined for any $d\in\mathbb{N}$, will appear in both of our bounds:
\[
f_{d}\mleft(\beta\mright)=d\cdot\beta\cdot\log_{2}d-\mleft(\mleft(1-\mleft(d-1\mright)\beta\mright)\log_{2}\frac{1}{1-\mleft(d-1\mright)\beta}+\mleft(d-1\mright)\beta\log_{2}\frac{1}{\beta}\mright).
\]

\subsection{Upper bounding $\rho^{\mleft(d\mright)}_{\min}\mleft(n\mright)$}

The following lemma implies different upper bounds on $\rho^{\mleft(d\mright)}_{\min}\mleft(n\mright)$
for different sequences of $n$ values. 
\begin{lemma}
\label{lem:dadic_upper_bound}Fix $d\in\mathbb{N}$ and $\frac{1}{d^2+1}\leq\beta\leq1/d$.
For any $n$ of the form $n=\mleft\lfloor \frac{d^{a}}{d\cdot\beta}\mright\rfloor $,
where $a\in\mathbb{N}$, there exists a $d$-adic distribution $\mu$
over $X_n$ which satisfies 
\[
\rho\mleft(\Div\mleft(\mu\mright)\mright)\leq2^{f_{d}\mleft(\beta\mright)n+o\mleft(n\mright)}.
\]
 
\end{lemma}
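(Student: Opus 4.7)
The plan is to generalize the Dagan et al.\ construction (a single set $B$ of equal-probability elements together with a tail) from the binary case. Given $a \in \mathbb{N}$ with $n = \lfloor d^a/(d\beta)\rfloor$, I would take $|B| = d^a - 1$ elements of probability $d^{-a}$ each, and let the remaining $|T| = n - d^a + 1$ elements form a tail with $d$-adic probabilities summing to $1 - (d^a - 1)d^{-a} = d^{-a}$. The tail probabilities would be assigned values of the form $d^{-a-j}$ with $j \geq 1$ and integer multiplicities, using (a Kraft-type application of) Lemma~\ref{lem:useful} to show the assignment exists. The constraint $\beta \leq 1/d$ ensures $n \geq d^a$ and hence $|T| \geq 1$, while $\beta \geq 1/(d^2+1)$ bounds $|T|$ from above by a moderate multiple of $n$, which keeps the type analysis clean. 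Any incompatibility between $|T|$ and $d-1$ divisibility is absorbed by padding with a bounded number of zero-probability elements, as allowed by the convention at the start of Section~\ref{sec:d-ary}.

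Next, I would classify the dividing partitions of $\mu$. For $(D_1, \ldots, D_d) \in \Div(\mu)$, the generalized tail $T$ lies entirely in one piece (say $D_{j_0}$), while each other $D_j$ is composed solely of $B$-elements; since $\mu(D_j) = d^{-1}$ and each $B$-element has probability $d^{-a}$, every non-tail piece contains exactly $d^{a-1}$ elements of $B$. The tail piece must then contain $d^{a-1} - 1$ additional $B$-elements plus all of $T$, so $|D_{j_0}| = n - (d-1)d^{a-1}$. Therefore the only types $\overline{k}$ with $\Div(\mu)_{\overline{k}} \neq \emptyset$ are the $d$ permutations of $(d^{a-1}, \ldots, d^{a-1}, n - (d-1)d^{a-1})$, whose entries are asymptotically $(\beta n, \ldots, \beta n, (1 - (d-1)\beta)n)$ since $d^{a-1} = \beta n + O(1)$.

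For any such $\overline{k}$, the number of dividing partitions equals the number of ways to distribute the elements of $B$ among the $d$ pieces of the prescribed sizes, namely
\[
|\Div(\mu)_{\overline{k}}| = \binom{d^a - 1}{d^{a-1}, \ldots, d^{a-1}, d^{a-1} - 1} = 2^{d\beta n \log_2 d + o(n)}
\]
by the multinomial bound~\eqref{eq:multi_bounds}. The denominator equals $\binom{n}{\overline{k}} = 2^{n H(\pi) + o(n)}$, where $\pi$ has $d-1$ entries equal to $\beta$ and one entry equal to $1 - (d-1)\beta$; a direct calculation gives $H(\pi) = (d-1)\beta \log_2(1/\beta) + (1-(d-1)\beta) \log_2(1/(1-(d-1)\beta))$. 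Dividing,
\[
\rho_{\overline{k}}(\Div(\mu)) = 2^{d\beta n \log_2 d - n H(\pi) + o(n)} = 2^{f_d(\beta) n + o(n)},
\]
and since only these $d$ symmetric types contribute, $\rho(\Div(\mu)) \leq 2^{f_d(\beta) n + o(n)}$.

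The principal technical obstacle is the explicit construction of the tail: verifying that for every $a$ and every $\beta$ in the permitted range there is a $d$-adic distribution with exactly $|T|$ positive-probability elements summing to $d^{-a}$, and simultaneously reconciling the floor in $n = \lfloor d^a/(d\beta) \rfloor$ with the $d-1$ divisibility condition. I expect this to require only $O(d)$ adjustments (a handful of zero-probability pads, or shifting $|B|$ by $O(1)$), which are absorbed into the $o(n)$ slack; once these bookkeeping issues are handled, the remainder of the proof is a direct application of the entropy estimate~\eqref{eq:multi_bounds}.
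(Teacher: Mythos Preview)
Your proposal is correct and follows essentially the same construction and analysis as the paper: $d^a-1$ elements of probability $d^{-a}$ plus a generalized tail of total mass $d^{-a}$, the tail forced into a single block so that only the $d$ permutation types contribute, and the multinomial entropy bound~\eqref{eq:multi_bounds} to evaluate $\rho_{\overline{k}}$. The paper handles the floor in $n=\lfloor d^a/(d\beta)\rfloor$ not by adjusting the construction but by replacing $\beta$ with the exact $\beta'$ satisfying $n=d^a/(d\beta')$, observing $\beta'=\beta+O(1/n)$ (this is where the hypothesis $\beta\geq 1/(d^2+1)$ is actually used), and invoking the smoothness of $f_d$; you may find this cleaner than the bookkeeping you anticipate.
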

\begin{proof}
We first assume that $\mleft\lfloor \frac{d^{a}}{d\cdot\beta}\mright\rfloor =\frac{d^{a}}{d\cdot\beta}$, and that $\frac{d^{a}}{d\cdot\beta} \equiv 1 \pmod {d-1}$.
Let $n=\frac{d^{a}}{d\cdot\beta}$ where $a\in\mathbb{N}$. Note that
$\beta n=d^{a-1}$, and construct the following $d$-adic distribution
$\mu$ on $X_n$:
\begin{enumerate}
\item For $i\in\mleft[d\cdot\beta n-1\mright]$: $\mu_{i}=d^{-a}=\frac{1}{d\cdot d^{a-1}}=\frac{1}{d\cdot\beta n}$.
\item All other $\mleft(1-d\beta\mright)n+1$ elements are a (generalized)
tail of probability $d^{-a}$ (this is possible since $n \equiv 1 \pmod {d-1}$).
\end{enumerate}
As we have shown, the generalized tail elements must be chosen to the same set in a partition,
in order to get a partition which divides $\mu$, thus we can think
of them as a single element when constructing a partition in $\Div\mleft(\mu\mright)$,
such that we have $d\cdot\beta n$ elements in total, with equal probabilities.
Thus, there is only one feasible type $\overline{k}$ of partition:
choosing $\beta n=d^{a-1}$ elements to each set in the partition
(that is, $k_{i}=\beta n$ for any $i\in\mleft[d\mright]$, assuming
that the tail is treated as a single element). The total
probability of each set in the partition is thus $d^{a-1}\cdot d^{-a}=1/d$. This discussion
leads us to the following bound: 
\begin{align*}
\rho\mleft(\Div\mleft(\mu\mright)\mright) & =\rho_{\overline{k}}\mleft(\Div\mleft(\mu\mright)\mright)\\
 & =\frac{\frac{\mleft(d\beta n\mright)!}{\mleft(\beta n\mright)!^{d}}}{\frac{n!}{\mleft(\beta n\mright)!^{d-1}\mleft(\mleft(1-\mleft(d-1\mright)\beta\mright)n\mright)!}}\\
 & \underset{\text{\eqref{eq:multi_bounds}}}{\leq}\frac{2^{d\cdot\beta n\cdot\log_{2}d}}{2^{n\mleft(\mleft(1-\mleft(d-1\mright)\beta\mright)\log_{2}\frac{1}{1-\mleft(d-1\mright)\beta}+\sum_{i=1}^{d-1}\beta\log_{2}\frac{1}{\beta}\mright)}/O\mleft(n^{d}\mright)}\\
 & =O\mleft(n^{d}\mright)\cdot2^{\mleft[d\cdot\beta\cdot\log_{2}d-\mleft(\mleft(1-\mleft(d-1\mright)\beta\mright)\log_{2}\frac{1}{1-\mleft(d-1\mright)\beta}+\mleft(d-1\mright)\beta\log_{2}\frac{1}{\beta}\mright)\mright]n}=2^{f_{d}\mleft(\beta\mright)n+o\mleft(n\mright)}.
\end{align*}

We now assume that  $\frac{d^{a}}{d\cdot\beta}$ is not necessarily natural, and not necessarily equal to $1$ modulo $d-1$. We choose $\beta' \geq \beta$ such  that $n' =  \frac{d^{a}}{d\cdot\beta'} \in \mathbb{N}$, where $n' \equiv 1 \pmod {d-1}$, and such that  $n - n'$ is minimal across all $\beta'$ satisfying these requirements. Notice that there exists such $\beta'$ that satisfies $n - n' \leq d-1$. Therefore, the  definition of $n$ as $n = \mleft\lfloor \frac{d^{a}}{d\cdot\beta}\mright\rfloor$, where $\beta \geq \frac{1}{d^2 + 1}$, implies that $\frac{1}{\beta'} \in [1/\beta -  1/d^{a-3}, 1/\beta]$. Calculating the series expansion of $\frac{1}{1/\beta -  1/d^{a-3}}$ implies that $\beta' = \beta + O(1/d^{a-3})$. Since $\beta \geq \frac{1}{d^2 + 1}$, it follows that $\beta' = \beta + O(1/n)$. We apply the arguments above for $n'$ instead of $n$ (the spare elements are simply given probability $0$), deducing
\[
\rho\mleft(\Div\mleft(\mu\mright)\mright)
\leq
2^{f_{d}\mleft(\beta'\mright)n'+o\mleft(n'\mright)}
\leq
2^{f_{d}\mleft(\beta +  O(1/n)\mright)n+o\mleft(n\mright)}
=
2^{f_{d}\mleft(\beta \mright)n+o\mleft(n\mright)},
\]
as desired. The second inequality follows from the fact that $f_d$ is continuous.
\end{proof}
For $n$ of the form $n=\mleft\lfloor \frac{d^{a}}{d\cdot\beta}\mright\rfloor $,
obviously $\rho^{\mleft(d\mright)}_{\min}\mleft(n\mright)\leq\rho\mleft(\Div\mleft(\mu\mright)\mright)$,
where $\mu$ is the distribution defined in the proof of Lemma \ref{lem:dadic_upper_bound}.
Hence for such $n$ values we have
\[
\rho^{\mleft(d\mright)}_{\min}\mleft(n\mright)\leq2^{f_{d}\mleft(\beta\mright)n+o\mleft(n\mright)}.
\]

\subsection{Lower bounding $\rho^{\mleft(d\mright)}_{\min}\mleft(n\mright)$ }

We will use the following partition of $X_n$ in order
to lower bound $\rho\mleft(\Div\mleft(\mu\mright)\mright)$ for some non-constant
$d$-adic distribution $\mu$: 
\begin{lemma} \label{lem:partition} 
Let $\mu$ be a non-constant $d$-adic distribution
over $X_n$. There exists a partition of $X_n$
of the form 
\[
X_n=\bigcup_{i=1}^{\gamma}\mleft(D_{i}\cup E_{i}\mright)
\]
 such that: 
\begin{enumerate}
\item $D_{i}$ consists of elements with equal probabilities $p_{i}$.
\item $\mleft|D_{i}\mright|=d c_i-r_i$ for some natural
$c_{i}$ and $0\leq r_i<d$, and $\mu\mleft(E_{i}\mright)=r_i p_{i}$.
\item $\gamma=O\mleft(\log n\mright)$.
\end{enumerate}
\end{lemma}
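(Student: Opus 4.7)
The plan is to construct the partition iteratively. I would maintain a remaining set $R_i \subseteq X_n$ with $R_1 = X_n$. At iteration $i$, let $p_i = d^{-l_i}$ denote the largest probability appearing in $R_i$, and take $D_i$ to consist of all elements of $R_i$ at probability $p_i$. Setting $m_i = |D_i|$, $c_i = \lceil m_i/d \rceil$, and $r_i = dc_i - m_i \in \{0,1,\dots,d-1\}$ automatically yields condition~(2). If $r_i > 0$, I would apply Lemma~\ref{lem:useful} to the elements of $R_i \setminus D_i$, sorted by decreasing probability, to extract a subset $E_i$ with $\mu(E_i) = r_i p_i$; crucially, I would choose $E_i$ greedily to include the smallest-probability elements first, so that each $E_i$ absorbs as much of the tail of $R_i$ as possible. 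If $r_i = 0$, then $E_i = \emptyset$. Finally, set $R_{i+1} = R_i \setminus (D_i \cup E_i)$ and repeat until $R_i$ is empty.

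The first step is to verify feasibility of $E_i$, namely $\mu(R_i \setminus D_i) \ge r_i p_i$. I would prove this inductively, using the fact that $d$-adicity of $\mu$ forces $\mu(R_i)$ to be an integer multiple of $p_i$, together with the bound $m_i + r_i = dc_i \le d^{l_i}$ (which follows from $m_i \le d^{l_i}$, with strict inequality coming from the non-constancy of $\mu$ propagated through the iterations). This yields $dc_i p_i \le 1$, and a careful accounting of mass removed in previous iterations then shows that $\mu(R_i) \ge dc_i p_i$ at each step, so Lemma~\ref{lem:useful} indeed produces the required $E_i$.

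The main obstacle is the bound $\gamma = O(\log n)$. A naive count bounds $\gamma$ by the number of distinct probability levels appearing in $\mu$, which in the worst case can be as large as $\Omega(n)$. The improvement rests on the greedy absorption of deep elements: I would introduce a potential function for $R_i$---a natural candidate being the number of distinct probability levels present, or $\log_d(\max_{x\in R_i}\mu(x)/\min_{x\in R_i}\mu(x))$---and argue that greedy choice of $E_i$ forces this potential to decrease by a constant factor at each iteration, yielding $\gamma = O(\log n)$. The hard part is controlling the potential in the worst case: when $r_i$ is small, $E_i$ must consume relatively little mass, which can limit how much of the tail is cleared. Resolving this requires a careful case analysis, in particular treating situations where few elements at intermediate depths carry most of the remaining mass; I expect the $d$-adic constraint $\mu(R_i) \in p_i \cdot \mathbb{N}$ and the flexibility of Lemma~\ref{lem:useful} (which lets us partition any decreasing list of probabilities into intervals of prescribed $d$-adic total mass) to be the crucial tools that make the geometric decrease go through.
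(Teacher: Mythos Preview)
Your construction is the same as the paper's: take $D_i$ to be all remaining elements of the current maximum probability $p_i$, and take $E_i$ from the bottom with $\mu(E_i)=r_ip_i$. The feasibility argument is also close, though note that you actually need the stronger invariant $\mu(R_i)\in d p_i\cdot\mathbb{N}$ (not just $p_i\cdot\mathbb{N}$); this is what guarantees $\mu(R_i)\ge dc_ip_i$, and it is preserved inductively because each $D_i\cup E_i$ removes mass exactly $dc_ip_i$.

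The real gap is in the $\gamma=O(\log n)$ argument. Your proposed potential (number of levels, or $\log_d(\max/\min)$) does \emph{not} decrease geometrically per iteration, because $r_i$ can equal $0$ for $\Omega(n)$ consecutive iterations, in which case $E_i=\emptyset$ and the potential drops by only $1$. Concretely, for $d=2$ take two elements at each of the levels $2,3,\dots,k-1$ and four elements at level $k$; then $n=2k$, every $r_i=0$, every $E_i=\emptyset$, and $\gamma=k-1=\Omega(n)$. No case analysis will rescue the geometric-decrease claim here, since the algorithm as stated simply peels off one level at a time.

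The paper's fix is a one-line trick that you are missing: at iteration $1$, if $|D_1|$ happens to be a multiple of $d$, artificially withhold one element from $D_1$ so that $r_1\ge 1$. Then $\mu(E_1)\ge p_1$, and since $E_1$ is taken greedily from the bottom, the smallest probability surviving into $R_2$ is at least $p_1/n$ (because $|E_1|\le n$ and $\mu(E_1)\ge p_1$ forces the element just above $E_1$ to have probability at least $p_1/n$). Hence $R_2$ contains at most $\log_d n+1$ distinct probability levels, and from then on each iteration removes at least one level via $D_i$, giving $\gamma=O(\log n)$ directly. There is no need for a potential argument at all; the whole bound comes from analysing the first iteration.
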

\begin{proof}
We assume w.l.o.g that the elements are sorted $\mu_1\geq\mu_2\geq\dots \geq\mu_n$. We construct the sets $D_{i},E_{i}$ in iterations on $i$ from $1$
to $\gamma$. In each iteration $i$, assume we have ordered probabilities
$\mu_{\alpha_{i}}\geq\mu_{2}\geq\dots\geq\mu_{N_{i}}$ of the available
elements which were not chosen in previous iterations to $D_{i},E_{i}$
(initially, $\alpha_{1}=1,N_{1}=n$). The elements chosen for $D_{i}$
are always an interval which begins in the leftmost index $\alpha_{i}$
and up to some index $\beta_{i}$. The elements in $E_{i}$ (if it
is not empty) are always an interval which begins in some index $M_{i}>\beta_{i}$
and ends at $N_{i}$. The rest of the elements are available for the
next iteration, until no elements are available and the partition
is complete. The partition must be completed since $D_{i}\neq\emptyset$
for any $i$. Now let us describe an iteration $i$ in detail. Let
$\beta_{i}$ be the last index with $\mu_{\beta_{i}}=\mu_{\alpha_{i}}$
(that is, $\mu_{\beta_{i}+1}<\mu_{\alpha_{i}}$ or $\beta_{i}=N_{i}$).
Let $D_{i}=\mleft\{ x_{\alpha_{i}},\dots,x_{\beta_{i}}\mright\} $. Denote $\mleft|D_{i}\mright|=d c_i-r_i$ where $c_i,r_i\in \mathbb{N}$ and $0\leq r_i<d$. Let $M_{i}>\beta_i$ be an index such that $\sum_{j=M_{i}}^{N_{i}}\mu_{j}=r_i p_i$ if $r_i>0$, and $M_i=\infty$ otherwise. Define $E_{i}=\mleft\{ x_{M_{i}},\dots,x_{N_{i}}\mright\} $ (if $M_i=\infty$, then $E_i=\emptyset$).
We show inductively that for any $i$, $\sum_{j=\alpha_{i}}^{N_{i}}\mu_{j}$ is a multiple
of $d\cdot\mu_{\alpha_{i}}$, and $M_i$ exists. For the base case $\alpha_{1}=1$
and $N_{1}=n$, note that $\sum_{j=1}^{n}\mu_{i}=1$ which is a multiple
of $d\cdot\mu_{1}$ since $\mu$ is non-constant. The existence of the index $M_1$ now follows from Lemma~\ref{lem:useful}: Suppose that $\sum_{j=\alpha_{1}}^{N_{1}}\mu_{j}=k\cdot d p_1$, so by Lemma~\ref{lem:useful} we can partition $\mleft\{x_{\alpha_1},\dots,x_{N_1}\mright\}$ to $k$ intervals, each of probability $p_1$. So, $M_1$ is simply the first index of the interval composed from the concatenation of the last $r_1$ intervals, if $r_1>0$. For the induction
step, assume that for iteration $i-1$, $\sum_{j=\alpha_{i-1}}^{N_{i-1}}\mu_{j}$
is a multiple of $d\cdot\mu_{\alpha_{i-1}}$ and that $M_{i-1}$ exists. By assumption, $\sum_{j=\alpha_{i-1}}^{\beta_{i-1}}\mu_{j}+\sum_{j=M_{i-1}}^{N_{i-1}}\mu_{j}=d\cdot\mu_{\alpha_{i-1}}\cdot c_{i-1}$
for some integer $c_{i-1}$. When continuing to iteration $i$, we
are removing $D_{i-1}\cup E_{i-1}$ from the available elements, and
recall that $\sum_{j=\alpha_{i-1}}^{N_{i-1}}\mu_{j}$ is also a multiple
of $d\cdot\mu_{\alpha_{i-1}}$ by assumption, and thus we still have
a multiple of $d\cdot\mu_{\alpha_{i-1}}$ in the available elements
of iteration $i$ (that is, after removing $D_{i-1}\cup E_{i-1}$).
Since $\mu_{\alpha_{i-1}}$ is a multiple of $\mu_{\alpha_{i}}$,
we also have a multiple of $d\cdot\mu_{\alpha_{i}}$. The existence of the index $M_i$ now follows from Lemma~\ref{lem:useful} similarly as in the base case. \par
It remains to show that $\gamma=O\mleft(\log n\mright)$. Let us consider
the first iteration. If the case is that $\mleft|D_{1}\mright|$ is
a multiple of $d\cdot\mu_{1}$, we change the partition a bit, and
leave the last element of $D_{1}$ out, and therefore use a non-empty
$E_{1}$. Now, it must hold that $\mu\mleft(E_{1}\mright)\geq\mu_{1}$.
Since the probabilities are ordered $\mu_{1}\geq\dots\geq\mu_{n}$
we have 
\[
n\cdot\mu_{M_{1}-1}\geq n\cdot\mu_{M_{1}}\geq\mu\mleft(E_{1}\mright)\geq\mu_{1},
\]
 that is, $\mu_{M_{1}-1}\geq\mu_{1}/n$. Since the probabilities are
$d$-adic, there are at most $\log n+1$ different probabilities in
$\mu_{2},\dots,\mu_{M_{1}-1}$: 
\[
\mu_{1}/d^{0},\mu_{1}/d^{1},\mu_{1}/d^{2},\dots,\mu_{1}/d^{\log n}
\]
and therefore $\gamma=O\mleft(\log n\mright)$. 
\end{proof}

Now we can prove the main lemma:
\begin{lemma}
\label{lem:dadic_low_bound}If $d=o\mleft(n/\log^{2}n\mright)$, then for
every non-constant $d$-adic distribution $\mu$ there is $0<\beta<1$
such that 
\[
\rho\mleft(\Div\mleft(\mu\mright)\mright)\geq2^{f_{d}\mleft(\beta\mright)n-o\mleft(n\mright)}.
\]

\end{lemma}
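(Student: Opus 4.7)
The plan is to exploit the structural partition from Lemma~\ref{lem:partition} and count divider partitions generated by permuting within the dominant equal-probability block. After applying Lemma~\ref{lem:partition} to obtain $X_n = \bigcup_{i=1}^\gamma (D_i \cup E_i)$ with $\gamma = O(\log n)$, I would let $i^* = \arg\max_i |D_i|$ and define $\beta := c_{i^*}/n$, so that $|D_{i^*}| = d\beta n - r_{i^*}$ with $r_{i^*} < d$. The hypothesis $d = o(n/\log^2 n)$ together with $\gamma = O(\log n)$ ensures that the multinomial and entropy error terms appearing below are sub-exponential in $n$.

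Next, I would construct a reference divider partition $P^* = (P_1^*, \ldots, P_d^*)$ of $\mu$ by iteratively applying Lemma~\ref{lem:useful}, engineered so that (i) $D_{i^*}$ is split across the $d$ parts with multiplicities $(c_{i^*}, \ldots, c_{i^*}, c_{i^*}-r_{i^*})$, and (ii) the non-$D_{i^*}$ elements are concentrated in a single designated heavy part $P_d^*$ as much as the weight constraint $\mu(P_j^*) = 1/d$ allows. The resulting type $\overline{k}^* = (|P_1^*|, \ldots, |P_d^*|)$ will be close to $(\beta n, \ldots, \beta n, (1-(d-1)\beta)n)$, mirroring the upper-bound construction of Lemma~\ref{lem:dadic_upper_bound}.

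Any permutation of $D_{i^*}$'s elements within $P^*$ that preserves the multiplicities of step~(i) yields another divider partition of the same type $\overline{k}^*$, so
\[
|\Div(\mu)_{\overline{k}^*}| \geq \binom{|D_{i^*}|}{c_{i^*}, \ldots, c_{i^*}, c_{i^*} - r_{i^*}} \geq 2^{d\beta n \log_2 d - o(n)},
\]
by \eqref{eq:multi_bounds} and the fact that the multinomial type inside $D_{i^*}$ is within $o(1)$ of uniform since $r_{i^*}/|D_{i^*}| = o(1)$. Meanwhile the normalization satisfies $\binom{n}{\overline{k}^*} \leq 2^{n H(\pi^*) + o(n)}$, where $H(\pi^*)$ approximates the entropy of $(\beta, \ldots, \beta, 1-(d-1)\beta)$, namely $(1-(d-1)\beta)\log_2 \tfrac{1}{1-(d-1)\beta} + (d-1)\beta \log_2 \tfrac{1}{\beta}$. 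Dividing and using the definition of $f_d(\beta)$ gives $\rho_{\overline{k}^*}(\Div(\mu)) \geq 2^{f_d(\beta) n - o(n)}$, proving the lemma.

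The main obstacle is property~(ii) in the construction of $P^*$: when $D_{i^*}$ carries essentially all of $\mu$'s mass (the prototypical ``single block plus tail'' scenario of Lemma~\ref{lem:dadic_upper_bound}), the target type is immediate, but when $D_{i^*}$ has moderate total weight, the weight-balance constraint forces some non-$D_{i^*}$ elements to spill into the light parts, inflating $H(\pi^*)$ above the target. Handling this requires an iterative use of Lemma~\ref{lem:useful} on the remaining blocks $D_j, E_j$ ($j \neq i^*$), combined with the observation that their $O(\log n)$ probability levels and $r_j < d$ residuals contribute only $o(n)$ to the type, so that $H(\pi^*)$ deviates from the target by at most $o(1)$. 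The condition $d = o(n/\log^2 n)$ is precisely what forces the total spillover and residual errors, aggregated over $O(\log n)$ levels, to remain sub-exponential in the exponent of $\rho_{\overline{k}^*}$.
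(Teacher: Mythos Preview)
Your approach has a genuine gap at exactly the point you flag as ``the main obstacle,'' and the resolution you sketch does not work. The claim that the non-$D_{i^*}$ contribution to the type is $o(n)$ is false in general: the blocks $D_j$ for $j\neq i^*$ can each contain $\Theta(n)$ elements. For a concrete example, take a $d$-adic $\mu$ with two probability levels, each occupied by $\Theta(n)$ elements and each carrying $\Theta(1)$ total mass. Then $\gamma=2$, and whichever block you call $D_{i^*}$, the weight constraint $\mu(P_j^*)=1/d$ forces $\Theta(n)$ elements of the \emph{other} block into every light part $P_j^*$. Hence $|P_j^*|=\beta' n$ with $\beta'$ bounded away from your $\beta=c_{i^*}/n$; the entropy $H(\pi^*)$ does not approach your target. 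Worse, your numerator (permuting only $D_{i^*}$) is $2^{d\beta n\log_2 d - o(n)}$ while the denominator is $2^{nH(\pi')}$ with $\pi'$ governed by $\beta'>\beta$, so your bound is $2^{[d\beta\log_2 d - H(\pi')]n}$, which is strictly smaller than $2^{f_d(\beta')n}$ and, when there are many comparable blocks, can fall below $2^{f_d(\tilde\beta)n}$ for \emph{every} $\tilde\beta\in(0,1)$.

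The paper's proof avoids this mismatch by permuting inside \emph{all} blocks $D_1,\dots,D_\gamma$ simultaneously and setting $\beta=\frac{1}{n}\sum_{i=1}^\gamma c_i$. Each light part $S_j$ ($j<d$) receives exactly $c_i$ elements from each $D_i$ and nothing from any $E_i$, so $k_j=\sum_i c_i=\beta n$ on the nose. The numerator then becomes $\prod_{i=1}^\gamma \binom{dc_i-r_i}{c_i,\dots,c_i,c_i-r_i}\ge 2^{d(\sum_i c_i)\log_2 d - o(n)}$ (the $o(n)$ absorbing the $r_i<d$ defects and the $\gamma=O(\log n)$ polynomial factors, using $d=o(n/\log^2 n)$), and the denominator is $\binom{n}{\beta n,\dots,\beta n,(1-(d-1)\beta)n}$, so numerator and denominator are governed by the \emph{same} $\beta$ and the ratio is exactly $2^{f_d(\beta)n-o(n)}$. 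In short: do not single out one block; sum over all of them.
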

\begin{proof}
We will use a partition of $X_n$ of the form 
\[
X_n=\bigcup_{i=1}^{\gamma}\mleft(D_{i}\cup E_{i}\mright)
\]
 as constructed in Lemma~\ref{lem:partition}. It is implied from Lemma~\ref{lem:partition} that $\mu\mleft(D_{i}\cup E_{i}\mright)=d\cdot c_{i}\cdot p_{i}$
for some $c_{i}\in\mathbb{N}$. If $E_i\neq \emptyset$, we consider a partition of $E_{i}$ into
$r_{i}$ subsets, each with total probability $p_{i}$. Indeed, such a
partition exists by Lemma~\ref{lem:useful}. Denote by $E'_{i}$ the
set of those subsets, where each subset is contracted into a single
element. So, $E'_{i}$ is a set of $r_{i}$ elements with probability
$p_{i}$ each, and thus in $D_{i}\cup E'_{i}$ we have $d\cdot c_{i}$
elements, each with probability $p_{i}$. Denote $X_n'=\bigcup_{i=1}^{\gamma}\mleft(D_{i}\cup E'_{i}\mright)$. 

Let us define a form of partition of $X_n'$ into $d$
subsets with equal total probabilities: for any $i\in\mleft[\gamma\mright]$,
let the sets $S_{i}\mleft(1\mright),S_{i}\mleft(2\mright),\dots,S_{i}\mleft(d\mright)$
be $d$ distinct subsets of $D_{i}\cup E'_{i}$ of size $c_{i}$ each.
For any $j\in\mleft[d\mright]$, define $S_{j}$ by 
\[
S_{j}:=\bigcup_{i=1}^{\gamma}S_{i}\mleft(j\mright).
\]
Indeed $S=\mleft(S_{j}\mright)_{j\in\mleft[d\mright]}$ exists in $\Div\mleft(\mu\mright)$
after ``unpacking'' all elements in $\bigcup_{i\in\mleft[\gamma\mright]}E'_{i}$
back to their original state: for any $j\in\mleft[d\mright]$ we have
\[
\mu\mleft(S_{j}\mright)=\sum_{i=1}^{\gamma}\mu\mleft(S_{i}\mleft(j\mright)\mright)=\sum_{i=1}^{\gamma}c_{i}p_{i}=\frac{1}{d}\sum_{i=1}^{\gamma}\mu\mleft(D_{i}\cup E_{i}\mright)=1/d.
\]
 So, any partition $S=\mleft(S_{j}\mright)_{j\in\mleft[d\mright]}$ defined
in this fashion exists in $\Div\mleft(\mu\mright)$. Having that, consider
the following type $\overline{k}$ of partitions which includes at
least some of those partitions: let $k_{j}=\sum_{i=1}^{\gamma}c_{i}$
for any $1\leq j\leq d-1$ and $k_{d}=n-\mleft(d-1\mright)k_{1}$ ($k_{d}$
can be thought as the size of the set that ``contains the tail'',
as discussed in the upper bound). $\Div\mleft(\mu\mright)_{\overline{k}}$
contains at least the partitions in which $S_{1},\dots,S_{d-1}$ contain
only elements from $\bigcup_{i=1}^{\gamma}D_{i}$. So, for any $i\in \mleft[\gamma\mright]$ we choose $c_i$ elements from $D_i$ to $S_j$, for $1\leq j\leq d-1$. Moreover, we put all the elements of $\bigcup_{i=1}^{\gamma}E_{i}$ in $S_d$. Thus:

\[
\mleft|\Div\mleft(\mu\mright)_{\overline{k}}\mright| \geq \prod_{i=1}^{\gamma}\frac{\mleft(d\cdot c_{i} -r_i\mright)!}{\mleft(c_{i}!\mright)^{d-1}\mleft(c_i-r_i\mright)!}.
\]
Hence:

\begin{align*}
\mleft|\Div\mleft(\mu\mright)_{\overline{k}}\mright| &\geq  \prod_{i=1}^{\gamma}\frac{\mleft(d\cdot c_{i} -d\mright)!}{\mleft(c_{i}!\mright)^{d}}\\
&\geq \prod_{i=1}^{\gamma} \frac{1}{\mleft(d c_i\mright)^d} \frac{\mleft(d\cdot c_{i}\mright)!}{\mleft(c_{i}!\mright)^{d}}\\
&\geq \frac{1}{n^{d\gamma}}\prod_{i=1}^{\gamma} \frac{\mleft(d\cdot c_{i}\mright)!}{\mleft(c_{i}!\mright)^{d}}\\ 
&\underset{\text{\eqref{eq:multi_bounds}}}{\geq} \frac{1}{O\mleft(n^{2d\gamma}\mright)}\prod_{i=1}^{\gamma}2^{c_{i}\cdot d\log d}\\
&\geq \frac{1}{2^{2\log_{2}n\cdot o\mleft(\frac{n}{\log^{2}n}\mright)\cdot O\mleft(\log n\mright)}}\prod_{i=1}^{\gamma}2^{c_{i}\cdot d\log d}\geq2^{d\log_{2}d\sum_{i=1}^{\gamma}c_{i}-o\mleft(n\mright)}.
\end{align*}

Now, denote $\beta=\frac{1}{n}\sum_{i=1}^{\gamma}c_{i}$ and note
that 
\[
\binom{n}{k_1, k_2,\dots, k_d}\leq2^{\mleft[\mleft(d-1\mright)\beta\log_{2}\frac{1}{\beta}+\mleft(1-\mleft(d-1\mright)\beta\mright)\log_{2}\frac{1}{\mleft(1-\mleft(d-1\mright)\beta\mright)}\mright]n}
\]
(similarly to the discussion in the upper bound section). Therefore
overall 
\[
\rho_{\overline{k}}\mleft(\Div\mleft(\mu\mright)\mright)\geq2^{\mleft[\beta d\log_{2}d-\mleft(\mleft(d-1\mright)\beta\log_{2}\frac{1}{\beta}+\mleft(1-\mleft(d-1\mright)\beta\mright)\log_{2}\frac{1}{\mleft(1-\mleft(d-1\mright)\beta\mright)}\mright)\mright]n-o\mleft(n\mright)}=2^{f_{d}\mleft(\beta\mright)n-o\mleft(n\mright)},
\]
and since obviously $\rho\mleft(\Div\mleft(\mu\mright)\mright)\geq\rho_{\overline{k}}\mleft(\Div\mleft(\mu\mright)\mright)$,
we get the desired result.
\end{proof}

\subsection{Estimating $q^{\mleft(d\mright)}\mleft(n\mright)$}

Now we can deduce some explicit bounds on $q^{\mleft(d\mright)}\mleft(n\mright)$. Those bounds allow us to calculate $q^{\mleft(d\mright)}\mleft(n\mright)$
up to sub-exponential factors, for infinitely many $n$ values. The
upper bound on $q^{\mleft(d\mright)}\mleft(n\mright)$ will imply that even
though the trivial upper bound on the cardinality of $\mathcal{Q}$
which allows constructing optimal strategies for all distributions
is $d^{n}/d!$, the true minimal cardinality is much smaller, and
in particular it is less than $2^{n+o\mleft(n\mright)}$. 
\begin{theorem}
For any $n$ and any $d=o\mleft(n/\log^{2}n\mright)$:
\[
q^{\mleft(d\mright)}\mleft(n\mright)\leq\mleft(1+\frac{d-1}{d^{\frac{d}{d-1}}}\mright)^{n+o\mleft(n\mright)}=\mleft(2-\Theta\mleft(\frac{\log d}{d}\mright)\mright)^{n+o\mleft(n\mright)}.
\]
 Moreover, for any fixed $d$, the following holds for infinitely
many $n$ values: 
\[
q^{\mleft(d\mright)}\mleft(n\mright)=\mleft(1+\frac{d-1}{d^{\frac{d}{d-1}}}\mright)^{n\pm o\mleft(n\mright)}=\mleft(2-\Theta\mleft(\frac{\log d}{d}\mright)\mright)^{n\pm o\mleft(n\mright)}.
\]
\end{theorem}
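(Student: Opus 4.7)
The plan is to combine the reduction $q^{(d)}(n) = 2^{\pm o(n)}/\rho^{(d)}_{\min}(n)$ from Theorem~\ref{theo:reduction} (applicable since $d = o(n/\log^2 n)$) with the matching upper and lower bounds on $\rho^{(d)}_{\min}(n)$ from Lemmas~\ref{lem:dadic_upper_bound} and~\ref{lem:dadic_low_bound}. Both bounds are controlled by the same function $f_d$, so after invoking the reduction the entire theorem collapses to a one-variable optimization: locate $\beta^* = \arg\min_\beta f_d(\beta)$, compute $f_d(\beta^*)$, and check that $\beta^*$ lies in the admissible range of Lemma~\ref{lem:dadic_upper_bound}.

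For the general upper bound (Part 1), I will apply Lemma~\ref{lem:dadic_low_bound} to deduce
\[
\rho^{(d)}_{\min}(n) \geq 2^{\min_{\beta \in (0,1/(d-1))} f_d(\beta) \cdot n - o(n)}.
\]
The minimization is routine: differentiation gives
\[
f_d'(\beta) = d\log_2 d - (d-1)\log_2 \tfrac{1-(d-1)\beta}{\beta},
\]
which is strictly increasing on $(0, 1/(d-1))$ with range $(-\infty,+\infty)$, so there is a unique interior minimum at $\beta^* = 1/(d^{d/(d-1)} + d - 1)$. Substituting back and using the identity $(1-(d-1)\beta^*)/\beta^* = d^{d/(d-1)}$ together with $(1-(d-1)\beta^*) + (d-1)\beta^* = 1$, the three summands of $f_d(\beta^*)$ telescope cleanly into
\[
f_d(\beta^*) = \tfrac{d \log_2 d}{d-1} - \log_2\!\left(d^{d/(d-1)} + d - 1\right) = -\log_2\!\left(1 + \tfrac{d-1}{d^{d/(d-1)}}\right),
\]
and Theorem~\ref{theo:reduction} converts this into $q^{(d)}(n) \leq (1 + (d-1)/d^{d/(d-1)})^{n + o(n)}$.

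For the matching lower bound (Part 2), I will feed $\beta = \beta^*$ into Lemma~\ref{lem:dadic_upper_bound}, producing $\rho^{(d)}_{\min}(n) \leq 2^{f_d(\beta^*) n + o(n)}$ for every $n$ of the form $\lfloor d^a/(d\beta^*)\rfloor$ with $a \in \mathbb{N}$ — an infinite family. The admissibility condition $1/(d^2+1) \leq \beta^* \leq 1/d$ amounts to checking $1 \leq d^{d/(d-1)} \leq d^2 - d + 2$: the lower inequality is trivial, and the upper one holds with equality at $d = 2$ and strictly beyond, since $d^{d/(d-1)} = d + \Theta(\log d)$ while the right-hand side grows quadratically.

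The asymptotic form $2 - \Theta(\log d/d)$ follows from expanding $d^{1/(d-1)} = e^{(\ln d)/(d-1)} = 1 + \Theta(\log d/d)$, which yields $(d-1)/d^{d/(d-1)} = (1 - 1/d)\bigl(1 - \Theta(\log d/d)\bigr) = 1 - \Theta(\log d/d)$. The only step that requires actual work is the algebraic simplification of $f_d(\beta^*)$ — the place where the characteristic constant $1 + (d-1)/d^{d/(d-1)}$ emerges; every other ingredient is either a direct appeal to a prior lemma or a routine estimate.
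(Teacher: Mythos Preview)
Your proposal is correct and follows essentially the same route as the paper: invoke the reduction from Theorem~\ref{theo:reduction}, bound $\rho^{(d)}_{\min}(n)$ from below via Lemma~\ref{lem:dadic_low_bound} and from above via Lemma~\ref{lem:dadic_upper_bound}, then minimize $f_d(\beta)$ by solving $f_d'(\beta)=0$ to obtain $\beta^* = 1/(d^{d/(d-1)}+d-1)$ and $f_d(\beta^*) = -\log_2\bigl(1+(d-1)/d^{d/(d-1)}\bigr)$. Your treatment is in fact slightly more careful than the paper's in two places: you restrict the optimization domain to $(0,1/(d-1))$ where $f_d$ is actually defined, and you explicitly verify the admissibility hypothesis $1/(d^2+1)\le\beta^*\le 1/d$ of Lemma~\ref{lem:dadic_upper_bound} (the paper asserts it without justification).
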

\begin{proof}
Since Lemma \ref{lem:dadic_low_bound} holds for any $d$-adic distribution
$\mu$ where $d=o\mleft(n/\log^{2}n\mright)$, we can deduce the lower
bound
\begin{gather*}
\rho^{\mleft(d\mright)}_{\min}\mleft(n\mright)\geq\exp_{2}\mleft(\mleft(\min_{0<\beta<1}f_{d}\mleft(\beta\mright)\mright)n-o\mleft(n\mright)\mright).
\end{gather*}
 Calculation shows that 
\[
f_{d}'\mleft(\beta\mright)=\mleft(d-1\mright)\log_{2}\frac{\beta}{\mleft(1-\mleft(d-1\mright)\beta\mright)}+d\log_{2}d
\]
 and the minimum is attained at
\[
\beta=\frac{1}{d^{\frac{d}{d-1}}-1+d}.
\]
We are interested in what happens when $\beta$ minimizes $f_{d}$.
So, we want to estimate the following function of $d$: 
\begin{align*}
f\mleft(d\mright) & =f_{d}\mleft(\frac{1}{d^{\frac{d}{d-1}}-1+d}\mright).
\end{align*}
 After some algebraic simplifications, we get: 
\[
f\mleft(d\mright)=\log_{2}\mleft(\frac{d^{\frac{d}{d-1}}}{d^{\frac{d}{d-1}}+d-1}\mright).
\]
Since $d=o\mleft(n/\log n \log \log n\mright)$, the reduction in Theorem \ref{theo:reduction}
allows us to calculate $\exp_{2}\mleft(-f\mleft(d\mright)\mright)$ in
order to get an estimate of $q^{\mleft(d\mright)}\mleft(n\mright)$: we
have 
\[
\exp_{2}\mleft(-f\mleft(d\mright)\mright)=\frac{d^{\frac{d}{d-1}}+d-1}{d^{\frac{d}{d-1}}}=1+\frac{d-1}{d^{\frac{d}{d-1}}},
\]
 which implies 
\[
q^{\mleft(d\mright)}\mleft(n\mright)\leq\mleft(1+\frac{d-1}{d^{\frac{d}{d-1}}}\mright)^{n+o\mleft(n\mright)}.
\]
Moreover, it holds that: 
\[
\frac{d-1}{d^{\frac{d}{d-1}}}=\Theta\mleft(d^{1-\frac{d}{d-1}}\mright)=\Theta\mleft(d^{-\frac{1}{d-1}}\mright).
\]
Calculating the Puiseux expansion of $d^{-\frac{1}{d-1}}$ shows that
$d^{-\frac{1}{d-1}}=1-\Theta\mleft(\frac{\log d}{d}\mright)$ and hence:
\[
q^{\mleft(d\mright)}\mleft(n\mright)\leq\exp_{2}\mleft(-f\mleft(d\mright)n +o\mleft(n\mright)\mright)=\mleft(2-\Theta\mleft(\frac{\log d}{d}\mright)\mright)^{n+o\mleft(n\mright)}.
\]
 For the second part of the theorem, assume that $d$ is fixed, let
$\beta=\frac{1}{d^{\frac{d}{d-1}}-1+d}$ and suppose that $n=\mleft\lfloor \frac{d^{a}}{d\cdot\beta}\mright\rfloor $
where $a\in\mathbb{N}$. Note that $\frac{1}{d^2+1}\leq \frac{1}{d^{\frac{d}{d-1}}-1+d}\leq 1/d$, so we can use Lemma~\ref{lem:dadic_upper_bound} and deduce
\[
\rho^{\mleft(d\mright)}_{\min}\mleft(n\mright)\leq\exp_{2}\mleft(f_{d}\mleft(\beta\mright)n+o\mleft(n\mright)\mright)=\exp_{2}\mleft(f\mleft(d\mright)n+o\mleft(n\mright)\mright)=\mleft(\frac{d^{\frac{d}{d-1}}}{d^{\frac{d}{d-1}}+d-1}\mright)^{n+o\mleft(n\mright)},
\]
 and hence 
\[
q^{\mleft(d\mright)}\mleft(n\mright)\geq\mleft(1+\frac{d-1}{d^{\frac{d}{d-1}}}\mright)^{n-o\mleft(n\mright)}=\mleft(2-\Theta\mleft(\frac{\log d}{d}\mright)\mright)^{n-o\mleft(n\mright)}.\qedhere
\]
\end{proof}

\section{Open questions} \label{sec:Open-questions}

Our work suggests a few open questions which we think are interesting
enough for future research.
\begin{question}
Is $G$ continuous?
\end{question}

\begin{description}
\item [{Notes}] It seems that techniques similar to those used in Section
\ref{sec:Approximating} can show continuity-related results, but
additional work seems necessary in order to determine whether $G$
is continuous. First, it seems not hard to show that $G$ is upper
semi-continuous. Moreover, denote by $G_{b}$ the function $G$ restricted
to some fixed $b$, such that $G\mleft(\beta\mright)=\inf_{b\in\mathbb{N}}G_{b}\mleft(\beta\mright)$.
It also seems not hard to show that $G_{b}$ is continuous. We should
use a fixed $b$ since otherwise Lemma \ref{lem:s_index} is not helpful.
It is not clear, however, whether $G$ is continuous as well. If we
could show that $G$ is lower semi-continuous, or that $b$ can be
chosen over some compact subset of $\mathbb{N}$ instead of the entirety
of $\mathbb{N}$, then continuity of $G$ would follow. 
\end{description}
\begin{question}
Is the outer infimum in $G$ attained?
\end{question}

\begin{description}
\item [{Notes}] We have shown that the inner supremum in the definition
of $G$ is attained and thus can be written as maximum. It is not
clear, however, whether the outer infimum is attained as well. Unfortunately,
even if we assume that $b$ is fixed, we still can not apply a similar
argument to the one used in the supremum case: Say we have a fixed
$b\in\mathbb{N}$ and a sequence of sequences $\mleft(\ser c^{j}\mright)_{j\in\mathbb{N}}\in\mathcal{C}$
converging to the infimum, and converging pointwise to a sequence
$\ser c$. It does not guarantee (not immediately, at least) that
$\mleft(\ser c^{j}\mright)_{j\in\mathbb{N}}$ converges to $\ser c$
in $\ell_{1}$-norm, and that property is crucial for $\ser c$ being
a minimizing sequence for $\max_{\ser{\alpha}\in\mathcal{A}}P\mleft(\ser c,\ser{\alpha}\mright)$
across all sequences in $\mathcal{C}$. 
\end{description}
\begin{question}
Can we calculate $G\mleft(\beta\mright)$?
\end{question}

\begin{description}
\item [{Notes}] While our formula for $G$ implies $G\mleft(\beta\mright)\leq-0.305758$
for any $1\leq\beta<2$, it would be interesting to calculate $G\mleft(\beta\mright)$
in terms of $\beta$, similarly to the calculation suggested in \cite{DFGM2019}
for $\beta=1.25$, that is $G\mleft(1.25\mright)=-\log1.25$. This will
allow us to calculate $q\mleft(n\mright)$ for $n$ of
the form $n=\beta2^{k}$, up to sub-exponential factors. 
\end{description}

\begin{question}
Can we generalize the function $G\colon \mleft[1,2\mright)\rightarrow \mathbb{R}$ to a function $G^{\mleft(d\mright)}\colon \mleft[1,d\mright)\rightarrow \mathbb{R}$ such that $\rho^{\mleft(d\mright)}_{\min}\mleft(n\mright)=2^{G^{\mleft(d\mright)}\mleft(\beta\mright)n\pm o\mleft(n\mright)}$?
\end{question}

\bibliographystyle{alpha}
\bibliography{bib.bib}

\end{document}